\colorlet{darkblue}{blue!50!black}
\colorlet{darkgreen}{green!50!black}
\crefname{section}{Sect.}{Sections}
\Crefname{section}{Section}{Sections}
\newcommand{\Clustering}{\hyperref[thmClustering]{Clustering}}
\newcommand{\ClusteringLemma}{\hyperref[thmClustering]{Clustering Lemma}}
\newcommand{\Separation}{\hyperref[thmSeparation]{Separation}}
\newcommand{\SeparationLemma}{\hyperref[thmSeparation]{Separation Lemma}}
\newcommand{\AbstractionLemma}{\hyperref[thmAbstraction]{Abstraction Lemma}}
\newcommand{\SymmetryLemma}{\hyperref[thmSymmetry]{Symmetry Lemma}}
\theoremstyle{claimstyle}
\newtheorem{localclaim}{Claim}
\newcommand{\resetlocalclaim}{\setcounter{localclaim}{0}}
\newenvironment{Example}{\example}{\lipicsEnd\endexample}
\newcommand{\enumref}[1]{\textcolor{lipicsGray}{\sffamily\bfseries\upshape\mathversion{bold}(#1)}}
\newenvironment{itemise}{\smallskip\begin{itemize}\setlength{\itemsep}{0.25em}}{\end{itemize}\smallskip}
\newenvironment{bracketise}{\smallskip\begin{bracketenumerate}\setlength{\itemsep}{0.25em}}{\end{bracketenumerate}\smallskip}
\newenvironment{enumise}{\smallskip\begin{enumerate}\setlength{\itemsep}{0.25em}}{\end{enumerate}\smallskip}
\newenvironment{alphatise}{\smallskip\begin{alphaenumerate}\setlength{\itemsep}{0.25em}}{\end{alphaenumerate}\smallskip}
\newcommand{\ra}{\rightarrow}
\newcommand{\E}{\exists}
\newcommand{\A}{\forall}
\newcommand{\biglor}{\bigvee}
\newcommand{\bigland}{\bigwedge}
\renewcommand{\phi}{\varphi}
\renewcommand{\theta}{\vartheta}
\renewcommand{\emptyset}{\varnothing}
\renewcommand{\epsilon}{\varepsilon}
\renewcommand{\AA}{{\mathfrak A}}
\newcommand{\N}{{\mathbb N}}
\DeclareMathOperator{\arity}{arity}
\newcommand{\FO}{{\rm FO}}
\DeclareMathOperator{\Lit}{\mathrm{Lit}}
\newcommand*{\eps}{\epsilon}
\newcommand*{\dcup}{\mathbin{\dot{\cup}}}
\newcommand*{\from}{\colon}
\newcommand*{\tup}[1]{\bar{#1}}
\renewcommand{\bar}{\overline}
\newcommand{\ta}{\tup a}
\newcommand{\tb}{\tup b}
\newcommand{\tc}{\tup c}
\newcommand{\td}{\tup d}
\newcommand{\tx}{\tup x}
\newcommand{\ty}{\tup y}
\newcommand{\tz}{\tup z}
\newcommand{\tu}{\tup u}
\newcommand{\tv}{\tup v}
\newcommand{\Bool}{\mathbb{B}}
\renewcommand{\N}{{\mathbb N}}
\newcommand{\Trop}{\mathbb{T}}
\newcommand{\Vit}{\mathbb{V}}
\newcommand*{\Real}{\mathbb{R}}
\newcommand*{\ext}[1]{[\![ #1 ]\!]}
\newcommand*{\Ext}[1]{\big\llbracket #1 \big\rrbracket}
\newcommand*{\EXT}[1]{\Big\llbracket #1 \Big\rrbracket}
\newcommand{\meet}{\sqcap}
\newcommand{\join}{\sqcup}
\newcommand{\keq}[1][K]{\equiv_{#1}}
\newcommand{\Minmax}{\mathcal M}
\newcommand{\Lattice}{\mathcal L}
\newcommand{\meq}{\equiv_\Minmax}
\newcommand{\mle}{\le_\Minmax}
\newcommand{\mge}{\ge_\Minmax}
\newcommand{\meqtag}[1]{\,\overset{\mathclap{(#1)}}{\equiv}_\Minmax\,}
\newcommand{\ball}[2]{B_{#1}(#2)} 
\newcommand{\ballpi}[3][\pi]{B_{#2}^{#1}(#3)} 
\newcommand{\balltau}[3][\tau]{B_{#2}^{#1}(#3)}
\newcommand{\Qball}[4][Q]{#1 #2 {\in} \ball{#3}{#4}}
\newcommand{\Eball}[3]{\Qball[\E]{#1}{#2}{#3}} 
\newcommand{\Aball}[3]{\Qball[\A]{#1}{#2}{#3}} 
\newcommand{\Qballtau}[5][Q]{#1 #3 {\in} \balltau[#2]{#4}{#5}}
\newcommand{\Eballtau}[4]{\Qballtau[\E]{#1}{#2}{#3}{#4}}
\newcommand{\Aballtau}[4]{\Qballtau[\A]{#1}{#2}{#3}{#4}}
\newcommand{\Qnotball}[4][Q]{#1 #2 {\notin} \ball{#3}{#4}}
\newcommand{\Anotball}[3]{\Qnotball[\A]{#1}{#2}{#3}}
\newcommand{\Qsc}[3][Q]{{#1}^{#2\textrm{-sc}}(#3)}
\newcommand{\Esc}[2]{\Qsc[\E]{#1}{#2}} 
\newcommand{\Asc}[2]{\Qsc[\A]{#1}{#2}}
\newcommand{\Qdistinct}[2][Q]{{#1}^{\text{distinct}}(#2)}
\newcommand{\Edistinct}[1]{\Qdistinct[\E]{#1}} 
\newcommand{\Adistinct}[1]{\Qdistinct[\A]{#1}}
\newcommand{\loc}[1]{^{(#1)}} 
\newcommand{\Part}{\text{Part}} 
\newcommand{\calS}{{\mathcal S}} 
\newcommand{\dual}[1]{{#1}^*} 
\title{Locality Theorems in Semiring Semantics}
\author{Clotilde Bizi{\` e}re}{ENS Paris, France}{clotilde.biziere@ens.psl.eu}{}{}
\author{Erich Gr{\"a}del}{RWTH Aachen University, Germany}{graedel@logic.rwth-aachen.de}{https://orcid.org/0000-0002-8950-9991}{}
\author{Matthias Naaf}{RWTH Aachen University, Germany}{naaf@logic.rwth-aachen.de}{https://orcid.org/0000-0002-1099-5713}{}
\authorrunning{C. Bizi{\` e}re, E. Gr{\"a}del, and M. Naaf}
\keywords{Semiring semantics, Locality, First-order logic}
\begin{document}

\maketitle


\begin{abstract}
Semiring semantics of first-order logic generalises classical Boolean semantics by permitting truth values from a commutative semiring,
which can model information such as costs or access restrictions.
This raises the question to what extent classical model-theoretic properties still apply, and how this depends on the algebraic properties of the semiring.

In this paper, we study this question for the classical locality theorems due to Hanf and Gaifman.
We prove that Hanf's locality theorem generalises to all semirings with idempotent operations, but fails for many non-idempotent semirings.
We then consider Gaifman normal forms and show that for formulae with free variables, Gaifman's theorem does not generalise beyond the Boolean semiring.
Also for sentences, it fails in the natural semiring and the tropical semiring.
Our main result, however, is a constructive proof of the existence of Gaifman normal forms for min-max and lattice semirings.
The proof implies a stronger version of Gaifman's classical theorem in Boolean semantics:
every sentence has a Gaifman normal form which does not add negations.
\end{abstract}


\section{Introduction}

Originally motivated by \emph{provenance analysis in databases} (see e.g. \cite{GreenTan17,Glavic21} for surveys), 
semiring semantics is based on the idea to evaluate logical statements not just by \emph{true}
or \emph{false}, but by values in some commutative semiring $(K,+,\cdot,0,1)$.
In this context, the standard semantics appears as the special case 
when the Boolean semiring $\Bool = (\{\bot, \top\}, \lor, \land, \bot, \top)$ is used.
Valuations in other semirings provide additional information, beyond truth
or falsity: the tropical semiring $\Trop= (\Real_{+}^{\infty}, \min, +, \infty, 0)$
is used for \emph{cost analysis}, the natural semiring $\N=(\N,+,\cdot,0,1)$ for counting
evaluation strategies and proofs,
and the Viterbi-semiring $\Vit = ([0,1]_{\Real}, \max, \cdot, 0, 1)$ models \emph{confidence scores}.
Finite or infinite min-max semirings $(K, \max, \min, a, b)$
can model, for instance, different \emph{access levels}
to atomic data (see e.g.\ \cite{FosterGreTan08}); valuations of a first-order sentence $\psi$ in such \emph{security semirings} determine the 
required clearance level that is necessary to access enough information
to determine the truth of $\psi$.
Further, semirings of polynomials or formal power series  permit us to \emph{track} which
atomic facts are used (and how often) to establish the truth of a sentence in a given structure, and
this has applications for database repairs \cite{XuZhaAlaTan18}
and also for the strategy analysis of games \cite{GraedelTan20,GraedelLueNaa21}. 
Semiring semantics replaces structures
by \emph{$K$-interpretations}, which are functions $\pi\from \Lit_A(\tau)\ra K$, mapping fully instantiated
$\tau$-literals $\phi(\bar a)$ over a universe $A$ to values in a commutative semiring $K$. 
The value $0\in K$ is interpreted as \emph{false}, while all other values in $K$  are viewed as
nuances of \emph{true} or, perhaps more accurately, as \emph{true, with some additional information}.
In provenance analysis, this is sometimes referred to as \emph{annotated facts}.
The value $1\in K$ is used to represent \emph{untracked information}
and is used in particular to evaluate true equalities and inequalities.

The development of semiring semantics raises the question to what extent classical techniques and results of logic extend to semiring semantics, 
and how this depends on the algebraic properties of the underlying semirings.  
Previous investigations in this direction have studied, for instance, the relationship between elementary equivalence and isomorphism
for finite semiring interpretations and their definability up to isomorphism \cite{GraedelMrk21}, 
Ehrenfeucht-Fra\"{\i}ss{\'e} games \cite{Brinke23}, and 0-1 laws \cite{GraedelHelNaaWil22}. 

The purpose of this paper is to study \emph{locality} in semiring semantics.
Locality is a fundamental property of first-order logic in classical semantics
and an important limitation of its expressive power.  It
means that the truth of a first-order formula $\psi(\bar x)$ in a given structure only depends on a 
neighbourhood of bounded radius around $\bar x$, and on the existence of a bounded number of local substructures.
Consequently, first-order logic cannot express global properties such as connectivity or acyclicity of graphs.
On graphs there are natural and canonical notions of the distance between two points and of a neighbourhood of a given radius around a point. 
To define these notions for an arbitrary relational structure $\AA$ one  associates with it its \emph{Gaifman graph} $G(\AA)=(A,E)$
where two points $a\neq b$ are adjacent if, and only if, they
coexist in some atomic fact. There exist several notions of locality; the most common ones are 
\emph{Hanf locality} and \emph{Gaifman locality}, and the fundamental locality theorems for first-order logic are
\emph{Hanf's locality theorem} and \emph{Gaifman's normal form theorem}.
In a nutshell, Hanf's theorem gives a criterion for the $m$-equivalence 
(i.e. indistinguishability by sentences of quantifier rank up to $m$) of two structures  
based on the number of  local substructures of any given isomorphism type, while
Gaifman's theorem
states that every first-order formula is equivalent to a Boolean combination of local formulae and basic local sentences,
which has many model-theoretic and algorithmic consequences. 
We shall present precise statements of these results in \cref{secHanf} and \cref{secGaifmanDefinitions}.

Locality thus provides powerful techniques,
also for logics that go beyond first-order logic by counting properties, generalised quantifiers, or aggregate functions, \cite{ArenasBarLib08,HellaLibNur99,HellaLibNurWon01,KuskeSch18}. It has 
applications in different areas including  low-complexity model-checking algorithms \cite{GroheKreSie17,GroheWoe04}, 
approximation schemes for logically defined optimisation problems \cite{DawarGroKreSch06},
automata theory \cite{SchwentickBar98}, 
computational issues on database transactions \cite{BenediktGriLib98}, and most recently also in learning theory,
for the efficient learning of logical concepts
\cite{Bergerem19,BergeremSch21,BergeremGroRit22}.
This motivates the question, whether locality is also applicable in semiring semantics.
The relevant semiring interpretations in this context are 
\emph{model-defining}, which means that  for any pair of complementary literals $R\bar a, \neg R\bar a$
precisely one of the values $\pi(R\bar a)$, $\pi(\neg R\bar a)$ is 0,
and \emph{track only positive information} which means that
$\pi(\neg R\bar a)$ can only take the values 0 or 1.
Model defining interpretations $\pi$ define a unique structure $\AA_\pi$ and we thus obtain
a well-defined Gaifman graph $G(\pi) \coloneqq G(\AA_\pi)$, with the associated notions
of distance and neighbourhoods. The assumption that only positive information is tracked
is necessary to get meaningful locality properties (see \cref{sec:semiringsemantics}).

\medskip We clearly cannot generalise all known locality properties of
first-order logic to semiring semantics in arbitrary commutative semirings. On semirings whose operations 
are not idempotent,  we cannot expect a Gaifman normal form, since for computing the value of a quantified statement, 
we have to add or multiply values of subformulae for
\emph{all} elements of the structure, which gives an inherent source of non-locality.  As a consequence, some of the
locality results that we prove hold only under certain algebraic assumptions on the semiring, and further there
turns out to be a difference of the locality properties of sentences and those of formulae with free variables.
We shall establish the following results.
\begin{bracketenumerate}
\item First-order formulae are Hanf-local for all semirings.
\item Hanf's locality theorem generalises to all fully idempotent semirings (in which  both addition and multiplication are idempotent).
\item For formulae with free variables, Gaifman's normal form theorem does not generalise beyond the Boolean semiring.
\item For sentences, Gaifman's normal form theorem also fails in certain important semirings such as the natural semiring and the tropical semiring.
\item Over min-max semirings (and even lattice semirings), every first-order sentence has a Gaifman normal form.
\item In classical Boolean semantics, every sentence has a Gaifman normal form which does not introduce new negations.
\end{bracketenumerate}

The results \enumref{1}, \enumref{2} on Hanf locality (\cref{secHanf}) are proved by adaptations of the arguments for the Boolean case.
The results \enumref{3} and \enumref{4} are established in \cref{secCounterexamples} via specific examples of formulae that defeat locality, using simple algebraic arguments.
The most ambitious result and the core of our paper is \enumref{5}, a version of Gaifman's theorem for min-max semirings (\cref{secGaifmanProof}), which we later generalise to lattice semirings (\cref{secStrengthening}).
It requires a careful choice of the
right syntactical definitions for local sentences and, since the classical proofs in \cite{Gaifman82,EbbinghausFlu99} do not seem to generalise to semiring semantics, a new approach for the proof, based on quantifier elimination.
This new approach also leads to a stronger version of Gaifman's theorem in Boolean semantics \enumref{6}, which might be of independent interest.

\section{Semiring Semantics}\label{sec:semiringsemantics}

This section gives a brief overview on semiring semantics of first-order logic (see \cite{GraedelTan17} for more details) and the relevant algebraic properties of semirings.
We further define generalisations of the classical notions of equivalence, isomorphism and distance (via the Gaifman graph).

A commutative%
\footnote{In the following, \emph{semiring} always refers to a commutative semiring.}
semiring is an algebraic structure $(K,+,\cdot,0,1)$ with $0\neq1$, such that $(K,+,0)$
and $(K,\cdot,1)$ are commutative monoids, $\cdot$
distributes over $+$, and $0\cdot a=a\cdot 0=0$. We focus on semirings that are \emph{naturally ordered},
in the sense that  $a\leq b:\Leftrightarrow \E c (a+c=b)$ is a partial order.
For the study of locality properties, an important subclass are the \emph{fully idempotent} semirings, in which both operations are idempotent (i.e., $a+a=a$ and $a \cdot a = a$).
Among these, we consider in particular all \emph{min-max} semirings $(K,\max,\min,0,1)$ induced by a total order $(K,\le)$ with minimal element $0$ and maximal element $1$,
and the more general \emph{lattice} semirings $(K,\join,\meet,0,1)$ induced by a bounded distributive lattice $(K,\le)$.

For a finite relational vocabulary  $\tau$ and a finite universe $A$, we write $\Lit_A(\tau)$ for the set of
\emph{instantiated} $\tau$-literals $R\ta$ and $\neg R\ta$ with $\ta \in A^{\arity(R)}$.
Given a commutative semiring $K$, a \emph{$K$-interpretation} (of vocabulary $\tau$ and universe $A$)
is a function $\pi\from\Lit_A(\tau)\to K$.
It is \emph{model-defining} if for any pair of complementary literals $L$, $\neg L$
precisely one of the values $\pi(L)$, $\pi(\neg L)$ is $0$.
In this case, $\pi$ induces a unique (Boolean) $\tau$-structure $\AA_\pi$ with universe $A$ such that,
for every literal $L\in\Lit_A(\tau)$, we have that $\AA_\pi\models L$ if, and only if, 
$\pi(L)\neq 0$.

A $K$-interpretation $\pi \from \Lit_A(\tau) \to K$ extends in a straightforward way
to a valuation $\pi \ext{\phi(\ta)}$
of any instantiation of a formula $\phi(\tx) \in \FO(\tau)$, 
assumed to be written in negation normal form,
by a tuple $\ta \subseteq A$.
The semiring semantics $\pi \ext{\phi(\ta)}$ is defined
by induction. We first extend $\pi$ by mapping equalities and inequalities to their truth values, by 
setting $\pi \ext{a = a} \coloneqq 1$ and $\pi\ext{a=b} \coloneqq 0$ for $a\neq b$ (and analogously for
inequalities).
Further, disjunctions and existential quantifiers are interpreted as sums, and conjunctions and universal quantifiers as products:
\begin{alignat*}{3}
\pi \ext{\psi(\ta) \lor \theta(\ta)} &\coloneqq \pi \ext{\psi(\ta)} + \pi \ext{\theta(\ta)} &\quad\quad\quad \pi \ext{\psi(\ta) \land \theta(\ta)} &\coloneqq \pi \ext{\psi(\ta)} \cdot \pi \ext{\theta(\ta)} \\
\pi \ext{\exists x \, \theta(\ta, x)} &\coloneqq \sum_{a \in A} \pi \ext{\theta(\ta, a)} &\quad\quad\quad \pi \ext{\forall x \, \theta(\ta, x)} &\coloneqq \prod_{a \in A} \pi \ext{\theta(\ta, a)}.
\end{alignat*}
Since negation does not correspond to a semiring operation, we insist on writing all formulae in negation normal form.
This is a standard approach in semiring semantics (cf.\ \cite{GraedelTan17}).

Equivalence of formulae now takes into account the semiring values and is thus more fine-grained than Boolean equivalence.
We often consider equivalence transformations that hold for an entire class of semirings, such as all min-max semirings.

\begin{definition}[$\keq$]
Two formulae $\psi(\tx)$, $\phi(\tx)$ are
$K$-equivalent (denoted $\psi\keq\phi$) if $\pi\ext{\psi(\ta)}=\pi\ext{\phi(\ta)}$ for every model-defining
$K$-interpretation $\pi$ (over finite universe) and every tuple $\ta$.

For a class $\calS$ of semirings, we write  $\psi\equiv_{\calS}\phi$ if
$\psi\keq\phi$ holds for all $K\in\calS$.
\end{definition}

\medskip
Basic mathematical notions such as isomorphisms, partial isomorphisms and elementary equivalence 
naturally extend from relational structures to $K$-interpretations.
We lift bijections $\sigma \from A \to B$ to literals $L \in \Lit_A(\tau)$ in the obvious way, i.e., $\sigma(L) \in \Lit_B(\tau)$ results from $L$ by replacing each element $a \in A$ with $\sigma(a) \in B$.

\begin{definition}[Isomorphism]
Let $\pi_A \from \Lit_A(\tau) \to K$ and $\pi_B \from \Lit_B(\tau) \allowbreak \to K$ be two $K$-interpretations.
We say that $\pi_A$ and $\pi_B$ are \emph{isomorphic} (denoted $\pi_A \cong \pi_B$) if there is a bijection $\sigma \from A \to B$ such that
$\pi_A(L) = \pi_B(\sigma(L))$ for all  $L \in \Lit_A(\tau)$.

A \emph{partial isomorphism} between $\pi_A$ and $\pi_B$ is a bijection $\sigma \from X \to Y$ on subsets
$X \subseteq A$ and $Y \subseteq B$ such that $\pi_A(L)=\pi_B(\sigma(L))$ for all literals $L \in L_X(\tau)$
(i.e., literals that are instantiated
with elements from $X$ only).
\end{definition}

\begin{definition}[Elementary equivalence]
Let $\pi_A \from \Lit_A(\tau) \to K$ and $\pi_B \from \Lit_B(\tau) \to K$ be two $K$-interpretations,
and $\ta \in A^r$ and $\tb \in B^r$ be tuples of the same length.
The pairs $\pi_A, \ta$ and $\pi_B, \tb$ are \emph{elementarily equivalent}, denoted $\pi_A, \ta \equiv \pi_B, \tb$, if
$\pi_A \ext{\phi(\ta)} = \pi_B \ext{\phi(\tb)} $ for all  $\phi(\tx) \in \FO(\tau)$.
They are $m$-\emph{equivalent}, denoted  $\pi_A, \ta \equiv_m \pi_B, \tb$,
if the above holds for all $\phi(\tx)$ with quantifier rank at most $m$.
\end{definition}

It is obvious that, as in classical semantics,
isomorphism implies elementary equivalence.
For finite universes, the converse is true in classical semantics,
but fails in semiring semantics for certain semirings, including very simple ones such as finite min-max semirings
(see \cite{GraedelMrk21}).

\medskip
Towards locality properties, we define distances between two elements $a,b$ in a $K$-interpretation $\pi$
based on the induced structure $\AA_\pi$.

\begin{definition}[Gaifman graph]
The \emph{Gaifman graph} $G(\pi)$ of a model-defining $K$-interpre\-ta\-tion $\pi \from \Lit_A(\tau) \to K$
is defined as the Gaifman graph $G(\AA_\pi)$ of the induced $\tau$-structure.
That is, two elements $a\neq b$ of $A$ are adjacent in $G(\AA_\pi)$ if, and only
if, there exists a positive literal $L = R c_1 \dots c_r \in \Lit_A(\tau)$ such that $\pi(L)\neq 0$ and
$a,b\in\{c_1,\dots c_r\}$.

We write $d(a,b) \in \N$ for the distance of $a$ and $b$ in $G(\pi)$.
We further define the \emph{$r$-neighbourhood} of an element $a$ in $\pi$ as $\ballpi r a \coloneqq \{ b\in A: d(a,b) \leq r\}$.
For a tuple $\ta\in A^k$ we put $\ballpi r \ta \coloneqq \bigcup_{i\leq k} \ballpi r {a_i}$.
\end{definition}

Locality properties are really meaningful only for
semiring interpretations $\pi\from \Lit_A(\tau)\ra K$ that \emph{track only positive information},
which means that $\pi(\neg L)\in\{0,1\}$ for each negative literal $\neg L$. Indeed, if also negative literals carry 
non-trivial information, then either these must be taken into account in the definition of 
what ``local'' means, which will trivialise the Gaifman graph (making it a clique) 
so locality would become  meaningless, or otherwise local information no longer suffices to determine values of even very simple 
sentences involving negative literals, such as $\E x\E y\neg Rxy$.  
We therefore consider here only $K$-interpretations over finite universes which are model-defining and track only positive information.

\section{Hanf Locality}
\label{secHanf}

The first formalisation of locality we consider is Hanf locality.
We present generalisations of both the Hanf locality rank and of Hanf's locality theorem, where the latter is conditional on algebraic properties of the semirings (cf.\ \cite{EbbinghausFlu99,Libkin04} for the classical proofs).

\subsection{Hanf Locality Rank}

Recall that in classical semantics, every first-order formula is Hanf-local with locality rank depending only on the quantifier rank.
By a straightforward adaptation of the classical proof, it turns out that also in semiring semantics, every first-order formula is Hanf-local. 

\begin{definition}[$\rightleftharpoons_r$]
Let $\pi_A \from \Lit_A(\tau) \to K$ and $\pi_B \from \Lit_B(\tau) \to K$ be two $K$-interpre\-ta\-tions.
For tuples $\ta \subseteq A$ and $\tb \subseteq B$ of matching length, we write $(\pi_A,\ta) \rightleftharpoons_r (\pi_B, \tb)$ if there is a bijection $f\from A \to B$ such that $B_r^{\pi_A}(\ta,c) \cong B_r^{\pi_B}(\tb,f(c))$ for all $c \in A$. 
\end{definition}

\begin{proposition}[Hanf locality in semiring semantics] \label{thmHanfLocal}
Let $K$ be an arbitrary semiring. For every first-order formula $\phi(\tx)$, 
there exists $r\in\N$, depending only on the quantifier rank of $\phi$, such that for
all model-defining $K$-interpretations $\pi_A,\pi_B$ that track only positive information,
and all tuples $\ta$, $\tb$ we have that $\pi_A\ext{\phi(\ta)}=\pi_B\ext{\phi(\tb)}$ whenever
$(\pi_A,\ta) \rightleftharpoons_r (\pi_B, \tb)$.
\end{proposition}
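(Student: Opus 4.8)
The plan is to imitate the classical proof of Hanf's bound on the Hanf-locality rank (as in \cite{EbbinghausFlu99,Libkin04}), but to track semiring values instead of mere truth. The statement to prove by induction on the quantifier rank $m$ of $\phi$ is the following: there is a function $r(m)$ (one may take $r(m) = \tfrac{1}{2}(3^m-1)$, exactly as in the Boolean case) such that whenever $(\pi_A,\ta) \rightleftharpoons_{r(m)} (\pi_B,\tb)$ via a bijection $f$, then $\pi_A\ext{\phi(\ta)} = \pi_B\ext{\phi(\tb)}$ for all $\phi$ of quantifier rank at most $m$ with free variables among $\tx$. The base case $m=0$ is the key place where the hypotheses \emph{model-defining} and \emph{track only positive information} are used: a quantifier-free formula in negation normal form is built from literals by $\lor$ and $\land$, which are interpreted as $+$ and $\cdot$, so it suffices to show $\pi_A\ext{L(\ta)} = \pi_B\ext{L(\tb)}$ for every literal $L$. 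For $L$ an equality or inequality this is immediate since $\ta$ and $\tb$ have the same equality type (a $0$-neighbourhood isomorphism restricted to the components of $\ta$ is a partial isomorphism, in particular a bijection respecting equalities). For $L = R\tc$ a positive literal with $\tc \subseteq \ta$: the isomorphism $\ballpi{0}{A}{\ta} \cong \ballpi{0}{B}{\tb}$ maps $\ta$ to $\tb$ componentwise and, being an isomorphism of $K$-interpretations on the $0$-balls, preserves $\pi$-values of all literals instantiated inside the ball, so $\pi_A(R\tc) = \pi_B(\sigma(R\tc)) = \pi_B(R\tc')$ where $\tc'$ is the corresponding subtuple of $\tb$. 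For a negative literal $\neg R\tc$: here we only know $\pi_A(\neg R\tc), \pi_B(\neg R\tc') \in \{0,1\}$, but model-definingness ties $\pi(\neg R\tc) = 0 \iff \pi(R\tc) \neq 0$, and we have just shown $\pi_A(R\tc) = \pi_B(R\tc')$; hence $\pi_A(\neg R\tc) = \pi_B(\neg R\tc')$ as well. (This is exactly where the remark in \cref{sec:semiringsemantics} about why these two assumptions are needed becomes operative: without them a $0$-ball isomorphism would not control the value of $\neg R\tc$.)

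For the inductive step, the only nontrivial case is $\phi(\tx) = \exists y\, \psi(\tx,y)$ of quantifier rank $m+1$, with $\psi$ of quantifier rank $m$ (the universal case is dual, using $\prod$ in place of $\sum$, and Boolean combinations are handled by the induction hypothesis together with the fact that $+$ and $\cdot$ are well-defined functions of their arguments). By definition of semiring semantics, $\pi_A\ext{\phi(\ta)} = \sum_{c\in A} \pi_A\ext{\psi(\ta,c)}$ and likewise over $B$. Since $f\from A\to B$ is a bijection, it suffices to show that for each $c\in A$ we have $\pi_A\ext{\psi(\ta,c)} = \pi_B\ext{\psi(\tb,f(c))}$; summing over $c$ then matches the two sums term by term. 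To invoke the induction hypothesis for the pair $(\pi_A,(\ta,c))$ and $(\pi_B,(\tb,f(c)))$ at radius $r(m)$, I need a bijection $g\from A\to B$ with $\ballpi{r(m)}{A}{\ta,c,e} \cong \ballpi{r(m)}{B}{\tb,f(c),g(e)}$ for all $e\in A$. This is the classical ``neighbourhood surgery'' argument: the radii satisfy $r(m+1) \geq 3\,r(m)+1$, so from the radius-$r(m+1)$ isomorphisms $h_e\from \ballpi{r(m+1)}{A}{\ta,e} \to \ballpi{r(m+1)}{B}{\tb,f(e)}$ one builds, for each $c$, a new bijection $g$ that on the large neighbourhood of $\{\ta,c\}$ copies $h_c$ (so that the $r(m)$-ball around $\ta,c$ is preserved together with the $r(m)$-ball of any point inside), and on the rest of $A$ is assembled from the $h_e$'s restricted suitably so that $r(m)$-balls around $\ta,c,e$ are preserved for $e$ far from $\{\ta,c\}$ (the triangle inequality guarantees that for such $e$ the $r(m)$-ball of $e$ is disjoint from the $r(m)$-ball of $\{\ta,c\}$, so the two partial maps glue to an isomorphism). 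The bookkeeping is identical to the Boolean proof; the only new ingredient is that the isomorphisms must now be isomorphisms of $K$-interpretations rather than of structures, which is automatic since $\rightleftharpoons_r$ was defined in terms of isomorphic $K$-neighbourhoods.

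The main obstacle is the neighbourhood-surgery step, i.e., producing the single bijection $g\from A\to B$ out of the many local isomorphisms $h_e$ in a way that simultaneously preserves the $r(m)$-balls around $\ta,c,e$ for \emph{all} $e$; this requires the careful choice $r(m+1)=3r(m)+1$ and a disjointness argument via the triangle inequality. It is, however, purely combinatorial and carries over verbatim from the classical setting, because the isomorphisms in question are already isomorphisms of $K$-interpretations: once a ball-isomorphism is fixed, it preserves all $\pi$-values inside the ball by definition, and semiring semantics is computed compositionally from those values using $+$ and $\cdot$. So the semiring content of the proof lives entirely in the base case (where model-definingness and positive-only tracking are used to handle negative literals) and in the trivial observation that $\sum$ and $\prod$ over a bijectively indexed family are invariant under reindexing; the rest is the classical argument with ``$\models$'' replaced by ``has equal $\pi$-value''.
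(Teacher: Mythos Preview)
Your overall structure is right and matches the paper's, but you have misplaced the one genuinely delicate step. In the base case you do \emph{not} need the hypotheses ``model-defining'' and ``track only positive information'': by the paper's definition, a (partial) isomorphism of $K$-interpretations already satisfies $\pi_A(L)=\pi_B(\sigma(L))$ for \emph{all} literals $L$ instantiated in the ball, negative ones included. So your separate argument for $\neg R\tc$ is harmless but redundant, and your parenthetical remark that ``this is exactly where the remark in \cref{sec:semiringsemantics} becomes operative'' is wrong.

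The place where these hypotheses are actually needed is your surgery step, which you dismiss as carrying over ``verbatim''. When you glue two partial isomorphisms on disjoint, non-adjacent $r(m)$-balls into a single map on their union, you must verify that the glued map is again a partial isomorphism of $K$-interpretations, i.e.\ that it preserves the $\pi$-value of every literal instantiated in the union. For a literal whose arguments lie in \emph{both} balls, neither constituent isomorphism says anything; you need an external argument. For positive such literals the value is $0$ on both sides because the balls are more than distance $1$ apart in the Gaifman graph; but for negative literals you must invoke model-definingness together with positive-only tracking to conclude both sides equal $1$. This is precisely the content of the paper's \cref{lem-topi}, and it is the one point the paper singles out as requiring care. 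So your sentence ``the two partial maps glue to an isomorphism'' hides the only nontrivial semiring-specific step, and your concluding claim that ``the semiring content of the proof lives entirely in the base case'' is backwards.
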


This follows by a simple adaptation of the classical proof in \cite{Libkin04},
which relies on the inductive argument that whenever  
$(\pi_A, \ta) \rightleftharpoons_{3r+1} (\pi_B, \tb)$, then there exists a bijection $f\from A \to B$ 
such that  $(\pi_A, \ta,c) \rightleftharpoons_{r} (\pi_B, \tb, f(c))$ for all $c \in A$.
The only point that requires care is the combination of partial isomorphisms on disjoint and non-adjacent neighbourhoods,
which in our setting depends on the assumption that the $K$-interpretations only track positive information:

\begin{lemma} \label{lem-topi}
Let $\pi_A$ and $\pi_B$ be model-defining $K$-interpretations that track only positive information.
Let $\sigma \from B^{\pi_A}_r(\ta)\ra B^{\pi_B}_r(\tb)$ and $\sigma'\from B^{\pi_A}_r(\ta')\ra B^{\pi_B}_r(\tb')$
be two partial isomorphisms between disjoint $r$-neighbourhoods in $\pi_A$ and $\pi_B$.
If $d(\ta,\ta')>2r+1$ and $d(\tb,\tb')>2r+1$,
then $(\sigma\cup\sigma') \from B^{\pi_A}_r(\ta, \ta')\ra B^{\pi_B}_r(\tb, \tb')$ is also a partial isomorphism.
\end{lemma}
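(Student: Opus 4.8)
The plan is to verify directly that $\tau := \sigma \cup \sigma'$ is a well-defined bijection between $\ball{r}{\ta,\ta'}$ and $\ball{r}{\tb,\tb'}$ (in the respective Gaifman graphs), and that it preserves all literals instantiated with elements from its domain. The first point is where the distance hypotheses enter. I would first argue that under $d(\ta,\ta') > 2r+1$ the neighbourhoods $\ball{r}{\ta}$ and $\ball{r}{\ta'}$ are in fact \emph{disjoint} (this is already assumed in the statement, but I would note it follows from the distance bound via the triangle inequality: a common point would put $\ta$ within distance $2r$ of $\ta'$) — more importantly, I would show there is \emph{no edge} in $G(\pi_A)$ between $\ball{r}{\ta}$ and $\ball{r}{\ta'}$, since an edge $\{c,c'\}$ with $c \in \ball{r}{\ta}$, $c' \in \ball{r}{\ta'}$ would give $d(\ta,\ta') \le r + 1 + r = 2r+1$, contradicting the strict inequality. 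The same holds on the $\pi_B$ side. Consequently $\ball{r}{\ta,\ta'}$ is the disjoint union $\ball{r}{\ta} \dcup \ball{r}{\ta'}$ with no edges across, and likewise for $\tb,\tb'$; so $\tau$ is a well-defined bijection on the domains, mapping $\ball{r}{\ta}$ to $\ball{r}{\tb}$ and $\ball{r}{\ta'}$ to $\ball{r}{\tb'}$.

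Next I would check that $\tau$ preserves all instantiated literals $L \in \Lit_X(\tau)$ where $X = \ball{r}{\ta,\ta'}$, i.e.\ $\pi_A(L) = \pi_B(\tau(L))$. Split into cases according to which elements occur in $L = R c_1 \dots c_k$ (or its negation). If all $c_i$ lie in $\ball{r}{\ta}$, then $L$ is preserved because $\sigma$ is a partial isomorphism; symmetrically if all lie in $\ball{r}{\ta'}$, use $\sigma'$. The only remaining case is a positive literal $L = R c_1 \dots c_k$ with $\pi_A(L) \ne 0$ and whose arguments are split across both neighbourhoods, say $c_i \in \ball{r}{\ta}$ and $c_j \in \ball{r}{\ta'}$: but then $c_i$ and $c_j$ are adjacent in $G(\pi_A)$ (they coexist in a positive fact), contradicting the "no edge across" claim established above. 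So this case is vacuous for positive literals. For a \emph{negative} literal $L = \neg R c_1 \dots c_k$ with arguments split across both neighbourhoods, we cannot rule out $\pi_A(L) \ne 0$ that way — and this is exactly the point where "track only positive information" is needed: by that assumption $\pi_A(\neg R c_1 \dots c_k) \in \{0,1\}$, and since $\pi_A$ is model-defining and the complementary positive literal $R c_1 \dots c_k$ does \emph{not} hold in $\AA_{\pi_A}$ (its arguments would have to be adjacent, again contradiction — or more simply: it is a mixed literal so by the positive case $\pi_A(R\bar c) = 0$), we get $\pi_A(\neg R c_1 \dots c_k) = 1$. The identical reasoning on the $\pi_B$ side, applied to $\tau(L) = \neg R \tau(c_1) \dots \tau(c_k)$ whose arguments are split across $\ball{r}{\tb}$ and $\ball{r}{\tb'}$, yields $\pi_B(\tau(L)) = 1$ as well. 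Hence $\pi_A(L) = \pi_B(\tau(L))$ in this case too, and $\tau$ is a partial isomorphism.

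The main obstacle — and really the only genuinely nontrivial point — is the mixed negative literal case, which is precisely why the "track only positive information" hypothesis appears in the statement; without it, $\pi_A$ could assign an arbitrary nonzero value to a negative literal straddling the two neighbourhoods, and $\sigma, \sigma'$ individually carry no information about such a literal, so $\tau$ would fail to be a partial isomorphism (this is the phenomenon flagged in \cref{sec:semiringsemantics}). Everything else is routine bookkeeping with the Gaifman graph and the triangle inequality. I would also remark that the symmetric distance hypotheses $d(\ta,\ta') > 2r+1$ and $d(\tb,\tb') > 2r+1$ are both needed since we must run the "no edge across" argument independently in $\pi_A$ and in $\pi_B$.
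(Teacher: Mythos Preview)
Your proposal is correct and follows essentially the same approach as the paper: verify that $\sigma\cup\sigma'$ is a bijection, then handle literals case by case, using the distance hypotheses to force mixed positive literals to have value $0$ on both sides and invoking ``model-defining plus tracks only positive information'' to force mixed negative literals to have value $1$ on both sides. One minor expositional point: after showing that a mixed positive literal cannot have $\pi_A(L)\neq 0$, you should state the symmetric conclusion $\pi_B(\tau(L))=0$ explicitly (you have already established the no-edge-across fact on the $B$ side, so this is immediate), rather than leaving it implicit.
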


\begin{proof} Clearly $(\sigma\cup\sigma')$ is a bijection, so we only have to show that
$\pi_A(L)=\pi_B((\sigma\cup\sigma')(L))$ for every literal $L\in \Lit_X(\tau)$,
where $X=B^{\pi_A}_r(\ta\ta')$. For every literal $L$ instantiated with only elements
from $B^{\pi_A}_r(\ta)$ or only elements from $B^{\pi_A}_r(\ta')$ this is clear,
since $\sigma$ and $\sigma'$ are partial isomorphisms. So consider a literal
$L$ that is instantiated by elements from both  $B^{\pi_A}_r(\ta)$ and $B^{\pi_A}_r(\ta')$,
which implies that $L' \coloneqq (\sigma\cup\sigma')(L)$ is instantiated by
elements from both $B^{\pi_B}_r(\tb)$ and $B^{\pi_A}_r(\tb')$.
If $L$ is a positive literal, then $\pi_A(L)=0$ and $\pi_B(L')=0$,
as otherwise $d(\ta,\ta')\leq 2r+1$ or $d(\tb,\tb')\leq 2r+1$.
If $L$ is a negative literal, i.e., $L = \neg \hat L$ for a positive literal $\hat L$,
then by the same argument, $\pi_A(\hat L)=\pi_B(\hat L')=0$.
Since $\pi_A$ and $\pi_B$ are model-defining and track only positive information,
we have $\pi_A(L)=\pi_B(L')=1$.
\end{proof}

\subsection{Hanf's Locality Theorem}

Hanf's locality theorem provides a sufficient combinatorial criterion for the $m$-equivalence of two structures, i.e. for 
their indistinguishability by sentences of quantifier rank up to $m$.
We now turn to the question under what conditions this theorem generalises to 
semiring semantics.
Its classical proof (cf.\ \cite{EbbinghausFlu99}) proceeds
by showing that Hanf's criterion admits the construction of a back-and-forth system
$(I_j)_{j\leq m}$ which, by the Ehrenfeucht-Fra\"{\i}ss\'e theorem, implies the $m$-equivalence of the
two structures. It turns out that this method carries over to $K$-interpretations precisely in the
case that the semiring $K$ is fully idempotent. We further show that for semirings that are not fully idempotent,
there actually are counterexamples to Hanf's locality theorem.

\begin{definition}[Back-and-forth system]
Let $\pi_A$ and $\pi_B$ be two $K$-interpretations and let $k \geq 0$.
A \emph{$m$-back-and-forth system} for  $\pi_A$ and 
$\pi_B$ is a sequence $(I_j)_{j \leq m}$ of finite sets of partial 
isomorphisms between $\pi_A$ and $\pi_B$ such that
\begin{itemise}
\item
$\emptyset \in I_m$, and
\item
for all $j < m$, the set $I_{j+1}$ has back-and-forth extensions in $I_{j}$,
i.e., whenever $\ta\mapsto\tb \in I_{j+1}$ then for every $c\in A$ there exists $d\in B$,
and vice versa, such that $(\ta c)\mapsto (\tb d)$ is in  $I_j$.
\end{itemise}
We write $(I_j)_{j \leq m} \colon \pi_A \cong_m \pi_B$
if  $(I_j)_{j \leq m}$ is a $m$-back-and-forth system for $\pi_A$ and $\pi_B$. 
\end{definition}

Back-and-forth systems can be seen as algebraic descriptions of winning strategies in Ehrenfeucht-Fra\"{\i}ss\'e
games, and in classical semantics, an $m$-back-and-forth system between two structures exists
if, and only if, the structures are $m$-equivalent. However, in semiring semantics this equivalence may, in general,
fail in both directions \cite{Brinke23}. A detailed investigation of the relationship between elementary equivalence,
Ehrenfeucht-Fra\"{\i}ss\'e games, and
back-and-forth-systems in semiring semantics is outside the scope of this paper, and will be presented
in forthcoming work. For the purpose of studying Hanf locality, we shall need just the fact that in the
specific case of fully idempotent semirings, $m$-back-and-forth systems do indeed provide
a sufficient criterion for $m$-equivalence.

\begin{proposition}\label{prop-back-and-forth}
Let $\pi_A$ and $\pi_B$ be $K$-interpretations into a fully idempotent semiring $K$.
If there is an $m$-back-and-forth system $(I_j)_{j\leq m}$ for $\pi_A$ and 
$\pi_B$, then $\pi_A\equiv_m \pi_B$.
\end{proposition}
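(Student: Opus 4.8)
The plan is to prove, by induction on the structure of first-order formulae $\phi(\tx)$ in negation normal form, the following strengthened statement: for every $j \le m$, every $\phi(\tx)$ of quantifier rank at most $j$, and every pair $\ta \mapsto \tb \in I_j$, we have $\pi_A\ext{\phi(\ta)} = \pi_B\ext{\phi(\tb)}$. Applying this with $j = m$ and the empty map $\emptyset \in I_m$ then yields $\pi_A \equiv_m \pi_B$. The base case is literals: if $\ta \mapsto \tb \in I_j$ is a partial isomorphism and $\phi$ is a literal $L$ instantiated by elements of $\ta$, then $\pi_A(L) = \pi_B(\sigma(L))$ by definition of partial isomorphism; equalities and inequalities are handled since $\ta \mapsto \tb$ is a bijection, so $a_i = a_k$ iff $b_i = b_k$, and both sides evaluate to $1$ or $0$ accordingly. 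The Boolean connectives $\lor$ and $\land$ are immediate from the induction hypothesis, since they are interpreted as $+$ and $\cdot$ and the quantifier rank does not increase (so we stay at the same level $I_j$).

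The interesting cases are the quantifiers, and this is where full idempotence enters. Consider $\phi(\ta) = \E y\, \theta(\ta, y)$ of quantifier rank $j$, so $\theta$ has quantifier rank $j-1$ and we may use $I_{j-1}$. By the back-and-forth property of $(I_j)_{j\le m}$, there is a function $c \mapsto d_c$ from $A$ to $B$ such that $\ta c \mapsto \tb d_c \in I_{j-1}$ for every $c \in A$, and symmetrically a function $A \leftarrow B$; moreover each such function is surjective in the relevant direction (every $d \in B$ is hit by some $c$, by the "back" direction). By the induction hypothesis, $\pi_A\ext{\theta(\ta,c)} = \pi_B\ext{\theta(\tb,d_c)}$ for all $c$. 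Hence the set of values $\{\pi_A\ext{\theta(\ta,c)} : c \in A\}$ equals the set of values $\{\pi_B\ext{\theta(\tb,d)} : d \in B\}$ (inclusion one way from the forth map, the other way from the back map). Now because $+$ is idempotent and commutative, $\sum_{a\in A}$ of a family depends only on the \emph{set} of values occurring, so
\[
\pi_A\ext{\E y\,\theta(\ta,y)} = \sum_{c\in A}\pi_A\ext{\theta(\ta,c)} = \sum_{d\in B}\pi_B\ext{\theta(\tb,d)} = \pi_B\ext{\E y\,\theta(\tb,y)}.
\]
The universal case is identical, replacing $+$ by $\cdot$ and using idempotence of multiplication, so that $\prod$ over a finite index set depends only on the set of factor values.

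The main obstacle — and the reason the hypothesis of full idempotence is essential rather than cosmetic — is exactly the step where a sum or product over $A$ is reduced to an expression depending only on the set of summand/factor values. In a non-idempotent semiring, two $K$-interpretations can have back-and-forth systems of arbitrary depth yet differ because the \emph{multiplicities} with which values occur differ (this is precisely what the counterexample with the natural semiring and $\E x\, Ux$ exploits). Once one observes that idempotence collapses these multiplicities, the argument is a routine induction; the only minor care needed is to make sure the induction hypothesis is formulated with the level index $j$ tracking quantifier rank, and to invoke both the "forth" and the "back" halves of the back-and-forth property so that the two value-sets coincide rather than merely one containing the other.
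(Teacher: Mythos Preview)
Your proposal is correct and follows essentially the same approach as the paper's proof: both argue by induction on quantifier rank that any $\ta\mapsto\tb\in I_j$ preserves the semiring value of every formula of rank at most $j$, with the quantifier step reduced to the observation that full idempotence makes $\sum$ and $\prod$ depend only on the \emph{set} of values, and that the back-and-forth property forces these value-sets to coincide. Your write-up is slightly more explicit about the literal/equality base case and about invoking both the forth and back directions, but the argument is the same.
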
  

\begin{proof} We show by induction that for every first-order formula $\psi(\tx)$
of quantifier rank $j\leq m$ and every partial isomorphism $\ta\mapsto\tb\in I_{j}$ we have that
$\pi_A\ext{\psi(\ta)}=\pi_B\ext{\psi(\tb)}$. For $j=0$ this is trivial.
For the inductive case it suffices to consider formulae $\psi(\tx)=\E y \, \phi(\tx,y)$
and $\psi(\tx)=\A y \, \phi(\tx,y)$, and a map $\ta\mapsto\tb\in I_{j+1}$.
We have that
\begin{align*}
    \pi_A\ext{\E y \,\phi(\ta,y)} &=\sum_{c\in A}\pi_A\ext{\phi(\ta,c)} &\text{and }\qquad 
    \pi_B\ext{\E y \,\psi(\tb,y)} &=\sum_{d\in B}\pi_B\ext{\phi(\tb,d)},\\
    \pi_A\ext{\A y \,\phi(\ta,y)} &=\prod_{c\in A}\pi_A\ext{\phi(\ta,c)} &\text{and }\qquad 
    \pi_B\ext{\A y \,\psi(\tb,y)} &=\prod_{d\in B}\pi_B\ext{\phi(\tb,d)}.
\end{align*}

Since the semiring is fully idempotent, the valuations $\pi_A\ext{\E y\,\phi(\ta,y)}$
and $\pi_A\ext{\A y\,\phi(\ta,y)}$
only depend on the \emph{set} of all values $\pi_A\ext{\phi(\ta,c)}$ for $c\in A$,
and not on their multiplicities.
It thus suffices to prove that the sets of
values are identical for $(\pi_A,\ta)$ and $(\pi_B,\tb)$, i.e.
\[     \{ \pi_A\ext{\phi(\ta,c)}: c\in A\}  =   \{ \pi_B\ext{\phi(\tb,d)}: d\in B\}.\]
But this follows immediately from the fact that $\ta\mapsto\tb$ has
back and forth extensions in $I_j$, and from the induction hypothesis:
for each $c\in A$ there exists some $d\in B$, and vice versa, such that the map
$(\ta,c)\mapsto(\tb,d)$ is in $I_j$, and therefore $\pi_A\ext{\phi(\ta,c)}=\pi_B\ext{\phi(\tb,d)}$.
\end{proof}

To formulate Hanf's criterion for $K$-interpretations $\pi_A,\pi_B$, we write
$\pi_A \rightleftharpoons_{r,t} \pi_B$, for $r,t\in\N$, 
if for every isomorphism type $\iota$ of $r$-neighbourhoods, either $\pi_A$ and $\pi_B$ have
the same number of realisations of $\iota$, or both have at least $t$ realisations.

\begin{theorem}[Hanf's theorem for fully idempotent semirings]
Let $K$ be a fully idempo\-tent semiring. For all $m,\ell\in \N$ there exist
$r=r(m) \in \N$ and $t=t(m,\ell)\in\N$ such that for all model-defining $K$-interpretations $\pi_A$ and $\pi_B$ that track only positive information  
and whose Gaifman graphs have maximal degree $\le \ell$, we have that $\pi_A\equiv_m \pi_B$ whenever
$\pi_A \rightleftharpoons_{r,t} \pi_B$.
\end{theorem}

\begin{proof} 
Given $m,\ell\in\N$, let  $r_0 = 0$, inductively define $r_{i+1} = 3r_i + 1$, and set $r = r_{m-1}$.
Further, let $t = m \cdot e + 1$, 
where $e \coloneqq 1 + \ell + \ell^2 + \dots +\ell^r$ is the maximal number of elements 
in an $r$-neighbourhood of a point, in $K$-interpretations with Gaifman graphs with maximal degree $\ell$.
Assume that  $\pi_A$ and $\pi_B$ are $K$-interpretations with that property,
such that $\pi_A \rightleftharpoons_{r,t} \pi_B$.

We construct an $m$-back-and-forth system $(I_j)_{j\leq m}$ for $(\pi_A,\pi_B)$ 
by setting
\[
    I_{j} \coloneqq \{ \ta\mapsto\tb: |\ta|=|\tb| = m-j \text{ and }B_{r_{j}}^{\pi_A}(\ta) \cong B_{r_{j}}^{\pi_B}(\tb)\}.
\]
We have $I_m=\{\emptyset\}$, and since $\pi_A \rightleftharpoons_{r,t} \pi_B$,
we have for every $a\in A$ some $b\in B$, and vice versa, such that
$B_{r}^{\pi_A}(a) \cong B_{r}^{\pi_B}(b)$, so $I_m$ has back-and-forth extensions in
$I_{m-1}$.
Consider now a partial isomorphism $\ta\mapsto\tb$ in $I_{j+1}$. There is
an isomorphism $\rho\colon B_{3r_j+1}^{\pi_A}(\ta) \cong B_{3r_j+1}^{\pi_B}(\tb)$.
By symmetry, it suffices to prove the forth-property: for every $a\in A$ we must find
some $b\in B$ such that $\ta a\mapsto \tb b \in I_j$ which means that
$B_{r_{j}}^{\pi_A}(\ta a) \cong B_{r_{j}}^{\pi_B}(\tb b)$.

{\it Case 1 ($a$ close to $\ta$).}
If $a \in B^{\pi_a}_{2r_j+1}(\ta)$, then we choose 
$b = \rho(a) \in B^{\pi_B}_{2r_j+1}(\tb)$.
This is a valid choice since $B^{\pi_A}_{r_j}(\ta a)\subseteq B^{\pi_A}_{3r_j+1}(\ta)$
so $\rho$ also provides an isomorphism between $B_{r_{j}}^{\pi_A}(\ta a)$ and $B_{r_{j}}^{\pi_B}(\tb b)$.

{\it Case 2 ($a$ far from $\tup a$).}
If $a \not\in B^{\pi_a}_{2r_j+1}(\ta)$, then 
$B^{\pi_A}_{r_j}(a) \cap B^{\pi_A}_{r_j}(\ta) = \emptyset$.
Hence, it suffices to find $b \in B$ 
such that $B^{\pi_B}_{r_j}(b)$ has the same 
isomorphism type as $B^{\pi_A}_{r_j}(a)$  (call this $\iota$) with the property that
$b$ has distance at least $2r_j+2$ to $\tb$. Since
$\pi_A$ and $\pi_B$ only track positive information the isomorphisms can be
combined by \cref{lem-topi} to show that  $B_{r_{j}}^{\pi_A}(\ta a) \cong B_{r_{j}}^{\pi_B}(\tb b)$.

Assume that no such $b$ exists.
Let $s$ be the number of elements realising  $\iota$ in $\pi_B$.
Since all of them are have distance at most $2r_j+1$ from $\tb$
and there are at most $t$ elements in $r$-neighbourhoods
around $\tb$, we have that $s\leq t$.
On the other side there are at least $s+1$ elements realising $\iota$ 
in $\pi_A$, namely $s$ elements in $B^{\pi_A}_{2r_j+1}(\ta)$
(due to $\rho$) and $a$.
But this contradicts the fact that $\iota$ either has the same number of realisations
in $\pi_A$ and $\pi_B$, or more than $t$ realisations in both interpretations. 
Hence such an element $b$ exists, and we have proved that
$(I_j)_{j\leq m}$  is indeed a $m$-back-and-forth system for $(\pi_A,\pi_B)$.

By \cref{prop-back-and-forth} this implies that $\pi_A\equiv_m\pi_B$.
\end{proof}

On the other side, we observe that Hanf's locality theorem in general \emph{fails} for semirings
with non-idempotent operations due to the possibility to count.

\begin{Example}[Counterexample Hanf]
Consider the natural semiring $(\N,+,\cdot,0,1)$ and $\psi = \E x \, U x$ over signature $\tau = \{U\}$.
For each $n$, we define a model-defining $K$-interpretation $\pi_n$ with universe $\{a_1,\dots,a_n\}$ by setting $\pi(U a_i) = 1$ for all $i$.
Then $\pi_n \ext \psi = \sum_{i} \pi(U a_i) = n$.

As we only have unary predicates, all neighbourhoods are trivial.
That is, they consist of just one element and all of them have the same isomorphism type.
Thus, $\pi_n$ realises this single isomorphism type precisely $n$ times,
which means that $\pi_n \rightleftharpoons_{r,t} \pi_t$ for all $r,t$ with $n\geq t$.
But $\pi_n \ext \psi \neq \pi_t \ext \psi$ for $n \neq t$, so Hanf's theorem fails for the natural semiring.

This example readily generalises to all semirings containing an element $s \in K$ for which there are arbitrarily large numbers $n,m \in \N$ with $m \cdot s \neq n \cdot s$ or $s^m \neq s^n$ ($m \cdot s$ and $s^m$ refer to the $m$-fold addition and multiplication of $s$, respectively).
Indeed, we can map all atoms $U a_i$ to $s$ and observe that Hanf's theorem fails for either $\psi = \E x \, U x$ or $\psi = \A x \, U x$.
\end{Example}

\section{Gaifman Normal Forms in Semirings Semantics}
\label{secGaifmanDefinitions}

We briefly recall the classical notion of Gaifman normal forms (cf.\ \cite{Gaifman82,EbbinghausFlu99}), which capture locality in a syntactic way.
Gaifman normal forms are Boolean combinations of \emph{local formulae} $\phi \loc r(x)$ and \emph{basic local sentences}.
A local formula $\phi \loc r(x)$ is a formula in which all quantifiers are \emph{relativised} to the $r$-neighbourhood of $x$,
for instance $\E y \, \theta(x,y)$ is relativised to $\E y (d(x,y) \le r \land \theta(x,y))$.
Here, $d(x,y) \le r$ asserts that $x$ and $y$ have distance $\le r$ in the Gaifman graph, which can easily be expressed in first-order logic (in Boolean semantics).
A basic local sentence asserts that there exist \emph{scattered} elements, i.e., elements with distinct $r$-neighbourhoods, which all satisfy the same $r$-local formula: $\E x_1 \dots \E x_m (\bigland_{i \neq j} d(x_i,x_j) > 2r \land \bigland_i \phi \loc r(x_i))$.
By Gaifman's theorem, every formula has an equivalent Gaifman normal form, which intuitively means that it only makes statements about distinct local neighbourhoods.

Moving to semiring semantics, we keep the notion of Gaifman normal forms close to the original one, with two exceptions.
First, we only consider formulae in negation normal form.
This means that we restrict to \emph{positive} Boolean combinations and, in turn, permit the duals of basic local sentences (i.e., the negations of basic local sentences, in negation normal form).
Second and most importantly, we lose the ability to express relativised quantifiers\footnote{We could use the same formula for $d(x,y) \le r$ as in the Boolean case. However, this formula would not just evaluate to $0$ or $1$, but would include the values of all edges around $x$, so each relativised quantifier would have the unintended side-effect of multiplying with the edge values in the neighbourhood. One can show that this side-effect would make Gaifman normal forms impossible (see \cref{appendixDistance} for details).} in our logic.
Instead, we extend first-order logic by adding relativised quantifiers (\emph{ball quantifiers}) of the form $\Qballtau \tau y r x$ for $Q \in \{\E,\A\}$ with the following semantics:
given a formula $\phi(x,y)$, a $K$-interpretation $\pi \from \Lit_A(\tau) \to K$, and an element $a$, we define
\[
    \pi\ext{\Eballtau \tau y r a\; \phi(a,y)} \coloneqq \sum_{\mathclap{b\in \ballpi[\pi] r a}} \,\pi\ext{\phi(a,b)}, \qquad
    \pi\ext{\Aballtau \tau y r a\; \phi(a,y)} \coloneqq \prod_{\mathclap{b\in \ballpi[\pi] r a}} \,\pi\ext{\phi(a,b)}.
\]
Notice that, similar to distance formulae in Boolean semantics, we define ball quantifiers for a fixed signature $\tau$.
For an interpretation $\pi \from \Lit_A(\tau') \to K$ of a larger signature $\tau' \supseteq \tau$, we define the semantics of $\Qballtau \tau z r a$ by considering the $r$-neighbourhood only with respect to the $\tau$-reduct of $\pi$, so that only relations in $\tau$ are relevant.
We drop $\tau$ and write $\Eball y r a$ or $\Aball y r a$ if the signature is clear from the context.

\medskip
This alone is not as expressive as the Boolean notion.
Indeed, the ability to express $d(x,y) \le r$ in Boolean semantics leads to the following properties of $r$-local formulae:
\begin{itemise}
\item \emph{Increasing the radius:}
Every $r$-local formula $\phi \loc r(\tx)$ is equivalent to an $r'$-local formula $\phi \loc {r'}(\tx)$ for any $r' > r$,
as we can replace $\Eball y r x$ by $\Eball y {r'} \tx (d(x,y) \le r \land \dots)$.

\item \emph{Requantifying:}
The formula $\psi(x) = \Eball y {r'} x \, \phi \loc r (y)$, where $\phi \loc r$ may contain relativised quantification around $y$, is equivalent to an $(r+r')$-local formula $\hat\psi \loc {r+r'} (x)$, where only relativised quantifiers around $x$ are permitted (with radius $r+r'$).
Indeed, we can replace any quantifiers $\Eball z r y$ in $\phi \loc r (y)$ by $\Eball z {r+r'} x (d(y,z) \le r \land \dots)$.
\end{itemise}

In order to capture these properties with ball quantifiers, we consider the \emph{quantification dag}
$D(\phi)$ of a formula $\phi(\tx)$ which contains nodes for all variables in $\phi$ and where for every quantifier $\Qball z {r'} y$ in $\phi$, we add an edge $z \to y$ with distance label $r'$ (see \cref{figDag}).
We define $\phi$ to be $r$-local if the summed distance of any path ending in a free variable $x \in \tx$ is at most $r$.

\begin{figure}
\begin{center}
$\displaystyle \phi \loc r (x,y) {{}={}} \Aball{z_1}{r_1}{x} \, \big( \Eball{z_2}{r_2}{z_1} \; \Aball{z_3}{r_3}{z_1} \, \neg E z_2 z_3 \big) \,\lor\, \Aball{z_4}{r_4}{y} \, E x z_4$
\end{center}
\begin{center}
\begin{tikzpicture}[scale=1]
 \node (x) at (0,0) {$x$};
 \node (y) at (3,0) {$y$};
 \node (z1) at (0,-1) {$z_1$};
 \node (z2) at (-.5,-2) {$z_2$};
 \node (z3) at (+.5,-2) {$z_3$};
 \node (z4) at (3,-1) {$z_4$};
 \draw [blue!70!black,->,>=stealth',every node/.style={pos=0.4,inner sep=2pt,font=\scriptsize}]
  (z1) edge node[right] {$r_1$} (x)
  (z2) edge node[above left] {$r_2$} (z1)
  (z3) edge node[above right] {$r_3$} (z1)
  (z4) edge node[right] {$r_4$} (y)
  ;
 \clip (-3,.5) rectangle (6,-2.35);
 
 \fill[blue!70!black,opacity=0.1] (0,0) circle (2.3);
 \fill[blue!70!black,opacity=0.1] (3,0) circle (2.3);
 \draw [dashed,blue!70!black] (y) -- node [sloped,above,pos=0.8,inner sep=2pt,font=\scriptsize] {$r$} ++(-10:2.3cm);
\end{tikzpicture}
\end{center}
\caption{Example of a local formula and the corresponding quantification dag $D(\phi)$, with circles indicating $\ball r {xy}$. In this example, $\phi \loc r (x,y)$ is $r$-local for all $r \ge \max(r_1+r_2, r_1+r_3, r_4)$.}
\label{figDag}
\end{figure}
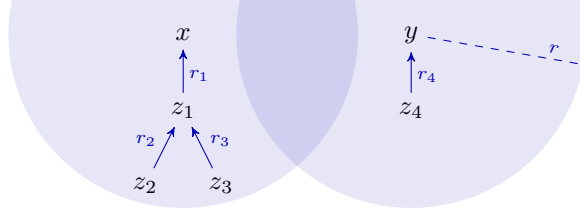

\begin{definition}[Local formula]
\label{defLocalFormula}
An \emph{$r$-local $\tau$-formula around $\tx$}, denoted $\phi \loc r (\tx)$, is built from $\tau$-literals
by means of $\land$, $\lor$ and ball quantifiers
$\Qballtau \tau z {r'} y$
such that in the associated quantification dag $D(\phi)$, all paths ending in a free variable $x \in \tx$ have total length at most $r$.

A bound variable of $\phi \loc r(\tx)$ is said to be \emph{locally quantified around $x \in \tx$}, if it is connected to $x$ in $D(\phi)$.
We sometimes write $\phi \loc r (\tx \mid \ty)$ to indicate that $\phi$ is an $r$-local formula around $\tx\ty$ where bound variables are locally quantified only around $\tx$, but not around $\ty$.
\end{definition}

We emphasise that in the Boolean case, \cref{defLocalFormula} is equivalent to the standard notion, so we do not add expressive power.
For convenience, we allow quantification $\Qball z {r'} \ty \, \phi(\ty,z)$ around a tuple $\ty$, which can easily be simulated by regular ball quantifiers.

For basic local sentences, we further need to quantify over scattered tuples.
To this end, we also add \emph{scattered quantifiers} $\Esc r \ty$ and $\Asc r \ty$ with the following semantics:
\[
    \pi \ext {\Esc r \ty \, \phi(\ty)} = \;\; \sum_{\mathclap{\substack{\ta \subseteq A\\d(a_i,a_j) > 2r \text{ for } i \neq j}}} \; \pi \ext {\phi(\ta)},
    \qquad
    \pi \ext {\Asc r \ty \, \phi(\ty)} = \;\; \prod_{\mathclap{\substack{\ta \subseteq A\\d(a_i,a_j) > 2r \text{ for } i \neq j}}} \; \pi \ext {\phi(\ta)}.  
\]

We remark that the addition of ball quantifiers makes it possible to express $d(x,y) \le r$ by a formula that only assumes values $0$ or $1$,
such as $\Eball {x'} {r} x \, (x' = y)$, which is $r$-local around $x$, or, if the semiring is idempotent, $\Eball {x'} {\frac r 2} x \, \Eball {y'} {\frac r 2} y \, (x'=y')$, which is $\frac r 2$-local around $xy$.
Analogously for $d(x,y) > r$, so we permit the use of distance formulae to simplify notation.
In absorptive semirings, which are the main focus of our positive results, we can then express scattered quantifiers as $\Esc r {y_1,\dots y_m} \, \theta(\ty) \coloneqq \E y_1 \dots \E y_m \big( \bigland_{i < j} d(y_i,y_j) > 2r \land \theta(\ty) \big)$ and $\Asc r {y_1,\dots y_m} \, \theta(\ty) \coloneqq \A y_1 \dots \A y_m \big( \biglor_{i < j} d(y_i,y_j) \le 2r \lor \theta(\ty) \big)$.

\begin{definition}[Local sentence]
\label{defLocalSentence}
A \emph{basic local sentence} is a sentence of the form
\[
    \Esc r {y_1,\dots y_m} \bigwedge_{i\leq m} \phi \loc r (y_i) \quad\text{ or }\quad
    \Asc r {y_1,\dots y_m} \bigvee_{i\leq m} \phi \loc r (y_i).
\]
An \emph{local sentence} is a positive Boolean combination of basic local sentences.
It is \emph{existential} if it only uses basic local sentences of the first kind (with existential scattered quantifiers).
\end{definition}

Based on these notions we can now formulate precisely the questions about Gaifman normal forms in 
semiring semantics:
\begin{bracketise}
\item For which semirings $K$ does every first-order \emph{sentence} have a $K$-equivalent local sentence?
\item For which semirings $K$ is it the case that every first-order \emph{formula} is $K$-equivalent to a positive Boolean combination of local formulae and basic local sentences?
\end{bracketise}

\section{Counterexamples Against Gaifman Normal Forms}
\label{secCounterexamples}

This section presents two examples for which a Gaifman normal form does not exist.
Both use the vocabulary $\tau = \{U\}$ with only unary predicates, so that the Gaifman graph $G(\pi)$ of any $K$-interpretation
$\pi\from\Lit_A(\tau)\ra K$ is trivial and the $r$-neighbourhood of a point, for any $r$, consists only of the point itself.
Thus, local formulae $\phi \loc r (x)$ around $x$ can always be written as positive Boolean combinations of literals $Ux$, $\neg Ux$
and equalities $x=x$, $x \neq x$.
Scattered tuples are simply distinct tuples, so basic local sentences take the form
\[
    \Edistinct {x_1,\dots,x_m} \bigwedge_{i\leq m} \phi \loc r (x_i)\quad\text{  or }\quad \Adistinct {x_1,\dots,x_m} \bigvee_{i\leq m} \phi \loc r (x_i).
\]

\subsection{A Formula Without a Gaifman Normal Form}
\label{secCounterexampleFormula}

Consider the formula $\psi(x) \coloneqq \E y(Uy\land y\neq x)$ which, in classical Boolean semantics, has the
Gaifman normal form $\phi(x) \coloneqq \Edistinct {y,z} \big( (Uy\land Uz) \lor (\neg Ux\land \E y Uy) \big)$.
However, in semiring semantics it is in general not the case that $\psi(x)\equiv_K\phi(x)$.
Indeed, for a semiring interpretation $\pi\from\Lit_A(\{U\})\ra K$ and $a\in A$ we have that
\[
    \pi\ext{\psi(a)}=\sum_{b\neq a}\pi(Ub)\qquad\text{but} \qquad
    \pi\ext{\phi(a)}=\sum_{\{b,c\}\subseteq A\atop b\neq c} \pi(Ub)\pi(Uc) + \pi(\neg Ua)\sum_{b}\pi(Ub).
\]
Here we consider the specific case of a universe with two elements $A=\{a,b\}$ and model-defining $K$-interpretations $\pi_{st}$
with $\pi_{st}(Ua)=s$ and $\pi_{st}(Ub)=t$, where $s,t\in K\setminus \{0\}$ and $s\neq t$.
Then $\pi_{st}\ext{\psi(a)}=t$ but $\pi_{st}\ext{\phi(a)}=st+ts$.
So, unless $K$ is the Boolean semiring, we find elements $s,t$ where
$\pi_{st}\ext{\psi(a)}\neq \pi_{st}\ext{\phi(a)}$, e.g.\ $t=1$, $s=2$ in the natural semiring.

Of course, it might still be the case that there is a different Gaifman normal form
of $\psi(x)$ for  semiring interpretations in a specific semiring $K$.
We prove that this is not the case.

\begin{proposition}
\label{thmCounterexampleFormula}
In any naturally ordered semiring with at least three element, the 
formula $\psi(x) = \E y(Uy\land y\neq x)$ does not have a Gaifman normal form.
\end{proposition}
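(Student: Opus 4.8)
The plan is to argue by contradiction: suppose $\psi(x) = \E y(Uy \land y \neq x)$ had a Gaifman normal form $\chi(x)$, a positive Boolean combination of local formulae $\phi\loc r(x)$ around $x$ and basic local sentences. Since the vocabulary is purely unary, as observed in the preamble to this section, all neighbourhoods are trivial, so every $\phi\loc r(x)$ is a positive Boolean combination of the literals $Ux$, $\neg Ux$ and the trivial (in)equalities $x=x$, $x\neq x$; dropping the trivial equalities, each such local formula is equivalent to a positive Boolean combination of $Ux$ and $\neg Ux$ only. Likewise, each basic local sentence has the form $\Edistinct{x_1,\dots,x_m} \bigland_i \phi\loc r(x_i)$ or its dual. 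The key structural point is that the value $\pi\ext{\chi(a)}$ then depends only on: the two semiring values $\pi(Ua)$ and $\pi(\neg Ua) \in \{0,1\}$, and the multiset of values $\{\pi(Ub) : b \neq a\}$ together with $|A|$ — but, crucially, it cannot isolate a single value $\pi(Ub)$ the way $\pi\ext{\psi(a)} = \sum_{b\neq a}\pi(Ub)$ does.

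Next I would set up a family of test interpretations designed to pin this down. Pick $s,t,u \in K$ pairwise distinct and nonzero (these exist since $K$ has at least three elements and is nontrivial). Over a universe $A = \{a\} \cup B$ with $B$ large, consider model-defining interpretations $\pi$ that assign $Ua$ a fixed value and assign to the elements of $B$ values from $\{s,t,u\}$ in controlled proportions. The idea is to find two interpretations $\pi$, $\pi'$ that agree on everything the normal form $\chi$ can "see" — same value of $Ua$ and $\neg Ua$, same cardinality $|A|$, and, for each local formula type $\phi\loc r$ appearing in $\chi$, the same count of elements $b\in B$ satisfying it, hence the same truth/value pattern of all the basic local sentences and all local formulae around $a$ — yet with $\sum_{b\neq a}\pi(Ub) \neq \sum_{b\neq a}\pi'(Ub)$. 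Since $\chi$ involves only finitely many local formulae, and over the unary vocabulary there are only finitely many inequivalent positive Boolean combinations of $Ub, \neg Ub$ (equivalently, finitely many "behaviours" an element can have: essentially the value of $Ub$ from our finite palette), this is a finite bookkeeping constraint: we need two assignments $B \to \{s,t,u\}$ with the same value-histogram restricted to whatever finite information $\chi$ reads off, but different sums. The cleanest realisation: since $\chi$ can only count how many elements carry each value (it cannot do arithmetic on the values themselves beyond $+$ and $\cdot$ over the whole set), and $s + t + u$ versus, say, $2u + (s+t-u)$-type rearrangements — more carefully, swapping one $s$ and one $t$ for two $u$'s (choosing $u$ so that this is available) changes $\sum \pi(Ub)$ iff $s + t \neq 2u$, which we can ensure by the choice of $s,t,u$ — leaves every count $|\{b : \pi(Ub) = v\}|$ unchanged only if we instead keep the histogram fixed and exploit that $\chi$ cannot separate two elements with *distinct* values that it happens to group together.

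Because $\chi$ genuinely has access to the full histogram of values (via sums like $\E y\, Uy$ and products), the previous paragraph's naive swap is not quite enough, so the real argument is: make $\chi$ confuse $a$ with one of its own neighbours. Take two universes $A = \{a, b_1, \dots, b_n\}$ with $\pi(Ua) = s$, $\pi(Ub_1) = t$, all other $\pi(Ub_i) = u$, versus $A' = \{a', b_1', \dots, b_n'\}$ with $\pi'(Ua') = s$, $\pi'(Ub_1') = t'$ for some fresh nonzero $t' \neq t$ with $t'$ not equal to $s$ or $u$ either (if $K$ has $\ge 4$ values) — or, when $K$ has exactly three nonzero-usable values, use the original two-element example $\pi_{st}$ from the text, which already shows the candidate $\phi$ fails, and then observe that *any* normal form $\chi$, evaluated on the two-element interpretations $\pi_{st}$, is a function only of the pair $(\pi(Ua), \pi(Ub)) = (s,t)$ that is symmetric-in-the-relevant-sense in a way $\psi(a) = t$ is not; concretely $\pi_{st}\ext{\chi(a)}$ must, by the local structure, be expressible as a term built from $s$, the constant for $\neg Ua \in \{0,1\}$, and the symmetric quantity $\sum_{b} \pi(Ub) = s+t$ (since with $|A|=2$ the only non-local information available is this global sum and the fixed value at $a$), whereas $\psi(a)=t$ is not a function of $(s, s+t)$ alone once $K$ is not idempotent-trivial — e.g. $(s,t) = (2,1)$ and $(s,t) = (1,2)$ give the same $s+t=3$ but different $t$. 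This last, concrete two/few-element computation is what I expect to be the crux, and the main obstacle is making the phrase "$\pi\ext{\chi(a)}$ is a function of $\pi(Ua)$, $\pi(\neg Ua)$ and $\sum_{b\neq a}\pi(Ub)$ (plus $|A|$)" into a precise lemma — proved by a routine induction on the structure of local formulae and basic local sentences over the unary vocabulary — and then exhibiting two interpretations agreeing on all of these but not on $\sum_{b\neq a}\pi(Ub)$ alone, which the natural-semiring instance $(s,t)=(2,1)$ vs $(1,2)$ (or its analogue in any naturally ordered semiring with $\ge 3$ elements, using that such a semiring always contains two nonzero elements $s\neq t$ with $s+t$ not determining $t$) supplies.
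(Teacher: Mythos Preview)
Your two-element set-up $\pi_{st}$ over $A=\{a,b\}$ is exactly the right arena, and the paper uses it too. But the structural lemma you propose is misstated, and as stated it cannot yield a contradiction. You claim that $\pi_{st}\ext{\chi(a)}$ is a term built from $s$ and the symmetric quantity $s+t$. This is false: the basic local sentence $\Edistinct{y,z}(Uy\land Uz)$ evaluates on $\pi_{st}$ to $2st$, which is not a semiring polynomial in $s$ and $s+t$. The correct characterisation is that each local formula $\alpha(x)$ is described by a univariate polynomial $p(X)$ with $\pi_{st}\ext{\alpha(a)}=p(s)$, and each basic local sentence by a \emph{symmetric} polynomial $q(X,Y)$ with $\pi_{st}\ext\beta=q(s,t)$; hence any Gaifman normal form evaluates at $a$ to some $f(s,t)=\sum_i h_i(s)\,g_i(s,t)$ with the $g_i$ symmetric.

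Even with the correct form, your swap $(2,1)\leftrightarrow(1,2)$ does not finish the argument: the first coordinate changes, so $f$ could perfectly well return $1$ and $2$ respectively. The missing idea is \emph{monotonicity} with respect to the natural order --- the one hypothesis on $K$ you never use. Since all coefficients lie in $K$, each $h_i$ is monotone; so if $f(1,t)=t$ for some $t>1$, then by symmetry of the $g_i$ we get $f(t,1)=\sum_i h_i(t)g_i(1,t)\ge\sum_i h_i(1)g_i(1,t)=t>1$, whence $f(t,1)\neq 1$. A dual argument with some $\eps<1$ handles absorptive semirings, where no $t>1$ exists. Finally, your closing plan --- ``two interpretations agreeing on all of these but not on $\sum_{b\neq a}\pi(Ub)$'' --- is internally inconsistent, since $\psi(a)$ \emph{is} that sum; if they agree on it there is nothing to separate.
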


For the proof, we describe the values that the building blocks of Gaifman normal forms may assume in $\pi_{st}$.
Recall that a local formula $\alpha(x)$ is equivalent to a positive Boolean combination of literals $Ux$, $\neg Ux$, and equalities.
Since $\pi_{st}(\neg Ux) = 0$ for all $x \in A$, we can ignore negative literals and thus view the evaluation $\pi_{st} \ext {\alpha(a)}$ as an expression built from the
semiring operations, the value $\pi_{st}(Ua) = s$ and constants $0,1$.
Similarly, $\pi_{st} \ext {\alpha(b)}$ is evaluated in the same way, but using $\pi_{st}(Ub) = t$ instead of $s$.
Hence there is a polynomial $p_\alpha(X) \in K[X]$ such that $\pi_{st}\ext{\alpha(a)} = p_\alpha(s)$ and $\pi_{st}\ext{\alpha(b)} = p_\alpha(t)$, for all choices of $s,t$.
For example, $\alpha \loc r (x) = Ux \lor \Eballtau \tau y r x (x \neq y \land Uy)$ can equivalently be written as $Ux \lor (x \neq x \land Ux)$,
which corresponds to the polynomial $p(X) = X + (0 \cdot X) = X$.
For the evaluation of a basic local sentence $\beta=\Edistinct{y,z} (\alpha(y)\land \alpha(z))$,
we have $\pi_{st} \ext \beta = \pi_{st} \ext {\alpha(a) \land \alpha(b)} + \pi_{st} \ext {\alpha(b) \land \alpha(a)} = p_\alpha(s) p_\alpha(t) + p_\alpha(t) p_\alpha(s)$.
That is, $\beta$ can be described by a polynomial $p_\beta(X,Y) \in K[X,Y]$ such that $\pi_{st} \ext \beta = p_\beta(s,t)$ and $p_\beta$ is \emph{symmetric} (that is, $p_\beta(X,Y) = p_\beta(Y,X)$).
The same holds for universal basic local sentences $\beta=\Adistinct{y,z} (\alpha(y)\lor \alpha(z))$.

It follows that we can represent every formula $\phi(x)$ in Gaifman normal form by a polynomial
$f_\phi(X,Y) = \sum_{i} h_i(X) g_i(X,Y)$, where the $g_i$ are symmetric,
such that $\pi_{st}\ext{\phi(a)} = f_\phi(s,t)$ for all $s,t$.
\Cref{thmCounterexampleFormula} then follows from the following algebraic observation.

\begin{lemma}  Let $K$ be a naturally ordered semiring with at least three elements.
For any polynomial $f(X,Y) = \sum_{i} h_i(X) g_i(X,Y)$ 
where the $g_i$ are symmetric polynomials, there
exist values $s,t\in K \setminus \{0\}$ such that $f(s,t)\neq t$.
\end{lemma}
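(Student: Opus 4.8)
The plan is to argue by contradiction: assume $f(s,t)=t$ for all $s,t$ and derive that the natural order of $K$ collapses. The engine of the proof is the observation that in a naturally ordered semiring every polynomial map $K\to K$ with coefficients in $K$ is \emph{monotone} for the natural order: if $u\le v$, say $v=u+c$, then $u^j\le v^j$ for all $j$ by expanding $(u+c)^j$, and since addition and multiplication are monotone it follows that $h(u)\le h(v)$ for every $h\in K[X]$. This replaces the classical argument ``polynomials that agree on all of $K$ are equal'', which is unavailable here because $K$ may be finite (e.g.\ a min-max semiring with three elements) and may even have zero divisors (e.g.\ a bounded distributive lattice).

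The first step is to produce two \emph{comparable} and \emph{distinct} elements $u<v$ of $K$, both different from $0$. Since $|K|\ge 3$, fix some $a\in K\setminus\{0,1\}$. The elements $1$ and $1+a$ are always comparable, since $1\le 1+a$. If $1\ne 1+a$, set $u=1$ and $v=1+a$; note $1+a\ne 0$, since $1\le 1+a$ would otherwise give $1\le 0$, i.e.\ $1=0$. Otherwise $1=1+a$, which yields $a\le 1$, and since $a\ne 1$ we set $u=a$ and $v=1$. In both cases $u<v$ and $u,v\ne 0$. This is exactly where the hypothesis $|K|\ge 3$ is used: for the Boolean semiring no such pair exists, matching the fact that $\psi(x)$ \emph{does} have a Gaifman normal form there. (For the application in \cref{thmCounterexampleFormula} one needs $s,t\ne 0$, because the polynomial $f$ only captures the values of local sentences and formulae on model-defining $\pi_{st}$, which forces $s,t\ne 0$.)

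The second step is the computation. Evaluating the assumed identity at $(u,v)$ and at $(v,u)$, and writing $c_i\coloneqq g_i(u,v)=g_i(v,u)$ — this is the one place the symmetry of the $g_i$ is essential, so that both evaluations use the same coefficients $c_i$ — we get
\[
  \sum_i h_i(u)\,c_i \;=\; f(u,v) \;=\; v
  \qquad\text{and}\qquad
  \sum_i h_i(v)\,c_i \;=\; f(v,u) \;=\; u .
\]
Since $u\le v$, monotonicity of each $h_i$ gives $h_i(u)\le h_i(v)$, hence $h_i(u)\,c_i\le h_i(v)\,c_i$ for every $i$; summing yields $v=\sum_i h_i(u)c_i\le \sum_i h_i(v)c_i=u$. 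Together with $u\le v$ and antisymmetry of the natural order, this forces $u=v$, contradicting $u\ne v$. Therefore there must exist $s,t\in K$ with $f(s,t)\ne t$.

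The proof is short, and there is no single hard step; the points that require care are (a) recording that polynomial evaluation is monotone in naturally ordered semirings, which is what makes the whole order-theoretic argument go through uniformly for finite semirings and for semirings with zero divisors, and (b) the extraction of a comparable nonzero pair $u<v$, which is precisely where three elements are needed. The symmetry assumption on the $g_i$ enters exactly once, to ensure $g_i(u,v)=g_i(v,u)$.
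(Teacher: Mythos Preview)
Your proof is correct and follows essentially the same approach as the paper: both arguments locate a comparable nonzero pair $u<v$ (via the dichotomy $1+a=1$ versus $1+a\neq 1$, which is exactly the paper's absorptive/non-absorptive split specialised to a single element) and then combine monotonicity of polynomial evaluation with the symmetry of the $g_i$ to force $v\le u$. The only cosmetic difference is that you phrase it as a single contradiction argument, whereas the paper runs the monotonicity step separately in each case as a direct proof; your packaging is arguably tidier, and since you only invoke the hypothesis at the two points $(u,v)$ and $(v,u)$, your argument in fact yields nonzero witnesses, matching the application.
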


\begin{proof} We first consider the case that the semiring $K$ is \emph{absorptive}, that is, $1 + r = 1$ for all $r \in K$ (or, equivalently, that $1$ is the maximal element).
Pick any $\eps \in K \setminus \{0,1\}$.
Then $0 < \eps < 1$ by minimality/maximality of $0,1$.
Consider $(s,t) = (\eps,1)$.
If $f(s,t) \neq 1$ we are done, so suppose $f(s,t) = \sum_i h_i(\eps) \cdot g_i(\eps,1) = 1$.
We now switch to $(s,t) = (1,\eps)$.
By symmetry, we have $g_i(\eps,1) = g_i(1,\eps)$ for all $i$.
Note that both semiring operations, and hence all polynomials, are monotone w.r.t.\ the natural order.
We can thus conclude
\[
    f(1,\eps) = \sum_i h_i(1) \cdot g_i(1,\eps) \ge \sum_i h_i(\eps) \cdot g_i(1,\eps) = \sum_i h_i(\eps) \cdot g_i(\eps,1) = 1 > \eps.
\]

Now assume that $K$ is naturally ordered, but \emph{not} absorptive. 
We claim that there is $t\in K$ with $t > 1$.
Indeed, since $K$ is not absorptive there is $r \in K$ with $1 + r \neq 1$.
Since $1 + r \ge 1$ by definition of the natural order, we must have $t \coloneqq 1 + r > 1$.
If $f(1,t) \neq t$ we are done, so suppose $f(1,t) = t$.
By symmetry of $g_i$ and monotonicity, we conclude
\[
    f(t,1) = \sum_i h_i(t) \cdot g_i(t,1) \ge \sum_i h_i(1) \cdot g_i(1,t) = f(1,t) = t \;>\; 1. \qedhere
\]
\end{proof}

\subsection{A Sentence Without a Gaifman Normal Form}
\label{secCounterexampleTropical}

While Gaifman normal forms need not exist for formulae, in all relevant semirings beyond the Boolean one,
they might still exist for sentences.
Indeed, we shall prove a positive result for min-max semirings.
However, such a result seems only possible for semirings where both operations are idempotent, similar to Hanf's theorem.
For other semirings one can find rather simple counterexamples, as we illustrate for the tropical semiring $\Trop= (\Real_{+}^{\infty}, \min, +, \infty, 0)$.
Notice the unusual operations and neutral elements: in particular, true equalities are interpreted by $0$ and false equalities by $\infty$.

\begin{proposition} The sentence $\psi \coloneqq \exists z \forall x \exists y (Uy \vee x = z)$
has no Gaifman normal form in the tropical semiring.
\end{proposition}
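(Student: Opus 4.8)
The goal is to show that $\psi := \exists z \forall x \exists y\,(Uy \vee x = z)$ has no Gaifman normal form in the tropical semiring $\Trop$. The strategy mirrors the previous counterexample: first compute $\pi\ext\psi$ on a family of simple $K$-interpretations, then show that no local sentence can match these values. I would work with the vocabulary $\tau = \{U\}$ (unary), so that all Gaifman graphs are trivial, $r$-neighbourhoods are singletons, and local sentences are positive Boolean combinations of the degenerate basic local sentences $\Edistinct{x_1,\dots,x_m}\bigwedge_i \phi\loc r(x_i)$ and $\Adistinct{x_1,\dots,x_m}\bigvee_i \phi\loc r(x_i)$, where each $\phi\loc r(x)$ reduces to a positive Boolean combination of $Ux$, $\neg Ux$, $x=x$, $x\neq x$.

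\textbf{Computing the value of $\psi$.} On a model-defining $K$-interpretation $\pi$ over a universe $A$ that tracks only positive information, we have $\pi(\neg Ua) \in \{0, 1\}$, so $\pi\ext{Uy \vee x = z}$ equals $\pi(Uy) \min (\text{$0$ if $x = z$, else $\infty$})$ — i.e., in tropical terms, $\min(\pi(Uy), [x=z])$ where $[x=z]$ is $0$ when $x=z$ and $\infty$ otherwise. Then $\pi\ext{\exists y(Uy \vee x=z)} = \min_{b\in A}\min(\pi(Ub),[x=z])$, which is $0$ if $x = z$ (since $[x=z]=0$ contributes a $0$), and otherwise $\min_b \pi(Ub) =: \mu$. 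Next $\pi\ext{\forall x \dots} = \sum_{a\in A}(\text{$0$ if $a=z$, else $\mu$}) = (|A|-1)\cdot \mu$. Finally $\pi\ext\psi = \min_{c\in A}\big((|A|-1)\cdot\mu\big) = (|A|-1)\cdot\mu$, independent of which $c$ we pick. So the key phenomenon: the value of $\psi$ scales \emph{linearly with the universe size}, namely $\pi\ext\psi = (n-1)\mu$ where $n = |A|$ and $\mu = \min_{b} \pi(Ub)$. I would then instantiate, for each $n$, the interpretation $\pi_n$ with universe of size $n$ and $\pi_n(Ua) = 1$ (the tropical $1$ is the real number $0$; better to pick $\pi_n(Ua) = c$ for some fixed real $c > 0$ to avoid the $\mu = 0$ degeneracy) — giving $\pi_n\ext\psi = (n-1)c$, which is unbounded in $n$.

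\textbf{Bounding what local sentences can do.} The main obstacle is to argue that \emph{no} local sentence $\phi$ can reproduce this unbounded-growth behaviour while also matching $\psi$ on small structures. The idea: a basic local sentence $\Edistinct{x_1,\dots,x_m}\bigwedge_i \phi\loc r(x_i)$ evaluates (tropically, i.e.\ with $\sum = \min$, $\prod = +$) to $\min$ over all injective $m$-tuples $(b_1,\dots,b_m)$ of $\sum_i \alpha(b_i)$, where $\alpha(b)$ is the tropical value of the local formula at $b$ — a $\min$ of sums of copies of $\pi(Ub)$'s and $0,\infty$ constants. On the interpretations $\pi_n$ where all atoms have the same value $c$, $\alpha(b)$ takes one of finitely many values (at most the number of "shapes" of the Boolean combination), say in $\{k_1 c, k_2 c, \dots\} \cup \{0,\infty\}$, \emph{not depending on $n$} once $n \ge m$; hence the basic local sentence evaluates to a value of the form $j\cdot c$ for some fixed $j$ (or $0$ or $\infty$), \emph{independent of $n$} for all large $n$. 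A positive Boolean combination (built from $\min$ and $+$) of such $n$-independent values is again $n$-independent for large $n$. This contradicts $\pi_n\ext\psi = (n-1)c \to \infty$. The care needed is: (i) the dual sentences $\Adistinct{\dots}\bigvee_i\phi\loc r(x_i)$ evaluate to $\sum_{\text{inj. tuples}}(\min_i \alpha(b_i))$ — now a \emph{sum} over $\sim n^m$ tuples, which \emph{does} grow with $n$; so I must check this growth is polynomial of degree $\ge 2$ when it grows, or grows too fast, whereas $\psi$ grows exactly linearly. So the clean argument is a \emph{growth-rate} dichotomy: show every local sentence, evaluated on $\pi_n$, is for large $n$ either eventually constant or grows like $\Theta(n^k)$ for some $k \ge$ determined by the maximal scattered block appearing positively — and crucially, by choosing the atom values cleverly (e.g.\ comparing $\pi_n$ with atoms all equal to $c$ versus a perturbed version), one pins down that the only way to get linear growth forces the Boolean combination to reduce to something matching $\sum_b \pi(Ub)$, which then fails on an interpretation with two distinct atom values exactly as in the formula counterexample of \cref{secCounterexampleFormula}. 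I expect the cleanest write-up actually contrasts two families: $\pi_n$ with constant atoms (to get the $(n-1)c$ growth and force the "degree" of the normal form to be $1$), and then a two-element interpretation $\pi_{st}$ with $\pi(Ua) = s \neq t = \pi(Ub)$ (tropically, distinct positive reals) to derive a contradiction with any degree-$1$ local sentence, since on two elements $\psi$ gives $\min(s,t)$ whereas any degree-$1$ local sentence must, after unfolding, give a $\min$/$+$ expression symmetric enough to fail to equal $\min(s,t)$ — analogous to the symmetric-polynomial obstruction in the previous lemma. The delicate point, and the part I'd budget the most space for, is making the notion of "degree/growth-rate of a local sentence over $\pi_n$" precise and robust under positive Boolean combinations in the $(\min,+)$-semiring.
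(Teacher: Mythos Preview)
Your setup is correct and matches the paper's: on $\pi_n$ with $n$ elements all carrying $U$-value $c>0$ (as a real number), one has $\pi_n\ext\psi=(n-1)c$; every $r$-local formula $\phi\loc r(x)$ takes a fixed value $c_\phi\in\{0,c,2c,\dots\}\cup\{\infty\}$ on every element of every $\pi_n$; an existential basic local sentence evaluates to the constant $mc_\phi$, and a universal one to $n(n-1)\cdots(n-m+1)\,c_\phi$.

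Where you overcomplicate things is the finish. You worry that a universal basic local sentence with $m=1$ grows linearly and might in principle match $(n-1)c$, and then propose a second family $\pi_{st}$ to rule this out. That detour is unnecessary. The dichotomy is sharper than a $\Theta$-style growth classification: since $c_\phi$ is either $0$, $\infty$, or at least $c$, every non-constant basic local sentence has value $\ge nc$ for all large $n$. A positive Boolean combination (i.e.\ an expression in $\min$ and $+$) of constants and terms that are eventually $\ge nc$ is again either eventually constant or eventually $\ge nc$: taking $\min$ with a constant yields that constant for large $n$, and adding anything non-negative to a term $\ge nc$ keeps it $\ge nc$. But $(n-1)c$ is unbounded yet strictly less than $nc$ for every $n$, so it lies in neither class. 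That is the entire argument.

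The paper does exactly this with the specific choice $c=1$, so that $c_\phi\in\N\cup\{\infty\}$ and the dichotomy reads ``eventually constant or eventually $\ge n$'' versus $\pi_n\ext\psi=n-1$. Your two-family plan could perhaps be made to work, but the single family $\pi_n$ already suffices once you track the \emph{inequality} $\ge nc$ rather than merely the asymptotic order.
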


\begin{proof} We consider,  for every $n\geq 1$, the
$\Trop$-interpretations $\pi_n$ with universe $\{a_1,\dots,a_n\}$ such that
$\pi_n(Ua_i)=1$  (and hence $\pi_n(\neg Ua_i)=\infty$) for all $n$ and all $i$.
Clearly $\pi_n\ext\psi= n-1$ for all $n\geq 1$.

Recall that every local formula $\phi^{(r)}(x)$ is just a positive Boolean combination of literals $Ux$, $\neg Ux$,
and $x=x$, which means that there exists a fixed constant $c_\phi\in\N\cup\{\infty\}$ such that 
$\pi_n\ext{\phi^{(r)}(a_i)}=c_\phi$ for all $n$ and all $i$. It follows that for the basic local sentences
we have that 
\begin{align*}
    \pi_n\ext{ \Edistinct {x_1,\dots,x_m} \bigwedge_{i\leq m} \phi \loc r (x_i)}&=m c_\phi,\\ 
    \pi_n\ext{\Adistinct {x_1,\dots,x_m} \bigvee_{i\leq m} \phi \loc r (x_i)}&= n \cdot (n-1) \cdots (n-m+1) \cdot c_\phi \ge n c_\phi.
\end{align*}
For a local sentence $\eta$, we thus have that $\pi_n \ext \eta$ is an expression built from the operations $\min$ and $+$, terms of the form $n \cdots (n-m+1)$ and constants $c \in \N \cup \{\infty\}$.
Hence either $\pi_n \ext \eta \ge n$ for all sufficiently large $n$, or $\pi_n \ext \eta = c$ for some constant $c$.
Thus $\eta\not\equiv_{\Trop}\psi$. 
\end{proof}

A similar construction works for the natural semiring $(\N,+,\cdot,0,1)$
and we conjecture that it can be adapted to any infinite semiring with operations that are not idempotent.

\section{Gaifman's Theorem for Min-Max Semirings}
\label{secGaifmanProof}

In this section, we prove our main result: a version of Gaifman's theorem for sentences evaluated in min-max semirings (which can be lifted to lattice semirings, see \cref{secStrengthening}).
We write $\Minmax$ for the class of min-max semirings and refer to $\meq$ as \emph{minmax-equivalence}.
Further, we use the notation $\mle$ (and similarly $\mge$), where $\phi \mle \psi$ means that $\pi \ext {\phi} \le \pi \ext {\psi}$ for every $K$-interpretation $\pi$ in a min-max semiring $K$.

\begin{theorem}[Gaifman normal form]
\label{thmGaifman}
Let $\tau$ be a finite relational signature.
Every $\FO(\tau)$-sentence $\psi$ is minmax-equivalent to a local sentence.
\end{theorem}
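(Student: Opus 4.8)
The plan is to prove Gaifman's theorem for min-max semirings via a quantifier-elimination argument, working from the inside out on the quantifier structure of $\psi$. The key idea is to maintain, as an invariant, a normal form that is slightly richer than "local sentence": something like a positive Boolean combination of basic local sentences and *local formulae with free variables*, from which one quantifier can be peeled off at a time. So first I would set up the right induction hypothesis. Since $\psi$ is a sentence, after stripping off the outermost quantifier $Qx\,\phi(x)$ I must handle a formula $\phi(x)$ with one free variable; inductively I want $\phi(x)$ already in a suitable normal form, and then I need to show $Qx\,\phi(x)$ can be brought back into that normal form. The real content is therefore a statement of the shape: every formula $\phi(\tx)$ is minmax-equivalent to a positive Boolean combination of (i) $r$-local formulae $\alpha\loc r(\tx)$ and (ii) basic local sentences — and crucially this class is closed under adding a single existential or universal quantifier over a free variable.

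**Next** I would carry out the inductive step, which splits according to whether the newly quantified variable $x$ is "close to" or "far from" the rest. Here the min-max structure is essential: because $\meet=\min$ and $\join=\max$ are idempotent and the semiring is totally ordered, a product $\prod_{a\in A}$ or sum $\sum_{a\in A}$ over the universe only sees the *set* of values taken, not multiplicities, and moreover $\min$/$\max$ interact well with Boolean combinations. Concretely, for $\E x\,\phi(x)$ I want to split the universe into the neighbourhood $\ball r \ty$ of the already-present free variables $\ty$ and its complement. The part inside $\ball r \ty$ gets absorbed into a ball quantifier $\Eball x {r'} \ty$, extending the locality radius (using the Requantifying property spelled out before \cref{defLocalFormula}); the part outside contributes a "there exist scattered elements satisfying the local type $\alpha\loc r(x)$ far from $\ty$" statement, which is exactly the shape of a basic local sentence once $\ty$ is itself quantified away. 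Managing the bookkeeping of radii along the quantification dag $D(\phi)$, and verifying that at each stage all paths to free variables stay bounded, is the routine-but-delicate part. Distributing $\min$/$\max$ over the finite products/sums to isolate these two regions, and checking the resulting identities are genuine minmax-equivalences, is where the idempotency is used repeatedly.

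**The main obstacle** I expect is the $\forall$ case together with the interaction of negation. In Boolean semantics one typically reduces $\forall$ to $\exists$ via $\neg$, but here we are confined to negation normal form, so the $\forall x\,\phi(x)$ step must be handled directly and dually: instead of "scattered elements satisfying a local formula" we need the dual basic local sentences $\Asc r {\ty}\,\bigvee_i\alpha\loc r(y_i)$, and I must check the whole normal-form class is closed under these dual operations symmetrically. The subtle point is that a naive dualisation would want to push a negation inside a local formula, which is fine (local formulae are closed under $\nnf$ of negation since ball quantifiers dualise to ball quantifiers and literals to literals), but one must be careful that the complement region "$A\setminus\ball r\ty$" is described without a relativised quantifier that would multiply in spurious edge values — this is exactly the pitfall flagged in the footnote after \cref{defLocalSentence}, and it is why distance formulae must be the $0/1$-valued ball-quantifier versions like $\Eball {x'} r x\,(x'=y)$ rather than the Boolean FO distance formula. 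Once that is set up correctly, the far-region contribution for $\forall$ becomes a dual basic local sentence and the induction closes.

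**Finally** I would note that the base case — formulae of quantifier rank $0$, i.e. positive Boolean combinations of literals and (in)equalities — are already $0$-local (with the convention that a literal $R\tx$ forces all its variables into a common neighbourhood, which is consistent with the Gaifman graph), so no work is needed there; and that applying the invariant to the sentence $\psi$ itself, all free variables have been eliminated, leaving a positive Boolean combination of basic local sentences, i.e. a local sentence in the sense of \cref{defLocalSentence}. I would also remark (anticipating \cref{secStrengthening}) that nowhere in this argument did I introduce a negation that was not already present — literals stay literals, ball quantifiers dualise to ball quantifiers — which is what yields the negation-free strengthening in Boolean semantics as a by-product.
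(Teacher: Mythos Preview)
Your proposed induction invariant is false, and the paper itself contains the counterexample. You claim that every formula $\phi(\tx)$ with free variables is minmax-equivalent to a positive Boolean combination of $r$-local formulae around $\tx$ and basic local sentences, and that this class is closed under adding one quantifier. But take $\phi(x,y)=Uy\land y\neq x$: this \emph{is} a $0$-local formula around $xy$, yet $\E y\,\phi(x,y)$ is precisely the formula of \cref{secCounterexampleFormula} (\cref{thmCounterexampleFormula}), which has \emph{no} Gaifman normal form in any min-max semiring with at least three elements. So the very first application of your inductive step fails. The breakdown occurs exactly where you hand-wave: the ``far from $\ty$'' contribution after splitting $\E x$ into near/far is not a basic local sentence but an \emph{outside quantifier} $\Enotball x r \ty\,\alpha\loc r(x)$ that still depends on the free variables $\ty$, and there is no minmax-equivalent way to decouple it from $\ty$ while $\ty$ is free.

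The paper's proof is organised to avoid ever stating an invariant about formulae. It works entirely with \emph{sentences} and reduces the number of quantifier alternations: the inductive step takes $Q_1\tx_1\cdots Q_{n+1}\tx_{n+1}\,\phi\loc r$, uses the \AbstractionLemma{} to replace atoms involving the outer variables $\tx_1,\dots,\tx_{n-1}$ by fresh unary predicates (so the inner $Q_nQ_{n+1}$ block becomes a genuine sentence), applies the $\E^*\A^*$ case to that sentence, and then swaps the resulting quantifiers via \cref{thmGaifmanQuantifierSwap} to merge with $Q_{n-1}$. All the real work is in the $\E^*\A^*$ case (\cref{thmGaifmanAlternation1}--\cref{thmGaifmanQuantifiersNotball} and the final elimination step), which does produce outside quantifiers $\Anotball z r \ty$ as you anticipated, but then spends several pages collapsing them to a single outside quantifier and finally eliminating it via a delicate combinatorial argument on scattered witnesses. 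Your sketch contains none of this machinery, and the classical one-quantifier-at-a-time approach you outline is essentially Gaifman's original argument, which the paper explicitly notes does not transfer because it relies on encoding case distinctions via negation.
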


Contrary to Hanf's locality theorem, we cannot follow the classical proofs of Gaifman's theorem.
For instance, the proof in \cite{EbbinghausFlu99} is based on the Ehrenfeucht-Fra\"{\i}ss\'e method and makes use of characteristic sentences, which in general do not exist in semiring semantics over min-max semirings (cf.\ \cite{GraedelMrk21}).
Gaifman's original proof \cite{Gaifman82} is a constructive quantifier elimination argument (which is similar to our approach), but makes use of negation to encode case distinctions in the formula.
In semiring semantics, this is only possible to a limited degree (e.g., $\theta \lor \neg\theta$ is not guaranteed to evaluate to $1$, so we may have $\phi \not\meq \phi \land (\theta \lor \neg\theta)$).
Another argument why Gaifman's proof does not go through is that it applies to formulae, whereas formulae need not have Gaifman normal forms in our setting (cf.\ \cref{secCounterexampleFormula}).

Instead, we present a novel proof of Gaifman's theorem that applies to the Boolean case as well as to min-max semirings.
While our strategy is similar to Gaifman's -- a constructive elimination of quantifier alternations -- we have to phrase all results in terms of sentences and need to be more careful to derive equivalences that hold in all min-max semirings.
This turns out to be surprisingly difficult, but in the end also leads to a stronger version of Gaifman's classical result (see \cref{secStrengthening}).

\subsection{Toolbox}

The proof consists of a series of rather technical equivalence transformations, but is based on a few key observations that we present below.
For the remaining section, we fix a finite relational signature $\tau$ and a min-max semiring $K$, unless stated otherwise.

\subparagraph*{Normal forms.}
All min-max semirings share algebraic properties such as distributivity and idempotency of both operations with the Boolean semiring, so many classical equivalences of first-order logic are also minmax-equivalences.
For instance, every semiring is distributive which implies that $\land$ distributes over $\lor$.
In min-max semirings, we further have the dual property ($\max$ distributes over $\min$), thus also $\phi_1 \lor (\phi_2 \land \phi_3) \meq (\phi_1 \lor \phi_2) \land (\phi_1 \lor \phi_3)$.
This means that we can transform any positive Boolean combination into a minmax-equivalent \emph{disjunctive normal form} (DNF), and into \emph{conjunctive normal form} (CNF).
Moreover, one easily verifies that we can push quantifiers to the front (tacitly assuming distinct variable names), so it suffices to prove \cref{thmGaifman} for sentences in \emph{prenex normal form}.
We list the relevant minmax-equivalences below.

\begin{lemma}
\label{thmAxioms}
For all FO-formulae,
\begin{enumerate}
\item $\phi \land \phi \meq \phi \lor \phi \meq \phi$, \label{axiomIdempotent}
\item $\phi \lor (\phi \land \psi) \meq \phi \land (\phi \lor \psi) \meq \phi$, \label{axiomAbsorbtion}
\item $\phi_1 \land (\phi_2 \lor \phi_3) \meq (\phi_1 \land \phi_2) \lor (\phi_1 \land \phi_3)$, \label{axiomDistrib}
\item $\phi_1 \lor (\phi_2 \land \phi_3) \meq (\phi_1 \lor \phi_2) \land (\phi_1 \lor \phi_3)$, \label{axiomDualDistrib}
\item $\phi \circ (\E x \, \theta) \meq \E x (\phi \circ \theta)$, for $\circ \in \{\lor,\land\}$, if $x$ does not occur free in $\phi$, \label{axiomExistsDistrib}
\item $\phi \circ (\A x \, \theta) \meq \A x (\phi \circ \theta)$, for $\circ \in \{\lor,\land\}$, if $x$ does not occur free in $\phi$. \label{axiomForallDistrib}
\item $\E x \, (\phi \lor \theta) \meq (\E x \,\phi) \lor (\E x \,\theta)$, and analogously for $\A$ and $\land$.
\end{enumerate}
\end{lemma}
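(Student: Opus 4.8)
The plan is to reduce everything to elementary order theory. By definition of $\meq$ it suffices to fix an arbitrary min-max semiring $K=(K,\max,\min,0,1)$, so that $(K,\le)$ is a bounded chain with least element $0$ and greatest element $1$, an arbitrary model-defining $K$-interpretation $\pi$ over a finite (nonempty) universe $A$, and an arbitrary instantiating tuple, and to check that both sides of each equivalence receive the same value under $\pi\ext{\cdot}$. In a min-max semiring the semiring semantics unfolds so that $\pi\ext{\psi\lor\theta}=\max(\pi\ext{\psi},\pi\ext{\theta})$, $\pi\ext{\psi\land\theta}=\min(\pi\ext{\psi},\pi\ext{\theta})$, $\pi\ext{\E x\,\theta}=\max_{a\in A}\pi\ext{\theta}$ (with $x:=a$) and $\pi\ext{\A x\,\theta}=\min_{a\in A}\pi\ext{\theta}$. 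Hence each of the seven claims becomes a purely order-theoretic identity that has to hold in every bounded chain, and verifying those identities proves the lemma.

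For the propositional items (1)--(4) these are exactly the defining identities of a bounded distributive lattice on $(K,\max,\min)$. Idempotency (1) is immediate from $\max(v,v)=v=\min(v,v)$; absorption (2) from $\min(v,w)\le v\le\max(v,w)$; and the two distributive laws, namely $\min(a,\max(b,c))=\max(\min(a,b),\min(a,c))$ for (3) and its dual $\max(a,\min(b,c))=\min(\max(a,b),\max(a,c))$ for (4), hold in any linear order by a short case split on the relative order of the three values (equivalently, one may simply invoke the fact that every chain is a distributive lattice). In each case I would record the resulting one-line computation on $\pi\ext{\cdot}$.

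For the quantifier items (5)--(7) the hypothesis that $x$ is not free in $\phi$ makes the value $v:=\pi\ext{\phi}$ independent of the witness chosen for $x$. For (5) with $\circ=\land$ this gives $\pi\ext{\E x(\phi\land\theta)}=\max_{a\in A}\min(v,\pi\ext{\theta})$, and applying the finitary version of distributivity, $\min(a,\max_i b_i)=\max_i\min(a,b_i)$, which follows from (3) by induction on $|A|$ using that $A$ is nonempty, rewrites this as $\min\!\big(v,\max_{a\in A}\pi\ext{\theta}\big)=\pi\ext{\phi\land\E x\theta}$; the case $\circ=\lor$ needs only associativity, commutativity and idempotency of $\max$ over the nonempty family $\{v\}\cup\{\pi\ext{\theta}:a\in A\}$. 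Item (6) is the exact dual, with $\min$ replacing $\max$ and (4) replacing (3). Item (7) is the easiest: $\pi\ext{\E x(\phi\lor\theta)}=\max_{a\in A}\big(\pi\ext{\phi}+\pi\ext{\theta}\big)$ splits as $\max_{a\in A}\pi\ext{\phi}\;+\;\max_{a\in A}\pi\ext{\theta}$ by associativity and commutativity of $+$ alone, so it, and its $\A/\land$ dual via $\cdot$, holds over every semiring, not only min-max ones.

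There is no genuine obstacle here; the lemma is a bookkeeping exercise. The only two points meriting a word of care are that the binary distributive laws (3) and (4) must first be lifted to finite families before they can be used underneath a quantifier (a routine induction on $|A|$), and that, as is standard for this flavour of equivalence, the existential cases of (5) and (7) rely on the universe being nonempty.
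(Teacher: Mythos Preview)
Your proposal is correct and takes essentially the same approach as the paper, which dispatches the lemma in a single line: ``Direct consequences of the algebraic properties of min-max semirings: idempotence, absorption, distributivity and commutativity.'' You have simply unpacked that sentence, spelling out which identity on $(K,\max,\min)$ is needed for each item and noting the finite-family extension of distributivity required under a quantifier; this is exactly the verification the paper leaves implicit.
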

\begin{proof}
Direct consequences of the algebraic properties of min-max semirings: idempotence, absorption, distributivity and commutativity.
\end{proof}

\subparagraph*{Symmetry.}
It will be convenient to exploit the inherent symmetry of min-max semirings to simplify the following proofs.
For example, the proof of \cref{thmGaifman} for sentences of the form $\E \ty \A \tx \ \phi \loc r (\ty,\tx)$ is completely symmetric to the one for sentences $\A \ty \E \tx \ \phi \loc r (\ty,\tx)$, by exchanging minima and maxima in the proof.

This insight can be formalised. With a formula $\phi$ we associate its \emph{dual} $\dual \phi$, resulting from $\phi$ by swapping $\E$, $\A$ (including scattered and ball quantifiers) and also $\lor$, $\land$, e.g.:
\[
    \dual{(\Esc r {x,y} \; \Aball z r x \, (\theta(z) \lor \psi(y))} = \Asc r {x,y} \; \Eball z r x \, (\dual\theta(z) \land \dual\psi(y)).
\]

\begin{lemma}[Symmetry]
\label{thmSymmetry}
If a sentence $\psi$ is minmax-equivalent to a local sentence, then so is $\dual\psi$.
\end{lemma}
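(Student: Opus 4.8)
The plan is to establish a general duality principle for minmax-equivalence and then apply it to the class of local sentences. First I would prove that the dualization map $\phi \mapsto \dual\phi$ corresponds, semantically, to order-reversal in the semiring. Concretely, given a min-max semiring $K = (K,\max,\min,0,1)$ with order $\le$, let $\dual K$ denote the same underlying set with the reversed order $\ge$; this is again a min-max semiring $(K,\min,\max,1,0)$, now with $1$ as the minimal element and $0$ as the maximal element. The key observation is that for any $K$-interpretation $\pi$ that tracks only positive information, dualization swaps the roles of $0$ and $1$ on negative literals in a compatible way, and one checks by a routine induction on formula structure that $\pi \ext{\phi}$ computed in $K$ equals $\dual\pi \ext{\dual\phi}$ computed in $\dual K$, where $\dual\pi$ agrees with $\pi$ on positive literals and sends negative literals to their value under the swap $0 \leftrightarrow 1$. (The induction cases are exactly the clauses in the definition of semiring semantics: $\lor$ becomes $\land$ under $\max \leftrightarrow \min$, $\E$ becomes $\A$, ball quantifiers dualize, and literals are handled by the model-defining and positive-tracking assumptions; there is nothing to check for equalities since they take the same values $0,1$ in both.) The crucial point here is that $\dual K$ ranges over \emph{all} min-max semirings as $K$ does — the class $\Minmax$ is closed under order reversal — so minmax-equivalence is preserved: $\phi \meq \psi$ if and only if $\dual\phi \meq \dual\psi$.

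Next I would verify that the class of local sentences is closed under dualization. This is immediate from \cref{defLocalSentence} and the definition of $\dual{(\cdot)}$: the dual of a basic local sentence $\Esc r {\ty} \bigwedge_i \phi\loc r(y_i)$ is $\Asc r {\ty} \bigvee_i \dual\phi\loc r(y_i)$, which is again a basic local sentence of the second kind, and vice versa; here one uses that $\dual\phi\loc r(y_i)$ is still an $r$-local formula, since dualization only swaps $\E/\A$ in ball quantifiers and $\lor/\land$, leaving the quantification dag $D(\phi)$ — and hence the locality radius — unchanged. Since $\dual{(\cdot)}$ also commutes with positive Boolean combinations (swapping $\lor$ and $\land$), the dual of any local sentence is a local sentence.

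Combining the two facts gives the lemma: if $\psi \meq \eta$ for some local sentence $\eta$, then $\dual\psi \meq \dual\eta$ by the duality principle, and $\dual\eta$ is a local sentence by closure, so $\dual\psi$ is minmax-equivalent to a local sentence as claimed.

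The main obstacle is the semantic duality identity $\pi\ext{\phi} = \dual\pi\ext{\dual\phi}$ and, in particular, pinning down precisely how the valuation $\pi$ should be transformed so that the induction goes through cleanly on negative literals. One must be careful that a model-defining interpretation that tracks only positive information in $K$ maps to an interpretation with the same properties in $\dual K$ — swapping $0 \leftrightarrow 1$ turns the positive-tracking condition $\pi(\neg L) \in \{0,1\}$ into $\dual\pi(\neg L) \in \{1,0\}$, which is the same condition, and model-definingness is symmetric in $L, \neg L$ — but positive literals should \emph{not} be touched, and verifying this asymmetric transformation respects the semantic clauses is where the real content lies. Once this identity is in place, everything else is bookkeeping.
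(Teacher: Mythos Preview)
Your overall strategy---prove a semantic duality identity $\pi\ext{\phi}_K = \dual\pi\ext{\dual\phi}_{\dual K}$ via the order-reversed semiring $\dual K$, verify closure of local sentences under $\dual{(\cdot)}$, and combine---is exactly what the paper does. The difference lies in how $\dual\pi$ is defined. The paper takes $\dual\pi$ to be literally the same function $\pi$, merely reinterpreted over $\dual K$; since formula dualization leaves literals fixed (only $\E/\A$ and $\lor/\land$ are swapped), the base case $\pi\ext{L}_K = \pi(L) = \dual\pi(L) = \dual\pi\ext{\dual L}_{\dual K}$ is then immediate, and the inductive cases for connectives and quantifiers are the routine ones you describe.

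Your asymmetric swap on negative literals is both unnecessary and harmful here. It breaks the base case: for a negative literal $\neg P$ with $\pi(\neg P) = 0_K$, your definition gives $\dual\pi(\neg P) = 1_K$, so $\dual\pi\ext{\dual{(\neg P)}}_{\dual K} = \dual\pi(\neg P) = 1_K \neq 0_K = \pi\ext{\neg P}_K$, and the identity fails at the first step. Nor does the swap secure the property you are after---model-definingness of $\dual\pi$ in $\dual K$: when $\pi(L) = 0_K$ and hence $\pi(\neg L) = 1_K$, your swap sends $\neg L$ to $0_K$ while leaving $L$ at $0_K$, and since $0_{\dual K} = 1_K$, \emph{neither} value equals the $\dual K$-zero. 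So the transformation costs you the identity and still does not deliver the side condition. The paper simply sets $\dual\pi = \pi$ and makes no attempt to render $\dual\pi$ model-defining in $\dual K$; whatever concern one might have about that step is not resolved by your proposed swap.
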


\begin{proof}
First observe that if $\phi(\tx)$ is $r$-local around $\tx$, then so is $\dual\phi(\tx)$.
Similarly, if $\phi$ is a basic local sentence, then so is $\dual\phi$.
Fix any min-max semiring $K = (K,\max,\min,0,1)$.
We define its \emph{dual} as $\dual K = (K,\min,\max,1,0)$, i.e.\ by inverting the underlying order.
Since $K$ and $\dual K$ are both min-max semirings and share the same domain, we can interpret any $K$-interpretation $\pi$ also as $\dual K$-interpretation, which we denote as $\dual \pi$.

Now let $\psi$ be minmax-equivalent to a local sentence.
The equivalence holds in particular for $\dual K$, and by putting the local sentence in disjunctive normal form, we have
\[
    \psi \keq[\dual K] \biglor_i \bigland_j \phi_{ij}, \qquad
    \text{with basic local sentences } \phi_{ij}. \tag{$\dagger$}
\]
We claim that
\[
    \dual\psi \keq[K] \bigland_i \biglor_j \dual\phi_{ij},
\]
which prove the lemma.
To prove the claim, let $\pi$ be a $K$-interpretation (over universe $A$).
We first note that for any FO-formula $\phi(\tx)$ and any tuple $\ta \subseteq A$ of matching arity, $\pi \ext {\phi(\ta)} = \dual\pi \ext {\dual\phi(\ta)}$ by a straightforward induction on $\phi$.
Then
\[
    \pi \ext {\dual\psi} =
    \dual\pi \ext {\dual{(\dual\psi)}} =
    \dual\pi \ext {\psi} \overset{(\dagger)}{=}
    \dual\pi \Ext {\biglor_i \bigland_j \phi_{ij}} =
    \pi \Ext {\bigland_i \biglor_j \dual\phi_{ij}}. \qedhere
\]
\end{proof}

\subparagraph*{Logic.}
Contrary to Gaifman's original proof, our results only apply to sentences (cf.\ \cref{secCounterexampleFormula}).
However, we still need to apply intermediary results to subformulae involving free variables.
This is possible by the following lemma, which allows us to temporarily replace atoms involving the undesired variables with fresh relation symbols.

Intuitively, we think of the new relation as an abstraction of the original atoms.
All equivalence transformations that we perform in the abstract setting also hold, in particular, in the original setting.
However, there is one caveat: in order to apply our results in the abstract setting, we must also extend the ball quantifiers to the new signature.
Since we do not want to change the semantics of the quantifiers, we thus restrict the lemma to situations where the new relation does not add any edges to the Gaifman graph.

\newcommand{\sig}[1]{^{\langle{#1}\rangle}}

We use the following notation.
Let $\phi$ be a $\tau$-formula, including ball quantifiers of the form $\Qballtau \tau y r x$.
Given a different signature $\tau'$, we write $\phi \sig {\tau'}$ for the formula where all $\Qballtau \tau y r x$ are replaced with $\Qballtau {\tau'} y r x$ (assuming this is well-defined).
Further, we write $\phi[\theta(\tx,\ty) / R\tx]$ to replace every occurrence of $\theta(\tz,\ty)$ for any variable tuple $\tz$ by the atom $R\tz$.

\begin{lemma}[Abstraction]
\label{thmAbstraction}
Let $K$ be an arbitrary semiring, $\phi(\ty)$ a $\tau$-formula and $\theta(\tx,\ty)$ a subformula of $\phi(\ty)$.
Let $R$ be a fresh relation symbol with arity matching $\tx$, and set $\tau' = \tau \cup \{R\}$.
If we have
\begin{itemise}
\item $\phi [\theta(\tx,\ty) / R\tx]\sig {\tau'}(\ty) \allowbreak\keq \psi(\ty)$ for a $\tau'$-formula $\psi$ (abstract setting $\tau'$), and
\item $R$ is unary or $\theta = P \tz$ is a positive literal with $\tx \subseteq \tz$,
\end{itemise}
then also $\phi(\ty) \keq \psi [R\tx / \theta(\tx,\ty)] \sig \tau (\ty)$ (original setting $\tau$).
This remains true if we only replace some occurrences of $\theta$ by $R$, but not all.
\end{lemma}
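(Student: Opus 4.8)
The plan is to prove the equivalence by a direct induction on the structure of $\phi$, tracking how the valuation $\pi\ext{\cdot}$ behaves under the substitution $\theta(\tx,\ty)/R\tx$. The central idea is that for a \emph{fixed} $\tau$-interpretation $\pi$ over universe $A$, we build a companion $\tau'$-interpretation $\pi'$ by keeping $\pi$ on all $\tau$-literals and setting $\pi'(R\ta) \coloneqq \pi\ext{\theta(\ta,\tb)}$ for the appropriate instantiation (with $\pi'(\neg R\ta)$ chosen to keep model-definedness and positive-information-tracking, which is exactly where the hypothesis on $\theta$ is used). Conversely, any $\tau'$-interpretation $\pi'$ in which $R$ behaves this way restricts to such a $\pi$. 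First I would make this correspondence precise, checking that $\pi'$ is again model-defining and tracks only positive information: if $R$ is unary this is immediate since the value of $\theta(x,\ty)$ (and its negation-normal-form negation) is determined pointwise; if $\theta = P\tz$ is a positive literal with $\tx \subseteq \tz$, then $\pi'(R\ta)$ simply copies $\pi(P\tc)$ for the matching instantiation $\tc$, so the required properties are inherited.

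The key point is that the Gaifman graphs agree: $G(\pi') = G(\pi)$. In the unary case $R$ is a unary relation and contributes no edges; in the literal case, any edge that a $P$-fact would contribute is already present because $\theta = P\tz$ is itself a $\tau$-atom. This is precisely why the statement restricts the ball quantifiers to be reinterpreted over $\tau'$ (the notation $\phi\sig{\tau'}$): since $G(\pi') = G(\pi)$, the $r$-neighbourhoods $\ballpi[\pi']{r}{a}$ and $\ballpi[\pi]{r}{a}$ coincide, so $\Qballtau{\tau'}{y}{r}{x}$ evaluated in $\pi'$ ranges over exactly the same elements as $\Qballtau{\tau}{y}{r}{x}$ evaluated in $\pi$. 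With this in hand, a routine induction on formula structure shows
\[
    \pi\ext{\chi(\ta)} \;=\; \pi'\Ext{\chi[\theta(\tx,\ty)/R\tx]\sig{\tau'}(\ta)}
\]
for every subformula $\chi$ and every instantiation $\ta$: the literal case is by construction of $\pi'$, the Boolean connectives and ordinary quantifiers commute with the substitution trivially, and the ball-quantifier case uses the neighbourhood agreement just noted. (For the "some but not all occurrences" variant one simply marks the occurrences to be replaced; the induction is unchanged.)

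Applying this with $\chi = \phi$ gives $\pi\ext{\phi(\ta)} = \pi'\ext{\phi[\theta/R\tx]\sig{\tau'}(\ta)}$. By the abstract hypothesis $\phi[\theta/R\tx]\sig{\tau'} \keq \psi$, the right-hand side equals $\pi'\ext{\psi(\ta)}$. Finally, applying the same substitution-invariance in the reverse direction — now with $\psi$ in place of $\phi$ and the inverse substitution $R\tx/\theta(\tx,\ty)$, reinterpreting ball quantifiers back over $\tau$ — yields $\pi'\ext{\psi(\ta)} = \pi\ext{\psi[R\tx/\theta]\sig{\tau}(\ta)}$; note this step needs that $\psi$ is still a $\tau'$-formula to which the correspondence applies, which it is. Chaining the equalities gives $\pi\ext{\phi(\ta)} = \pi\ext{\psi[R\tx/\theta]\sig{\tau}(\ta)}$ for every model-defining positive-information-tracking $\pi$, i.e.\ $\phi \keq \psi[R\tx/\theta]\sig{\tau}$. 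The main obstacle — and the only place real care is needed — is the bookkeeping around the ball quantifiers: one must verify that $\phi[\theta/R\tx]\sig{\tau'}$ is actually well-defined (the replaced atoms must not sit inside a position that breaks the $\Qballtau{\tau'}{}{}{}$ rewriting) and that $G(\pi') = G(\pi)$ holds under both clauses of the dichotomy on $\theta$, since everything downstream rests on the neighbourhoods being literally the same sets.
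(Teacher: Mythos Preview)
Your proposal is correct and follows essentially the same approach as the paper: build the companion interpretation $\pi'$ by setting $\pi'(R\ta) \coloneqq \pi\ext{\theta(\ta,\tb)}$, verify $G(\pi') = G(\pi)$ using the dichotomy on $\theta$ so that ball quantifiers range over the same sets, and then chain three equalities via a structural induction on $\phi$ and $\psi$. One small remark: you say the hypothesis on $\theta$ is ``exactly where'' model-definedness and positive-information-tracking are secured, but in fact its real job is the Gaifman-graph preservation (as you correctly argue later); model-definedness of $\pi'$ can be arranged for free by choosing $\pi'(\neg R\ta)$ suitably, and the paper's proof simply does not dwell on this point.
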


\begin{proof}
Intuitively, if the equivalence holds for a fresh relation symbol $R$, then it holds for any interpretation of $R$, and hence in particular for the interpretation defined by $\theta$.

Formally, assume that $\phi \sig {\tau'} [\theta(\tx,\ty) / R(\tx)](\ty) \keq \psi(\ty)$.
Let $\pi$ be a $K$-interpretation over universe $A$ and signature $\tau$, and let $\tb \subseteq A$.
We have to show that $\pi \ext {\phi(\tb)} = \pi \ext {\psi \sig \tau [R(\tx) / \theta(\tx,\ty)](\tb)}$.
Extend $\pi$ to a $K$-interpretation $\pi'_{\tb}$ over $A$ and $\tau'$ by setting
\[
    \pi'_{\tb}(R\ta) \coloneqq \pi\ext{\theta(\ta,\tb)}, \quad \text{for all $\ta \subseteq A$}.
\]

We first show that the extension does not affect the Gaifman graph, i.e., $G(\pi) = G(\pi'_{\tb})$.
This is clearly the case if $R$ is unary.
Otherwise, we have $\theta(\tx,\ty) = P\tz$ and $\tx \subseteq \tz$ by assumption.
The extension can only add new edges to the Gaifman graph, so assume that $G(\pi'_{\tb})$ contains an additional edge between two elements $c,d$.
Hence there is a tuple $\ta$ with $c,d \in \ta$ and $\pi'_{\tb}(R\ta) \neq 0$.
Since $\tx \subseteq \tz$, the elements $c,d$ also occur in the atom $\theta(\ta,\tb)$ with $\pi \ext {\theta(\ta,\tb)} \neq 0$, hence the edge between $c,d$ is already present in $G(\pi)$.

Since the Gaifman graphs coincide, ball quantifiers $\Qballtau \tau y r x$ in $\pi$ and $\Qballtau {\tau'} y r x$ in $\pi'_{\tb}$ range over the same elements.
By a straightforward induction, we thus have
\begin{align*}
    \pi \ext {\phi(\tb)} &= \pi'_{\tb} \ext {\phi [\theta(\tx,\ty) / R\tx] \sig {\tau'} (\tb)}
        &&\text{by construction of $\pi'_{\tb}$ (induction on $\phi$)}, \\
    &= \pi'_{\tb} \ext {\psi(\tb)}
        &&\text{by the $K$-equivalence (abstract setting)}, \\
    &= \pi \ext {\psi[R\tx / \theta(\tx,\ty)] \sig\tau (\tb)}
        &&\text{by construction of $\pi'_{\tb}$ (induction on $\psi$)}. \qedhere
\end{align*}
\end{proof}

\subparagraph*{Locality.}

Concerning locality, we make two simple but crucial observations.
For the first one, consider a local formula $\phi \loc r (x,y)$ around two variables $x$ and $y$.
Such a formula may assert that $x$ and $y$ are close to each other, for instance $\phi \loc r (x,y) = Exy$.
But if $x$ and $y$ do not occur together within one literal, then $\phi \loc r$ intuitively makes independent statements about the neighbourhood of $x$, and the neighbourhood of $y$, so we can split $\phi \loc r$ into two separate local formulae.
For the general case $\phi \loc r (\tx)$ in several variables, we group $\tx$ into tuples $\tx^1,\dots,\tx^n$ with the idea that $\phi \loc r$ makes independent statements about each group $\tx^i$.

\begin{lemma}[Separation]
\label{thmSeparation}
Let $\phi \loc r (\tup x^1, \dots, \tup x^n)$ be a local formula around $\tup x^1 \dots \tup x^n$ and
define $X_i$ as the set of variables locally quantified around $\tx^i$ in $\phi \loc r$.
If each literal of $\phi \loc r (\tx^1, \dots, \tx^n)$ uses only variables in $\overline x^i \cup X_i$ for some $i$,
then $\phi \loc r (\tx^1,\dots,\tx^n)$ is minmax-equivalent to a positive Boolean combination of $r$-local formulae around each of the $\tx^i$. 
\end{lemma}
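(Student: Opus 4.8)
The plan is to prove the \SeparationLemma{} by induction on the structure of $\phi \loc r$, pushing the decomposition into a Boolean combination of per-group local formulae through each connective, with the base case handling literals via the hypothesis. First I would set up the notation: for each $i \le n$, let $X_i$ be the set of bound variables of $\phi \loc r$ that are locally quantified around $\tx^i$ (connected to $\tx^i$ in the quantification dag $D(\phi)$), and note that since $\phi$ is local around $\tx^1 \dots \tx^n$, every bound variable lies in exactly one $X_i$ — the dag is a forest of paths each ending in some $\tx^i$. The goal is to produce, for each $i$, an $r$-local formula $\phi_i \loc r (\tx^i)$ whose bound variables are exactly $X_i$, such that $\phi \loc r$ is minmax-equivalent to a positive Boolean combination of the $\phi_i$.

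Next I would carry out the induction. For the base case, $\phi \loc r$ is a single literal; by hypothesis it uses only variables in $\tx^i \cup X_i$ for some $i$, and since it is a literal it contains no bound variables at all, so $X_i \cap (\text{vars of }\phi) = \emptyset$; hence $\phi \loc r$ is already an ($r$-)local formula around $\tx^i$ alone, and the trivial Boolean combination suffices. For $\phi = \phi' \lor \phi''$ or $\phi = \phi' \land \phi''$: both conjuncts/disjuncts individually satisfy the literal-restriction hypothesis (their literals are among those of $\phi$), so by the induction hypothesis each is minmax-equivalent to a positive Boolean combination of $r$-local formulae around the various $\tx^i$; combining these with $\lor$ resp.\ $\land$ again yields such a combination. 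For a ball quantifier $\phi = \Qballtau \tau z {r'} {x}$ with $x \in \tx^j \cup X_j$, the key point is that $z$ (and all bound variables below it in the dag) belongs to $X_j$, so all of $\phi$'s bound variables and the variable $x$ governing the ball quantifier lie in $\tx^j \cup X_j$. Applying the induction hypothesis to the body $\theta$, we get $\theta \meq \bigvee_k \bigwedge_l \gamma_{kl}$ with each $\gamma_{kl}$ local around some $\tx^{i_{kl}}$; but the literal-restriction hypothesis forces every literal of $\theta$ — and hence every $\gamma_{kl}$ that is not trivially $\top$/$\bot$-like — to live in group $j$ together with $z$, so in fact all nontrivial $\gamma_{kl}$ are local around $\tx^j$ and $z$ only. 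Then I distribute the ball quantifier over the Boolean combination using \cref{thmAxioms}\,\eqref{axiomExistsDistrib}--\eqref{axiomForallDistrib} and the last item of \cref{thmAxioms} (pulling $\Qball z {r'} x$ past disjuncts/conjuncts in which $z$ does not occur, which by the grouping are exactly those in other groups), leaving a positive Boolean combination of formulae each of which is either a $\Qball z {r'} x$-quantified $r$-local formula around $\tx^j$ (still $r$-local, since the new path through the $z \to x$ edge stays within the radius-$r$ budget inherited from $\phi$) or an $r$-local formula around another $\tx^i$ untouched by the quantifier.

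\textbf{Main obstacle.} The delicate step is the ball-quantifier case, specifically the bookkeeping that guarantees we may commute $\Qball z {r'} x$ past the Boolean structure of the body without changing the value. The subtlety is that after applying the induction hypothesis to $\theta$ and putting it in DNF/CNF, a disjunct (or conjunct) may a priori mention variables from several groups; I need the literal-restriction hypothesis of the lemma, propagated carefully to subformulae, to ensure each resulting $r$-local building block uses variables from a single group, so that $z$ occurs only in the group-$j$ blocks and the distribution equivalences of \cref{thmAxioms} genuinely apply. One has to be careful that "satisfies the literal-restriction hypothesis" is inherited by subformulae and survives the DNF/CNF conversion of \cref{thmAxioms} (which only rearranges existing literals, so it does). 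A secondary point to verify is that the radius bound is preserved: adding the edge $z \to x$ (with label $r'$) in front of a body that was $r$-local around $\tx^j$ keeps all dag-paths ending in some free variable of length $\le r$, which holds because in the original $\phi$ this very ball quantifier already sat on a path to $\tx^j$ respecting the radius-$r$ bound. With these observations the induction closes.
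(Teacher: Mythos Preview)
Your overall strategy is sound and close in spirit to the paper's: both arguments push each ball quantifier inward past a Boolean combination whose building blocks each live in a single group. The paper organises this differently --- it first brings $\phi\loc r$ into prenex normal form $Q_1 y_1 \dots Q_m y_m\,\psi(\tx,\ty)$ and then inducts on the length of the remaining prefix, at each step putting the body in DNF (or CNF) and absorbing $Q_k y_k$ into the unique group $i_0$ with $y_k \in X_{i_0}$ --- whereas you induct on the structure of $\phi$. Both schemes work, and the core manoeuvre (DNF/CNF, then pull the quantifier past pieces not mentioning the quantified variable) is identical.

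There is, however, a genuine slip in your ball-quantifier case. You claim that once $z \in X_j$, ``all of $\phi$'s bound variables \dots\ lie in $\tx^j \cup X_j$'' and hence ``all nontrivial $\gamma_{kl}$ are local around $\tx^j$ and $z$ only''. This is false: the body $\theta$ of $\Qball z{r'} x$ may contain further ball quantifiers around variables of \emph{other} groups, and literals from those groups (e.g.\ $\phi = \Eball{z_1}{r_1}{x^1}\,\Eball{z_2}{r_2}{x^2}\,(Uz_1 \land Vz_2)$: the body of the outer quantifier mixes groups $1$ and $2$). What you actually need --- and what your final sentence and your obstacle paragraph correctly state --- is only that each $\gamma_{kl}$ lives in a \emph{single} group and that those mentioning $z$ sit in group $j$; the others are then pulled out of the scope of $\Qball z{r'} x$ via \cref{thmAxioms}. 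So the error is internal to your write-up rather than to your method; drop the wrong sentences and keep the correct ones. When you apply the induction hypothesis to $\theta$, make explicit that the free-variable groups for $\theta$ are $\tx^1,\dots,\tx^{j-1},(\tx^j,z),\tx^{j+1},\dots,\tx^n$.

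On the radius bookkeeping: the clean justification (also used in the paper) is that the whole construction only rearranges Boolean structure and relocates \emph{existing} ball quantifiers --- no new ones are introduced --- so the quantification dag of every piece produced is a subgraph of $D(\phi)$. The radius bound then follows immediately. This is more robust than your ``radius-$r$ budget'' phrasing, because the bare induction hypothesis (``$r$-local'') does not by itself bound the length of paths ending at the newly freed variable $z$; you need the sub-dag observation to close that gap.
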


\begin{proof}
We write $\tx = \tx^1 \tx^2 \dots \tx^n$ and assume w.l.o.g.\ that $\phi$ is in prenex normal form,
so $\phi \loc r(\tx) = Q_1y_1 \dots Q_my_m \ \psi(\tx, \tup y)$ where $\psi$ contains no quantifiers
and each $Q_ky_k$ is of the form $\Eball{y_k}{r_k}{z_k}$ or $\Aball{y_k}{r_k}{z_k}$.
Let $X_i^k \coloneqq X_i \cap \{y_1,\dots,y_k\}$ denote the subset of variables locally quantified around $\tx_i$ that are among the first $k$ bound variables.
We show by induction on $k$, from $m$ down to $0$, that there exists $\psi_k(\tx,y_1,\dots,y_k)$, a positive Boolean combination of $r$-local formulae around $\tx^1 \cup X_1^k$, $r$-local formulae around $\tx^2 \cup X_2^k$, \dots, such that 
$\phi \loc r(\tx) \meq Q_1y_1 \dots Q_ky_k \ \psi_k(\tx,y_1,\dots,y_k)$.
Then $k=0$ implies the lemma.

For $k=m$ this certainly holds: $\psi(\tx, \tup y)$ is quantifier-free and by assumption all literals are local formulae around $\tx^i \cup X_i$ for some $i$.
Now assume the claim already holds for $k > 0$.
We show that there is $\psi_{k-1}(\tx,y_1, \dots ,y_{k-1})$ of the required form such that
\[
    Q_1y_1 \dots Q_ky_k \ \psi_k(\tx,y_1, \dots ,y_k) \meq Q_1y_1 \dots Q_{k-1}y_{k-1} \ \psi_{k-1}(\tx,y_1, \dots ,y_{k-1}). \tag{$*$}
\]

It suffices to consider the case $Q_k y_k = \Eball{y_k}{r_k}{z_k}$ (the other case is symmetric).
Since $y_k$ is locally quantified around $z_k$, there is an index $i_0$ such that
either $y_k \in X_{i_0}$ and $z_k \in \tx^{i_0}$, or (if $z_k$ is locally quantified as well) $y_k,z_k \in X_{i_0}$.
We put $\psi_k$ in disjunctive normal form,
\[
    \psi_k(\tx,y_1, \dots ,y_k) \meq \biglor_j \bigland_{i=1}^n \phi_{j,i} \loc r(\tx^i, X_i^k),
\]
where each $\phi_{j,i}$ is a conjunction of literals (and hence a local formula) in $\tx^i \cup X_i^k$.
Then
\begin{align*}
    &\Eball {y_k} {r_k} {z_k} \ \psi_k(\tx,y_1, \dots ,y_k) \\[.5em] \meqtag{\ref{thmAxioms}}
    &\biglor_j \Big( \Eball {y_k} {r_k} {z_k} \ \bigland_i \phi_{j,i} \loc r(\tx^i, X_i^k) \Big) \\ \meqtag{\dagger}
    &\biglor_j \Big( \underbrace{\Eball {y_k} {r_k} {z_k} \big( \phi_{j,{i_0}} \loc r(\tx^{i_0}, X_{i_0}^k) \big)}_{\text{local formula around $\tx^{i_0} \cup X_{i_0}^{k-1}$}} \;\land\; \bigland_{i \neq i_0} \phi_{j,i} \loc r(\tx^i, X_i^{k-1}) \Big) \eqqcolon \psi_{k-1}(\tx,y_1, \dots ,y_{k-1}).
\end{align*}
For $(\dagger)$, recall that $y_k,z_k \in \tx^{i_0} \cup X_{i_0}^k$ and hence only $\phi_{j,{i_0}}$ may depend on $y_k$;
we can thus move the formulae $\phi_{j,i}$ for $i \neq i_0$ out of the scope of the quantifier (cf.\ \cref{thmAxioms}).
Finally, we remark that all local formulae in $\psi_{k-1}$ are of radius $r$, simply because their quantification dags are all contained in the quantification dag of the original formula $\phi \loc r$ of radius $r$.
We conclude that $\psi_k$ has the required form and satisfies $(*)$.
\end{proof}

The second observation is that we can perform a clustering of any tuple $(a_1,\dots,a_n) \in A^n$
into classes $I_1,\dots,I_k$ so that elements within one class have ``small'' distance to each other, whereas different classes are ``far apart''.
This simple combinatorial observation is a fruitful tool to construct Gaifman normal forms:
it becomes easy to quantify elements with a known clustering, and by the following lemma we can then do a disjunction over all clusterings.

\begin{definition}[Configuration]
\label{defConfiguration}
Let $\pi$ be a $K$-interpretation with universe $A$.
Let $P=\{I_1,\dots,I_k\}$ be a partition of $\{1,\dots,n\}$ and define representatives $i_l = \min I_l$ of each class.
We say that a tuple $(a_1,\dots,a_n) \in A^n$ is in \emph{configuration} $(P,r)$, if 
\begin{alphatise}
    \item $d(a_{i_l},a_{i}) \le 5^{n-k}r-r$, for all $i \in I_l$, $l \in \{1,\dots,k\}$,
    \item $d(a_{i_l},a_{i_{l'}}) > 4 \cdot 5^{n-k}r$, for all $l \neq l'$ (representatives are $(2 \cdot 5^{n-k}r)$-scattered).
\end{alphatise}
\end{definition}

Notice that the definition of ``small`` and ``far apart`` depends not only on $r$, but also on $n$ and the number of classes $k$.
It is easy to see that such a partition always exists: condition \enumref{a} remains true if we merge two classes violating \enumref{b}, so starting from $P=\{\{1\}, \dots, \{n\}\}$ we can prove the following lemma by merging classes until both conditions hold.

\begin{lemma}[Clustering]
\label{thmClustering}
Let $\pi$ be a $K$-interpretation on $A$.
For all tuples $(a_1,\dots,a_n) \in A^n$ and all $r \ge 1$, there is a partition $P$ such that $(a_1,\dots,a_n)$ is in configuration $(P,r)$.
\end{lemma}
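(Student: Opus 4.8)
The plan is to prove the \ClusteringLemma{} by an explicit merging procedure on the partition, following the hint given right before the statement: start from the finest partition $P_0 = \{\{1\},\dots,\{n\}\}$ and repeatedly merge any two classes that violate condition \enumref{b}, stopping when no violation remains. Since each merge strictly decreases the number of classes, the process terminates after at most $n-1$ steps, so the only real work is to show that condition \enumref{a} is maintained throughout — condition \enumref{b} holds by construction at the moment we stop.

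First I would fix notation for the intermediate partitions: let $P_j$ be the partition after $j$ merges, with $k_j = n - j$ classes, and for each class $I$ of $P_j$ let $m(I) = \min I$ be its representative. The key quantity to control is, for each class $I$ of $P_j$, the \emph{diameter bound} $\delta_j(I) := \max_{i \in I} d(a_{m(I)}, a_i)$. I would prove by induction on $j$ the invariant that $\delta_j(I) \le 5^{j}r - r$ for every class $I$ of $P_j$ (the base case $j=0$ is immediate since singleton classes have $\delta_0 = 0 \le r - r$). For the inductive step, suppose $P_{j+1}$ is obtained from $P_j$ by merging classes $I$ and $I'$ because they violated \enumref{b}, i.e.\ $d(a_{m(I)}, a_{m(I')}) \le 4 \cdot 5^{n-k_j}r$; note $n - k_j = j$. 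The new class is $J = I \cup I'$ with representative $m(J) = \min(m(I), m(I'))$, say $m(J) = m(I)$ w.l.o.g. For any $i \in J$, the triangle inequality gives $d(a_{m(J)}, a_i) \le d(a_{m(I)}, a_{m(I')}) + d(a_{m(I')}, a_i) \le 4 \cdot 5^{j}r + (5^{j}r - r) = 5 \cdot 5^{j}r - r = 5^{j+1}r - r$, using the induction hypothesis on $I'$ (and on $I$, even more easily, when $i \in I$). All other classes are unchanged, so the invariant is preserved.

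The final bookkeeping step is to check that the invariant, together with the stopping condition, delivers exactly the bounds in \cref{defConfiguration}. When the procedure halts at $P_{n-k}$ with $k$ classes, every class $I$ satisfies $\delta_{n-k}(I) \le 5^{n-k}r - r$, which is condition \enumref{a}; and since no merge was possible, every pair of distinct representatives has distance $> 4\cdot 5^{n-k}r$, which is condition \enumref{b}. I do not expect any genuine obstacle here: the only point requiring a little care is to make sure the exponent in the diameter bound is tracked correctly against the number of merges already performed (so that at the moment of a merge the threshold $4\cdot 5^{n-k_j}r$ has exponent $j$, matching the induction), and that the constants $5^{n-k}r - r$ and $4\cdot 5^{n-k}r$ in the definition are precisely what the telescoping $5 \cdot (5^{j}r) - r$ produces. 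Everything else is a routine finite induction with the triangle inequality.
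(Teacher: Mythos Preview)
Your proposal is correct and follows exactly the approach the paper sketches just before the lemma: start from the singleton partition and repeatedly merge two classes whose representatives violate condition \enumref{b}, maintaining the diameter bound \enumref{a} by induction on the number of merges via the triangle inequality. Your tracking of the exponent (namely $n-k_j = j$, so the merge threshold at step $j$ is $4\cdot 5^{j}r$ and the resulting diameter bound telescopes to $5^{j+1}r - r$) is precisely the computation the paper leaves implicit.
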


In the following proofs, we make frequent use of the Clustering Lemma and always stick to the notation $P = \{I_1,\dots,I_k\}$
with representatives $i_l = \min I_l$ for each $1 \le i \le k$.
The set of all partitions of $\{1,\dots,n\}$ is denoted as $\Part(n)$.

\subsection{Proof of Gaifman's Theorem}

The heart of our proof is the elimination of quantifier alternations.
We split the proof into steps.
Each step proves, building on the previous ones,
that sentences of a certain fragment can be translated to minmax-equivalent
local sentences. These fragments consist of
\begin{bracketise}
\item sentences of the form $\Esc r {x_1,\dots x_m} \bigwedge_{i\leq m} \phi_i \loc r (x_i)$;
\item existential sentences  $\E \tx \, \phi \loc r(\tx)$; 
\item existential-universal sentences $\E \ty \A \tx \, \phi \loc r(\ty,\tx)$;
\item all first-order sentences (\cref{thmGaifman}).
\end{bracketise}

We first prove that \enumref{4} follows from \enumref{2} and \enumref{3}.
This is based on the following lemma, which will be used to swap quantifiers.

\begin{lemma}
\label{thmGaifmanQuantifierSwap}
A positive Boolean combination of basic local sentences is minmax-equivalent both to a sentence of the form $\E \ty \A \tx \, \phi \loc r(\tx,\ty)$, and of the form $\A \ty \E \tx \, \psi \loc r(\tx,\ty)$.
\end{lemma}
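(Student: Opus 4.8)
The plan is to prove the first form, $\E\ty\A\tx\,\phi\loc r(\tx,\ty)$, by structural induction on the positive Boolean combination, and then to deduce the second form, $\A\ty\E\tx\,\psi\loc r(\tx,\ty)$, by symmetry. For the symmetry step I would interchange $\E$ with $\A$ and $\land$ with $\lor$ throughout, which amounts to passing to the dual semiring $\dual K$ exactly as in the proof of the \SymmetryLemma: the dual of a positive Boolean combination of basic local sentences is again one, and $\dual{(\E\ty\A\tx\,\phi\loc r(\tx,\ty))}=\A\ty\E\tx\,\dual\phi\loc r(\tx,\ty)$ with $\dual\phi$ still $r$-local around $\tx\ty$; so applying the first form to $\dual\Theta$ and dualising the resulting equivalence yields the second form for $\Theta$.

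For the base case I would observe that a single basic local sentence already has the required shape. Indeed, $\Esc r{y_1,\dots,y_m}\bigwedge_{i\le m}\phi\loc r(y_i)$ is just $\E y_1\cdots\E y_m$ applied to the matrix $\bigwedge_{i<j}d(y_i,y_j)>2r\,\land\,\bigwedge_{i\le m}\phi\loc r(y_i)$, hence of the form $\E\ty\A\tx\,(\cdots)$ with $\tx$ empty (and dually for basic local sentences of the universal kind, with $\ty$ empty). The only thing to check is that this matrix is $r$-local around $\ty$: each $\phi\loc r(y_i)$ is $r$-local by hypothesis, and, as noted after \cref{defLocalSentence}, each distance condition $d(y_i,y_j)>2r$ can be written as an $r$-local formula around $y_iy_j$, e.g.\ $\Aball{y'_i}{r}{y_i}\,\Aball{y'_j}{r}{y_j}\,(y'_i\neq y'_j)$; since the quantification dag of a $\land$/$\lor$ of such formulas is the disjoint union of the individual dags, every path ending in a free variable still has length at most $r$.

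For the inductive step I would take $\Theta=\Theta_1\circ\Theta_2$ with $\circ\in\{\land,\lor\}$, where by induction $\Theta_i\meq\E\ty_i\A\tx_i\,\phi_i\loc{r_i}(\tx_i,\ty_i)$; after renaming bound variables I may assume $\tx_1,\ty_1,\tx_2,\ty_2$ are pairwise disjoint. Then I would pull the quantifiers to the front in the order ``existentials first, then universals'': repeated use of $\chi\circ(\E u\,\vartheta)\meq\E u(\chi\circ\vartheta)$ from \cref{thmAxioms} (valid because the quantified variable does not occur free in the other conjunct/disjunct, the $\Theta_i$ being sentences) extracts the block $\E\ty_1\ty_2$, and then repeated use of $\chi\circ(\A v\,\vartheta)\meq\A v(\chi\circ\vartheta)$ extracts $\A\tx_1\tx_2$, leaving the matrix $\phi_1\circ\phi_2$. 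With $r\coloneqq\max(r_1,r_2)$, the formula $\phi_1\circ\phi_2$ is $r$-local around $\tx_1\ty_1\tx_2\ty_2$, again because its quantification dag is the disjoint union of those of $\phi_1$ and $\phi_2$, so no path to a free variable gets longer than $r$. This yields $\Theta\meq\E\ty_1\ty_2\A\tx_1\tx_2\,(\phi_1\circ\phi_2)\loc r(\tx_1\tx_2,\ty_1\ty_2)$, of the required form, and the induction closes.

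Beyond the routine applications of \cref{thmAxioms}, the one place I expect to need care --- and the only plausible obstacle --- is the locality bookkeeping for the matrix: handling the scattered-quantifier distance conditions positively (as ball-quantified inequality literals rather than negations) and checking that joining local formulas over disjoint blocks of variables by $\land$ or $\lor$ never lengthens a path in the quantification dag. Neither point looks genuinely hard; the substance of the lemma is just that, once everything is a positive Boolean combination of basic local sentences, the existential and universal quantifiers can be separated cleanly.
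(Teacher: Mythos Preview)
Your proposal is correct and follows essentially the same idea as the paper: unfold the scattered quantifiers into ordinary quantifiers plus distance formulae, then pull the positive Boolean combination into prenex normal form with the existential block first (or the universal block first), observing that the resulting matrix is an $r$-local formula around all variables. The only cosmetic differences are that the paper states this in two sentences without an explicit induction, and obtains the $\A\ty\E\tx$ form by simply reversing the order in which the quantifier blocks are pushed to the front (``or vice versa'') rather than invoking the \SymmetryLemma.
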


\begin{proof}
Recall that scattered quantifiers are just abbreviations for regular quantifiers and distance formulae.
Each basic local sentence uses only one type of quantifier (followed by a local formula).
When we bring the local sentence into prenex normal form, we can choose to first push all $\E$-quantifiers to the front, and then the $\A$-quantifiers (or vice versa).
The local formulae can all be combined into a single local formula (around all variables).
\end{proof}

\begin{proof}[Proof of \cref{thmGaifman}]
We prove by induction that every sentence of the form
\[
    \Psi = Q_1 \tx_1 \, \dots Q_{n+1} \tx_{n+1} \; \phi \loc r (\tx_n,\tx_{n+1} \mid \tx_1, \dots, \tx_{n-1})
\]
with alternating quantifiers $Q_i \in \{\E,\A\}$ (i.e., with $n$ quantifier alternations), is equivalent to a local sentence.
This implies \cref{thmGaifman} (consider a quantifier-free formula $\phi$).

The cases $n=0$ and $n=1$ are covered by \enumref{2} and \enumref{3} (possibly combined with the \SymmetryLemma{}).
For $n \ge 2$, consider the inner formula
\[
    \Phi(\tx_1,\dots,\tx_{n-1}) = Q_n \tx_n Q_{n+1} \tx_{n+1} \; \phi \loc r (\tx_n,\tx_{n+1} \mid \tx_1,\dots,\tx_{n-1}).
\]
We apply the \AbstractionLemma{} to successively replace all atoms (including equalities) involving the free variables $\tx_1,\dots,\tx_{n-1}$ with fresh relation symbols, resulting in a \emph{sentence} $\tilde\Phi$.
Assume $Q_n = \E$ and $Q_{n+1} = \A$ (the other case is symmetric).
We can now swap the quantifiers in $\tilde\Phi$ by applying \enumref{3} and the previous lemma.
After reversing the substitution of atoms, we get that $\Phi(\tx_1,\dots,\tx_n)$ is minmax-equivalent to a formula of the form $\A\ty \E\tz \, \theta\loc r(\ty,\tz \mid \tx_1,\dots,\tx_n)$.
Since $Q_{n-1} = \A$, we can write $\Psi$ as
\[
    \Psi \meq Q_1 \tx_1 \, \dots Q_{n-1} \tx_{n-1} \ty \; Q_{n} \tz \; \theta \loc r (\ty,\tz \mid \tx_1, \dots, \tx_{n-1}),
\]
which has only $n-1$ quantifier alternations. The claim follows by induction.
\end{proof}

\subsection{Asymmetric Basic Local Sentences}

To simplify the proofs of steps \enumref{2} and \enumref{3}, we first establish the following lemma.
This can be seen as generalising the notion of basic local sentences by permitting different local formulae for each scattered variable $x_i$.
Such sentences have been called \emph{asymmetric} basic local sentences in \cite{GroheWoe04,DawarGroKreSch06}.

\begin{lemma}[Step \enumref{1}]
\label{thmGaifmanGeneralizedBasic}
Every sentence of the form $\Esc r {x_1,\dots x_m} \bigwedge_{i\leq m} \phi_i \loc r (x_i)$ is minmax-equivalent to an existential local sentence.
\end{lemma}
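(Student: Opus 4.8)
The plan is to compute the value
$\pi\ext{\Esc r{x_1,\dots,x_m}\bigwedge_{i\le m}\phi_i\loc r(x_i)}
 = \max\bigl\{\min_{i\le m}\pi\ext{\phi_i\loc r(a_i)} : (a_1,\dots,a_m)\text{ is a }2r\text{-scattered tuple in }A\bigr\}$
and to exhibit it as a positive $\min$/$\max$-combination of values of basic local sentences in the sense of \cref{defLocalSentence}. I would argue by induction on $m$, phrasing the induction hypothesis so that it permits arbitrary radii and arbitrary local formulas $\phi_i$; by the \SymmetryLemma{} it then suffices to treat the existential-scattered form. The base case $m=1$ is immediate, since $\Esc r{x_1}\phi_1\loc r(x_1)=\E x_1\,\phi_1\loc r(x_1)$ is already a basic local sentence.

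For the inductive step, fix a large radius $R = R(m,r)$ and apply the \ClusteringLemma{} to the witnessing tuple: by \cref{thmClustering} every $(a_1,\dots,a_m)\in A^m$ lies in some configuration $(P,R)$, so the sentence is minmax-equivalent to $\biglor_{P\in\Part(m)}\Psi_P$, where $\Psi_P$ asserts the existence of a tuple that is in configuration $(P,R)$, is additionally $2r$-scattered, and satisfies $\bigwedge_i\phi_i\loc r(x_i)$ (this is a minmax-equivalence directly from the definition of the scattered quantifier as a $\max$ and the fact that every tuple realises some configuration). Fix $P = \{I_1,\dots,I_k\}$ with representatives $i_l = \min I_l$. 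Within a class $I_l$ every $a_i$ has bounded distance to $a_{i_l}$, so $\bigwedge_{i\in I_l}\phi_i\loc r(x_i)$ together with the within-class $2r$-scatteredness can be expressed — by introducing ball quantifiers around $x_{i_l}$ and requantifying each $r$-local $\phi_i$ into a formula of radius $r+O(R)$, using the requantification principle of \cref{secGaifmanDefinitions} — as a single local formula $\chi_l\loc{r'}(x_{i_l})$ of radius $r' = O(R)$. Across classes the representatives are $(2\cdot 5^{m-k}R)$-scattered, so for $R$ large enough their $r'$-neighbourhoods are pairwise disjoint and non-adjacent; as in \cref{lem-topi} the only surviving inter-class constraint is that the representatives be scattered, whence $\Psi_P \meq \Esc{r'}{z_1,\dots,z_k}\bigwedge_{l\le k}\chi_l\loc{r'}(z_l)$, an \emph{asymmetric} basic local sentence in $k$ variables. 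If $k<m$ the induction hypothesis applies and disposes of this disjunct; the fiddly point is choosing all radii so that the reconstructed tuple really lands back in a configuration of the required type.

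The remaining disjunct is the trivial partition $P=\{\{1\},\dots,\{m\}\}$, which gives $\Esc{2R}{x_1,\dots,x_m}\bigwedge_i\phi_i\loc r(x_i)$: an asymmetric basic local sentence whose scattering radius is now huge relative to $r$ and $m$. Turning this into a positive Boolean combination of basic local sentences (\cref{defLocalSentence}) is the heart of the matter, and is essentially the classical reduction of asymmetric to symmetric basic local sentences underlying \cite{GroheWoe04,DawarGroKreSch06}. I expect this to be the main obstacle. The naive candidate — a conjunction over nonempty $S\subseteq\{1,\dots,m\}$ of ``there are $|S|$ scattered elements each satisfying $\bigvee_{i\in S}\phi_i\loc r$'' — is \emph{insufficient}: Hall/Haxell-type obstructions (witnesses for one formula wedged between far-apart witnesses for another) persist at every fixed radius. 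The fix is to interleave such counting sentences, taken at suitably inflated radii, with basic local sentences $\E z\,\chi(z)$ whose local formula $\chi$ itself quantifies, \emph{inside a single ball around $z$}, for a scattered system of distinct representatives of a subfamily of the $\phi_i$; a finite case distinction over which subfamilies cluster together at the relevant scales, again glued across scales as in \cref{lem-topi}, produces the required positive combination. Getting this combinatorial bookkeeping and all the radii exactly right, while keeping the result negation-free, is the delicate part. Finally, all equivalences above are in the first instance between Boolean truth values at a fixed threshold $v\in K$ (replacing each $\phi_i$ by the predicate ``$\pi\ext{\phi_i(\cdot)}\ge v$''); since every min-max semiring is totally ordered, a value built from $\min$ and $\max$ is determined by the set of thresholds it dominates, so these threshold-wise Boolean equivalences assemble into the desired minmax-equivalence.
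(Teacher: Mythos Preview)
Your overall architecture matches the paper's: induction on $m$, clustering via \cref{thmClustering}, and dispatching non-trivial partitions $P$ with $k<m$ by packaging each class into a single local formula around its representative and invoking the induction hypothesis on the resulting $k$-variable asymmetric sentence. Your threshold argument at the end is also sound and is essentially the separating-homomorphism technique the paper develops in \cref{secStrengthening}; however, it only certifies an equivalence $\Phi\meq\Psi$ once $\Psi$ is already written down, so it does not help with the construction itself.

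The genuine gap is the trivial partition, which you correctly flag as ``the heart of the matter'' but do not resolve. Your sketch---counting sentences over subsets $S$ interleaved with sentences $\E z\,\chi(z)$ whose local formula searches for a scattered SDR inside a single ball---is not pinned down, and in particular it is not clear how your proposal avoids the very Hall-type obstructions you mention while staying negation-free. The paper's solution here is much more specific and does \emph{not} range over all subsets $S$. It takes the conjunction of just two things at radius $4r$: the single symmetric basic local sentence $\Esc{4r}{\tx}\bigland_{i}\biglor_{j}\phi_j\loc r(x_i)$, and the $m$ asymmetric $(m{-}1)$-variable sentences $\Esc{4r}{\tx_{-j}}\bigland_{i\neq j}\phi_i\loc r(x_i)$, each of which falls to the induction hypothesis. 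The point is that the full Hall lattice is unnecessary: the top level plus the $m$ ``drop-one'' sentences already suffice. The equivalence is then proved by a short greedy algorithm: from a witness $\tb$ for the symmetric sentence and witnesses $\ta_j$ for the drop-one sentences, one first assembles a $2r$-scattered tuple $\tc$ with at most one misassigned coordinate, and then iteratively swaps entries of $\tc$ into an initial tuple $\ta_{i_0}$ until an $r$-scattered $m$-tuple $\td$ with $d_i$ associated to $\phi_i$ emerges. This concrete combinatorial step is what your proposal is missing.
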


\begin{proof}[Proof sketch ($m=2$)]
The proof is somewhat technical, so we first sketch the idea for the case $m=2$, i.e., for a sentence $\Phi = \Esc r {x,y} \big( \phi \loc r (x) \land \psi \loc r (y) \big)$.
Imagine that we want to approximate $\Phi$ with a single existential local sentence.
A first candidate would be $\Psi \coloneqq \Esc r {x,y} \big( (\phi \loc r(x) \lor \psi \loc r(x)) \land (\phi \loc r(y) \lor \psi \loc r(y)) \big)$,
where we take the disjunction of $\phi \loc r$ and $\psi \loc r$ in order to have the same local formula for $x$ and $y$.

Fix a $K$-interpretation $\pi$ and recall that $\pi \ext \Phi = \max_{a,b} \min(\pi \ext {\phi \loc r(a)}, \pi \ext {\psi \loc r(b)})$.
Since $\Psi$ replaces each local formula with the maximum over both local formulae, we have $\Phi \mle \Psi$.
But we do not have equivalence since the maximal value $\pi \ext {\Psi}$ could be assumed by choosing elements $a,b$ and then choosing the same formula, say $\phi$ for both $a$ and $b$.

To improve our approximation while preserving $\Phi \mle \Psi$, we set $\Psi' \coloneqq \Psi \land \E x \, \phi \loc r(x) \land \E y \, \psi \loc r(y)$.
Now suppose that the maximal value $\pi \ext \Psi$ is assumed by choosing $a,b$ with $d(a,b) > 4r$ (so $a,b$ are $2r$-scattered).
If the maximum is achieved by further choosing different local formulae for $a$ and $b$, then $\pi \ext \Phi = \pi \ext {\Psi'}$ as desired.
So suppose we choose the same formula, say $\phi$, for both $a$ and $b$.
Let $c$ be an element for which the maximal value $\pi \ext {\E y \ \psi \loc r(y)}$ of $\psi$ is assumed.
As $c$ cannot be in both $\ball {2r} a$ and $\ball {2r} b$, we know that $c$ forms a $2r$-scattered tuple with $a$ or $b$, say with $a$.
Thus $\pi \ext \Phi \ge \min(\pi \ext {\phi \loc r(a)}, \pi \ext {\psi \loc r(c)}) \ge \pi \ext {\Psi'}$, so again $\pi\ext{\Phi} = \pi\ext{\Psi'}$.
In other words, our overapproximation $\Psi'$ is exact on tuples $(a,b)$ where $a$ and $b$ are sufficiently far from each other.

To achieve equivalence, we can thus increase the radius in $\Psi'$ (from $r$- to $2r$-scattered) and cover the remaining cases where $a,b$ are close by an additional local sentence:
\begin{align*}
    \Phi \meq &\Big(\E x \, \phi \loc r(x) \;\land\; \E y \, \psi \loc r(y) \;\land\; \Esc {2r} {x,y} \, (\phi \loc r(x) \lor \psi \loc r(x)) \land (\phi \loc r(y) \lor \psi \loc r(y)) \Big) \\[.25em]
     &{}\lor \E x \, \Eball {y} {4r} {x} \, \big(\phi \loc r(x) \,\land\, \psi \loc r(y) \,\land\, d(x,y) > 2r \big).
\end{align*}
Here, the first line consists of 3 basic local sentences and the second line adds another basic local sentence to account for the ``close'' case.
\end{proof}

\begin{proof}[Proof of \cref{thmGaifmanGeneralizedBasic}]
We generalise the above idea to $m$ variables by applying the \ClusteringLemma{}.
Let $\Phi = \Esc r {x_1,\dots,x_m} \, \bigland_{i=1}^m \phi_i \loc r(x_i)$.
To simplify notation, we write $\tx = (x_1,\dots,x_m)$ and $\tx_{- i} = (x_1,\dots,x_{i-1},x_{i+1},\dots,x_m)$.
The proof is by induction on $m$.
For $m=1$ there is nothing to show.
For $m \ge 2$, we claim that $\Phi$ is minmax-equivalent to
\begin{align*}
    \Psi &\coloneqq{} \Big(
        \Esc {4r} {\tx} \bigland_{i=1}^m \biglor_{j=1}^m \phi_j \loc r(x_i)
        \;\;\land\;\;
        \bigland_{j=1}^m \Esc {4r} {\tx_{-j}} \, \bigland_{i \neq j} \phi_i \loc r (x_i)
    \Big) \;\lor \tag{1} \\
    &\qquad\qquad\quad \biglor_{\mathclap{\substack{%
        P = \{I_1,\dots,I_k\} \,\in\, \Part(m),\\
        P \neq \{ \{1\},\dots,\{m\}\}
    }}} \quad
    \Esc {(2 \cdot 5^{m-k} 4r)} {x_1,\dots,x_k} \bigland_{l=1}^k \phi_{P,l}\loc {5^{m-k}4r}(x_l), \tag{2}
\end{align*}
where 
\begin{align*}    
    \phi_{P,l} \loc {5^{m-k}4r} (x_{i_l}) &\coloneqq \Eball {(x_i)_{i \in I_l\setminus \{i_l\}}} {5^{m-k}4r-r} {x_{i_l}} \ \Big(\bigland_{i \in I_l} \phi_i \loc r(x_i) \,\land \bigland_{\substack{i,j \in I_l\\i < j}}\! d(x_i,x_j) > 2r \Big)
\end{align*}
and $i_l \coloneqq \min I_l$ is a representative of the class $I_l$.
Essentially, line $(1)$ covers the case where $\tx$ is scattered and line (2) covers all cases where some variables are close (the idea is that $x_1,\dots,x_k$ are set to the representatives of the partition classes $I_1,\dots,I_k$).

We first argue that $\Psi$ can be written as local sentence.
The left subformula in (1) is already a basic local sentence, since $\biglor_{j=1}^m \phi_j \loc r(x_i)$ is a local formula around $x_i$.
The quantification over $\tx_{-j}$ and over $(x_1,\dots,x_k)$ both use less than $m$ variables,
so by induction they can be expressed as local sentences.
It thus remains to prove $\Phi \meq \Psi$.

\bigskip
We first show that $\Phi \mle \Psi$.
Fix a $K$-interpretation $\pi$ over universe $A$ and let $\ta = (a_1,\dots,a_m) \subseteq A$ be a tuple at which the maximum in $\Phi$ is reached.
If $\ta$ is $4r$-scattered, then by using $\ta$ as witness in $(1)$, we obtain
\begin{align*}
    \pi \ext \Psi \;\ge\; \pi \ext {(1)} \;\ge\;
    \min\big(\pi \ext {\phi_1 \loc r (a_1)}, \dots, \pi \ext {\phi_m \loc r (a_m)}\big) =
    \pi \ext \Phi.
\end{align*}

Otherwise, $\ta$ is $r$-scattered, but not $4r$-scattered.
By the \ClusteringLemma, there is a partition $P = \{I_1,\dots,I_k\}$ such that $\ta$ is in configuration $(P,4r)$.
Recall that the representatives $a_{i_l}$ (with $i_l = \min I_l$) satisfy $d(a_{i_l},a_{i_{l'}}) > 4 \cdot 5^{n-k} 4r$, so they are $(2 \cdot 5^{n-k} 4r)$-scattered.
In particular $P \neq \{\{1\},\dots,\{m\}\}$, because in this case $\ta$ would be $4r$-scattered.
By using the representatives $(a_{i_1},\dots,a_{i_k})$ as witness in (2) and the remaining entries of $\ta$ as witnesses in the $\phi_{P,l}$, we obtain:
\begin{align*}
    \pi \ext \Psi \;\ge\;
    \pi \ext {(2)} \;&\ge\;
    \pi \EXT {\bigland_{l=1}^k \phi_{P,l}\loc {5^{m-k}4r}(a_{i_l})} \\ \;&\ge\;
    \pi \EXT {\bigland_{l=1}^k \Big(\bigland_{i \in I_l} \phi_i \loc r(a_i) \,\land \bigland_{\substack{i,j \in I_l\\i < j}}\! d(a_i,a_j) > 2r\Big)} \\ \;&\overset{\mathclap{(*)}}=\;
    \prod_{i=1}^{m} \pi \ext {\phi_i \loc r (a_i)} \;=\; \pi \ext \Phi.
\end{align*}
For $(*)$, recall that $\ta$ is $r$-scattered, so the atoms $d(a_i,a_j) > 2r$ all evaluate to $1$.

\bigskip
We now prove that $\Psi \mle \Phi$.
We again fix a $K$-interpretation $\pi$ and first observe that $\pi \ext {(2)} \le \pi \ext \Phi$.
To see this, note that the $m$ elements quantified in $(2)$ and in all $\phi_{P,l}$ together form an $r$-scattered tuple.
Indeed, the $k$-tuple of representatives in $(2)$ is even $(2 \cdot 5^{m-k} \cdot 4r)$-scattered.
Hence for $l \neq l'$, elements locally quantified in $\phi_{P,l}(x_l)$ have distance at least $(4 \cdot 5^{m-k}4r) - 2 \cdot (5^{m-l}4r-r) \ge 2r$ to those locally quantified in $\phi_{P,l'}(x_{l'})$.
Within one $\phi_{P,l}$, we explicitly assert that the locally quantified elements have distance at least $2r$ among each other and to the representative.
We can thus use the quantified elements as $r$-scattered witness in $\Phi$.

It remains to show $\pi \ext {(1)} \le \pi \ext \Phi$.
Let $\tb = (b_1,\dots,b_m)$ be an $m$-tuple at which the maximum in $\Esc {4r} {\tx} \bigland_{i=1}^m \biglor_{j=1}^m \phi_j \loc r(x_i)$ is reached.
Let $\ta_j = (a_{1,j},\dots,a_{j-1,j},a_{j+1,j},\dots,a_{m,j})$ be an $(m-1)$-tuple at which the maximum in $\Esc {4r} {\tx_{-j}} \, \bigland_{i \neq j} \phi_i \loc r (x_i)$ is reached, for each $j$.
We say that an element $d \in A$ is \emph{associated to $\phi_i$} if
$\pi \ext {(1)} \le \pi \ext {\phi_i \loc r (d)}$.
Thus, for all $i \neq j$, the element $a_{i,j}$ is associated to $\phi_i$, and all $b_j$ are associated to some $\phi_i$. To end the proof, it suffices to find an $r$-scattered tuple $\tup d = (d_1,\dots,d_m)$ such that $d_i$ is associated to $\phi_i$ for all $i$, as we can then use $\td$ as witness in $\Phi$.

\begin{claim*}
There is a $2r$-scattered tuple $\tc = (c_1,\dots,c_m)$ such that $c_i$ is associated to $\phi_i$ for $i \ge 2$, and $c_1$ is associated to $\phi_{i_0}$ for some $i_0$.
\end{claim*}

\begin{claimproof}
Since $\tb$ is $4r-$scattered, the $m$ balls $\ball{2r}{b_j}$ are disjoint.
One of them, say $\ball{4r}{b_{j_0}}$, contains none of the $m-1$ entries of $\ta_1$.
Set $i_0$ so that $b_{j_0}$ is associated to $\phi_{i_0}$.
We can then take $\tc = (b_{j_0}, \ta_1)$.
\end{claimproof}

Let $\tc$ be such a tuple and set $i_0$ accordingly.
Observe that we can simply take $\td = \tc$ if $i_0 = 1$, so assume that $i_0 \neq 1$.
We construct a tuple $\td$ with the desired property from $\tc$ and $\ta_{i_0}$ by the following algorithm.

\begin{algorithm}[H]
    initialise $\tup d_{-i_0} = \tup a_{i_0}$ and $d_{i_0} = \bot$\;
    \While{$d_{i_0} = \bot$}{
        choose $i^*$ such that $d_{i^*} \neq c_{i*}$ and for all $i$: $d_i \notin \ball{2r}{c_{i^*}}$ if defined\;
        \eIf{$i^* = 1$ or $i^* = i_0$}{
            set $d_{i_0} = c_{i^*}$\;
        }{
            set $d_{i^*} = c_{i^*}$\;
        }
    }
\end{algorithm}

\begin{claim*}
This algorithm has the following invariant: $\td$ is $r$-scattered and for each $i$, the entry $d_i$ is either undefined or associated to $\phi_i$.
\end{claim*}
\begin{claimproof}
Whenever we update an entry $d_i$, the new value is associated to $\phi_i$ by choice of $\tc$ and $i_0$.
For the first part, recall that $\tup a_{i_0}$ is $r$-scattered.
In each iteration, we choose $i^*$ so that $c_{i^*}$ is not close to any entry of $\td$, hence $\td$ remains $r$-scattered after each update.
\end{claimproof}

The algorithm clearly terminates:
the if-case defines $d_{i_0}$ (thus ending the loop), and the else-case can only be executed once per entry of $\td$, as we require $d_{i^*} \neq c_{i^*}$.
It remains to show that while $d_{i_0} = \bot$, an $i^*$ with the desired property always exists.
There are $m-1$ defined entries of $\td$.
Let $n = |\{j: d_j \in \tc\}|$ be the number of entries contained in $\tc$.
The remaining $(m-1)-n$ defined entries are from $\ta_{i_0}$.
Recall that $\tc$ is $2r$-scattered, hence the balls $(\ball {2r} {c_j})_{1 \le j \le n}$ are disjoint.
There are $m-n$ entries of $\tc$ that do not occur in $\td$, hence there is at least one entry $c_j$ such that $\ball {2r} {c_j}$ does not contain any of the $(m-1)-n$ entries from $\ta_{i_0}$.
We can then choose $i^* = j$.

This proves the algorithm correct, so there is always an $r$-scattered tuple $\td$ such that $d_i$ is associated to $\phi_i$ for all $i$.
We conclude that $\pi \ext {(1)} \le \pi \ext \Phi$ and hence $\Psi \mle \Phi$.
\end{proof}

\subsection{Gaifman Normal Form for $\E^*$-Sentences}

\begin{proposition}[Step \enumref{2}]
\label{thmGaifmanQuantifiers}
Every sentence of the form $\E x_1 \dots \E x_m \ \phi \loc r (x_1,\dots,x_m)$ is minmax-equivalent to an existential local sentence.
\end{proposition}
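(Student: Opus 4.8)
The plan is to reduce to the asymmetric basic local sentences of Step~\enumref{1} (\cref{thmGaifmanGeneralizedBasic}) by a case distinction over how a witness tuple clusters. Write $\Phi = \E x_1 \dots \E x_m\,\phi\loc r(\bar x)$ with $\bar x = (x_1,\dots,x_m)$, and assume w.l.o.g.\ that $r \ge 1$. For $P \in \Part(m)$, let $\mathrm{cfg}_P(\bar x)$ be the conjunction of the distance formulae of \cref{defConfiguration} expressing that $\bar x$ is in configuration $(P,r)$ (these are $\{0,1\}$-valued), and set $\Phi_P \coloneqq \E x_1 \dots \E x_m\,\bigl(\mathrm{cfg}_P(\bar x) \land \phi\loc r(\bar x)\bigr)$. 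The first thing I would prove is $\Phi \meq \biglor_{P \in \Part(m)} \Phi_P$. Here $\Phi \mge \Phi_P$ for each $P$ because $\mathrm{cfg}_P \land \phi\loc r \mle \phi\loc r$ and $\E$ is monotone; and $\Phi \mle \biglor_P \Phi_P$ because, fixing a min-max $K$-interpretation $\pi$ over a finite $A$ and a tuple $\bar a \in A^m$ maximising $\pi\ext{\phi\loc r(\bar x)}$, the \ClusteringLemma{} gives a partition $P$ with $\bar a$ in configuration $(P,r)$, so $\pi\ext{\mathrm{cfg}_P(\bar a)} = 1$ and hence $\pi\ext{\Phi_P} \ge \min\bigl(1,\pi\ext{\phi\loc r(\bar a)}\bigr) = \pi\ext{\phi\loc r(\bar a)} = \pi\ext\Phi$, using crucially that $1$ is the maximal element of $K$.

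Next I would show that each $\Phi_P$, for $P = \{I_1,\dots,I_k\}$ with representatives $i_l = \min I_l$, is minmax-equivalent to an existential local sentence; a disjunction of such is again one, which then finishes the proof. Set $\rho = 5^{m-k}r - r$. Reordering the quantifiers of $\Phi_P$ so that the representatives come first, pulling the scattering conjuncts $d(x_{i_l},x_{i_{l'}}) > 4 \cdot 5^{m-k}r$ of $\mathrm{cfg}_P$ (which involve only representatives) to the front and folding them into a scattered quantifier, and then rewriting $\E x_i\,\bigl(d(x_{i_l},x_i) \le \rho \land \chi\bigr) \meq \Eball{x_i}{\rho}{x_{i_l}}\,\chi$ for each non-representative $x_i \in I_l$ (valid since the guard is $\{0,1\}$-valued and $1$ is maximal), one obtains (all by \cref{thmAxioms} together with these two observations)
\[
    \Phi_P \meq \Esc{(2 \cdot 5^{m-k}r)}{x_{i_1},\dots,x_{i_k}}\;\Eball{(x_i)_{i \in I_1 \setminus \{i_1\}}}{\rho}{x_{i_1}} \cdots \Eball{(x_i)_{i \in I_k \setminus \{i_k\}}}{\rho}{x_{i_k}}\;\phi\loc r(\bar x).
\]
In any evaluation of this sentence the representatives are $(4 \cdot 5^{m-k}r)$-apart while every other variable --- including every bound variable of $\phi\loc r$, which is locally quantified within radius $r$ of a free variable --- lies within $5^{m-k}r$ of its representative, so variables from two different classes always have distance $> 2 \cdot 5^{m-k}r \ge 2r \ge 1$. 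Hence any literal of $\phi\loc r$ mentioning variables from two different classes is forced to $0$ if positive and to $1$ if negative, and replacing each such literal by $x_{i_1} \neq x_{i_1}$ resp.\ $x_{i_1} = x_{i_1}$ leaves the value of $\Phi_P$ unchanged; call the resulting matrix $\phi'\loc r(\bar x)$.

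Now every literal of $\phi'\loc r$ uses variables from a single class, so with $\bar x^l \coloneqq (x_i)_{i \in I_l}$ the \SeparationLemma{} gives $\phi'\loc r(\bar x) \meq \biglor_a \bigland_{l=1}^k \chi_{a,l}\loc r(\bar x^l)$ for local formulae $\chi_{a,l}$ around the tuples $\bar x^l$. Pushing the ball quantifiers $\Eball{(x_i)_{i \in I_l \setminus \{i_l\}}}{\rho}{x_{i_l}}$ inward --- over $\lor$ and past conjuncts not depending on class-$l$ variables (\cref{thmAxioms}) --- and observing that $\Eball{(x_i)_{i \in I_l \setminus \{i_l\}}}{\rho}{x_{i_l}}\,\chi_{a,l}\loc r(\bar x^l)$ is a local formula around the single representative $x_{i_l}$ of radius $\le \rho + r = 5^{m-k}r$, which we abbreviate $\hat\chi_{a,l}\loc{5^{m-k}r}(x_{i_l})$, one gets $\Phi_P \meq \Esc{(2 \cdot 5^{m-k}r)}{x_{i_1},\dots,x_{i_k}}\;\biglor_a \bigland_{l=1}^k \hat\chi_{a,l}\loc{5^{m-k}r}(x_{i_l})$. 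Reading each $\hat\chi_{a,l}$ (trivially) as $(2 \cdot 5^{m-k}r)$-local and distributing the scattered quantifier over the disjunction (\cref{thmAxioms}), $\Phi_P$ becomes a disjunction of sentences $\Esc{(2 \cdot 5^{m-k}r)}{x_{i_1},\dots,x_{i_k}} \bigland_{l=1}^k \hat\chi_{a,l}\loc{2 \cdot 5^{m-k}r}(x_{i_l})$, each of exactly the form handled by \cref{thmGaifmanGeneralizedBasic} and hence minmax-equivalent to an existential local sentence. Disjoining over $a$ and over all $P \in \Part(m)$ yields the desired existential local sentence.

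I expect the main obstacle to be not any single step but the combination of keeping the radius bookkeeping (the $5^{m-k}r$ factors inherited from \cref{defConfiguration}) consistent through the clustering and making sure each transformation is a genuine minmax-equivalence rather than a one-sided inequality. Two min-max-specific facts do the work: that $1$ is the maximal element, so conjoining a $\{0,1\}$-valued distance formula loses no information on the tuples satisfying it --- this is what makes both the partition-disjunction and the passage from ordinary to ball quantifiers exact; and that variables in different partition classes are provably far enough apart to annihilate every cross-class literal, which is exactly what makes the \SeparationLemma{} applicable and reduces everything to Step~\enumref{1}.
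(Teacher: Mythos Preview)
Your proposal is correct and follows essentially the same route as the paper: cluster via \cref{thmClustering} into a disjunction over partitions, rewrite each disjunct with a scattered quantifier over representatives and ball quantifiers for the rest, kill cross-class literals, apply the \SeparationLemma{} to decouple the classes, push the ball quantifiers into the conjunction, and finish with Step~\enumref{1}. The only cosmetic difference is that you first write $\Phi_P$ with ordinary quantifiers and an explicit $\{0,1\}$-valued configuration guard before converting to the scattered/ball form, whereas the paper jumps directly to that form; the substance and the radius bookkeeping are identical.
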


We proceed with step \enumref{2} of the proof.
The goal is to replace $\E \tx$ by a scattered quantifier $\Esc R {\tx}$, and to split $\phi \loc r(\tx)$ into separate formulae $\phi_i \loc R (x_i)$ local only around a single $x_i$ (the previous lemma allows us to use different $\phi_i$ for each $x_i$).
We achieve this by using a disjunction over all partitions $P \in \Part(m)$;
for a given partition $P = \{I_1,\dots,I_k\}$ we can do a scattered quantification of the representatives $x_{i_1},\dots,x_{i_k}$, as they are far apart.
The elements of each class $I_l$ can then be locally quantified around $x_{i_l}$.
Applying the \SeparationLemma{} allows us to split $\phi \loc r(\tx)$ into separate local formulae for each class.

\begin{proof}
By the \ClusteringLemma{}, we have $\Phi \meq \biglor_{P \in \Part(m)} \Phi_P$ for
\begin{align*}
    \Phi_P \coloneqq{} &\Esc {(2 \cdot 5^{m-k}r)} {x_{i_1},\dots,x_{i_k}} \\
        &\quad \Eball {(x_i)_{i \in I_1 \backslash \{i_1\}}} {5^{m-k}r - r} {x_{i_1}} \\
        &\quad\quad \dots \\
        &\quad\quad\quad \Eball {(x_i)_{i \in I_k \backslash \{i_k\}}} {5^{m-k}r - r} {x_{i_k}} \ \
        \phi \loc r (x_1,\dots,x_m),
\end{align*}
where $P = \{I_1,\dots,I_k\}$ with representatives $i_l \coloneqq \min I_l$, as usual.
Indeed, $\Phi_P$ evaluates to the maximum of $\pi \ext {\phi\loc{r}(\ta)}$ over all tuples $\ta$ in configuration $(P,r)$, and any tuple is in configuration $(P,r)$ for some partition $P$ by the lemma.

We fix a partition $P$ and show that $\Phi_P$ is minmax-equivalent to a local sentence; this implies the claim.
Notice that $\Phi_P$ is not yet a basic local sentence, as the inner formula depends on all quantified variables $x_{i_1},\dots,x_{i_k}$ at once, instead of being applied to all variables individually.
This can be achieved by the \SeparationLemma{}, after some preparation.
Let $X_i$ be the set of bound variables in $\phi \loc r(x_1,\dots,x_m)$ that are locally quantified around $x_i$.
We say that a literal $R\tup y$ (or an equality $y_1 = y_2$) is \emph{split by $P$} if $\tup y$ contains variables from $\{x_i\} \cup X_i$ and $\{x_j\} \cup X_j$ for some $i \in I_l$ and $j \in I_{l'}$ with $l \neq l'$, so $i$ and $j$ are in different partition classes.

Now let $\hat\phi \loc r (x_1,\dots,x_m)$ result from $\phi \loc r (x_1,\dots,x_m)$ by replacing all positive literals (including equalities) that are split by $P$ with \emph{false} (i.e., $x_1 \neq x_1$) and all negative literals split by $P$ with \emph{true} ($x_1 = x_1$).
Let further $\hat\Phi_P$ result from $\Phi_P$ by replacing $\phi \loc r$ with $\hat\phi \loc r$.
Then $\hat \Phi_P \meq \Phi_P$.
Indeed, consider a $K$-interpretation $\pi$ and two variables $y_i \in \{x_i\} \cup X_i$ and $y_j \in \{x_j\} \cup X_j$ with $i \in I_l$, $j \in I_{l'}$, $l \neq l'$.
Since $\phi$ is $r$-local, $y_i$ will be instantiated by an element at distance at most $r$ to the instantiation of $x_i$, which in turn has distance at most $5^{m-k}r-r$ to the element assumed by the representative $x_{i_l}$ of $I_l$.
Hence (the instantiations of) $y_i$ and $y_j$ both have distance at most $5^{m-k}r$ to their representatives $x_{i_l}$ and $x_{i_{l'}}$, which are at least $4 \cdot 5^{m-k}r$ apart.
Thus, $y_i$ and $y_j$ are at a distance greater than $2 \cdot 5^{m-k}r \ge 2$, so $\pi$ evaluates all positive literals and equalities involving $y_i$ and $y_j$ to $0$, and negative ones to $1$.

By the \SeparationLemma{}, $\hat \phi_P$ is minmax-equivalent to a positive Boolean combination of local formulae $\theta_l \loc r$ around $\{x_i: i\in I_l\}$ for each $l$.
By putting the Boolean combination in DNF and using the fact that $\E$ distributes over $\lor$ (cf.\ \cref{thmAxioms}), $\hat\Phi_P$ is minmax-equivalent to a disjunction over sentences of the form
\begin{align*}
    &\Esc {(2 \cdot 5^{m-k}r)} {x_{i_1},\dots,x_{i_k}} \\
        &\quad \Eball {(x_i)_{i \in I_1 \backslash \{i_1\}}} {5^{m-k}r - r} {x_{i_1}} \\
        &\quad\quad \dots \\
        &\quad\quad\quad \Eball {(x_i)_{i \in I_k \backslash \{i_k\}}} {5^{m-k}r - r} {x_{i_k}} \ 
        \vphantom{\bigland_{l=1}}\smash{\bigland_{l=1}^{k}} \theta_l \loc r (\{x_i : i \in I_l\}).
\end{align*}
Notice that each conjunct $\theta_l \loc r (\{x_i : i \in I_l\})$ is independent of the variables $x_i$ for $i \notin I_l$.
We can thus write such a sentence equivalently as a local sentence (using \cref{thmGaifmanGeneralizedBasic}):
\[
    \Esc {(2 \cdot 5^{m-k}r)} {x_{i_1},\dots,x_{i_k}} \ \bigland_{l=1}^k \Tilde \theta_l^{(5^{m-k}r)}(x_{i_l}),
\]
with $\displaystyle\Tilde \theta_l \loc {5^{m-k}r} (x_{i_l}) = \Eball {(x_i)_{i \in I_l \backslash \{i_l\}}} {5^{m-k}r - r} {x_{i_l}} \ \theta_l \loc r(\{x_i : i \in I_l\})$.
\end{proof}

\medskip
\begin{remark}
\label{remarkExistentialGaifman}
\Cref{thmGaifmanQuantifiers} implies that every existential first-order sentence is minmax-equivalent to positive Boolean combination of existential basic local sentences.
This is very similar to a result of Grohe and Wöhrle \cite{GroheWoe04} for standard Boolean semantics.
In fact, their statement reads exactly the same.

However, their definition of \emph{existential (basic) local sentences} is more strict, as the inner local formula has to be purely existential.
This is not the case in our construction, since \cref{thmGaifmanGeneralizedBasic} adds distance formulae $d(x,y) > 2r$ within a local formula, which are abbreviations for universal quantifiers.
\end{remark}

\subsection{Gaifman Normal Form for $\E^*\A^*$-Sentences}

It remains to show \enumref{3}, which is the most challenging step of the proof.
The overall structure of our proof is similar to Gaifman's original proof \cite{Gaifman82}, but each individual step is much more involved due to the stronger notion of equivalence we consider.
Our proof proceeds as follows.

\begin{proof}[Proof outline]
Starting from a $\E^*\A^*$-sentence of the form
\[
    \Psi_1 = \E \tup y \A \tx \, \phi \loc r(\tup y, \tx),
\]
we first split $\A \tx$ into quantification over elements close to $\ty$, and elements far from $\ty$.
By the \SeparationLemma{}, we can split $\phi \loc r$ into a local formula around the close elements, and a local formula around the far elements.
It follows that $\Psi_1$ is minmax-equivalent to a positive Boolean combination of sentences of the form
\[
    \Psi_2 = \E \tup y \big( \phi\loc{R}(\tup y) \quad\land\quad \A x_{1} \notin \ball{r_1}{\tup y}\dots\A x_{N} \notin \ball{r_N}{\tup y} \ \psi^{(r)}(\tx) \big),
\]
for a sufficiently large number $N$ and radii $r_1,\dots,r_N,R$.
Notice that here we use \emph{outside quantifiers} of the form $\Anotball x r y$ with the obvious semantics.
The rest of the proof is concerned with turning the outside quantifiers into a local sentence.

\medskip
As a first (non-trivial) step, we show that we can restrict the outside quantifiers to a single one.
That is, each sentence $\Psi_2$ is minmax-equivalent to a disjunction over sentences of the form
\[
    \Psi_3 = \E \tup w \big( \phi \loc {r} (\tup w) \;\land\; \Anotball z {r'} {\tup w} \; \psi \loc {r'} (z) \big).
\]
By similar reasoning as in step \enumref{2}, we can then rewrite $\E \tup u$ as scattered quantification.
That is, $\Psi_3$ is minmax-equivalent to a positive Boolean combination of sentences of the form
\[
    \Psi_4 = \Esc {r}{u_1,\dots,u_M} \Big( \bigland_{i=1}^M \phi_i \loc{r}(u_i) \;\land\; \Anotball{z}{r'}{\tup u} \; \psi\loc{r''}(z) \Big).
\]
For sentences of the form $\Psi_4$, we can directly specify a minmax-equivalent sentence, that can be written as a local sentence by the previous results, without outside quantifiers.
\end{proof}

\subsubsection{Close and far}

In the first step, we split the quantification over $\tx$ into those $\tx$ that are close to $\ty$, and those far from $\ty$.

\begin{proposition}
\label{thmGaifmanAlternation1}
Sentences of the form $\E \tup y \A \tx \, \phi \loc r(\tup y, \tx)$ are equivalent to positive Boolean combinations of sentences of the form $\E \tup y (\phi\loc{R}(\tup y) \land \Anotball{x_{1}}{r_1}{\tup y}\dots\Anotball{x_{N}}{r_N}{\tup y} \ \psi^{(r)}(\tx))$.
\end{proposition}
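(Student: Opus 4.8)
The plan is to eliminate the leading $\A\tx$-block in two stages. First I would restrict each universal quantifier to range either over a bounded neighbourhood of $\ty$ or over its complement; this rests only on the arithmetic identity $\prod_{a\in A}f(a)=\prod_{a\in S}f(a)\cdot\prod_{a\notin S}f(a)$, so $\A x\,\theta\meq(\Aball x s \ty\,\theta)\land(\Anotball x s \ty\,\theta)$ for every radius $s$. A per-variable cutoff is too fine for the second stage, though: a variable placed ``far'' from $\ty$ can still be adjacent, through a literal of $\phi\loc r$, to a variable placed ``close'' to $\ty$, so no single threshold separates the two regions. I would therefore apply the \ClusteringLemma{} to the whole tuple $\ty\tx$ and write $\A\tx\,\phi\loc r(\ty,\tx)\meq\bigwedge_{P}\Phi_P$, where $P$ ranges over the partitions of the index set (ordered so that $\ty$ precedes $\tx$) and $\Phi_P$ is the universal quantification restricted to those $\tx$ for which $\ty\tx$ is in configuration $(P,r)$. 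This is a minmax-equivalence because every tuple lies in some configuration and $\min$ is idempotent, so over-counting tuples lying in several configurations is harmless. In configuration $(P,r)$ distinct classes are more than $2r$ apart, hence no literal of $\phi\loc r$ connects the $r$-neighbourhood of a class that meets $\ty$ with that of a class that does not; such ``split'' literals are constantly $0$ if positive and $1$ if negative, and I would replace them by $\tx\neq\tx$ and $\tx=\tx$ respectively.

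Now fix $P$, let $\tu$ collect the $\tx$-variables lying in a class that meets $\ty$ (these are confined to a ball of some radius $\rho$ around $\ty$) and $\tv$ the remaining ones (confined outside that ball). After the replacement of split literals, grouping $\ty\tu$ together and $\tv$ together satisfies the hypothesis of the \SeparationLemma{}, so, passing to conjunctive normal form, $\phi\loc r(\ty,\tx)\meq\bigwedge_j(\alpha_j\lor\beta_j)$ with each $\alpha_j$ an $r$-local formula around $\ty\tu$ and each $\beta_j$ an $r$-local formula around $\tv$. Since $\A$ distributes over $\land$ (\cref{thmAxioms}), it is enough to treat a single clause, and the min-max identity $\min_u\min_v\max(f(u),g(v))=\max(\min_u f,\min_v g)$ yields
\[
    \Aball{\tu}{\rho}{\ty}\;\Anotball{\tv}{\rho}{\ty}\,(\alpha_j\lor\beta_j)\;\meq\;\big(\Aball{\tu}{\rho}{\ty}\,\alpha_j\big)\;\lor\;\big(\Anotball{\tv}{\rho}{\ty}\,\beta_j\big),
\]
with the obvious reading if one of the two ranges is empty. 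The left disjunct is a local formula around $\ty$ of some larger radius $R$: the prefix $\Aball{\tu}{\rho}{\ty}$ already consists of ball quantifiers around $\ty$, and every quantifier of $\alpha_j$ that was local around some $x_i\in\tu$ is requantified around $\ty$ with its radius growing by at most $\rho$ (the requantifying transformation introduced for local formulae in \cref{secGaifmanDefinitions}). The right disjunct is an outside-quantifier block $\Anotball{v_1}{\rho}{\ty}\dots\Anotball{v_N}{\rho}{\ty}\,\psi\loc r(\tv)$.

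To reassemble, I would collect over all $P$ and all clauses $j$: $\A\tx\,\phi\loc r(\ty,\tx)$ is then minmax-equivalent to a conjunction of disjunctions $\phi\loc R(\ty)\lor\Anotball{\tv}{\rho}{\ty}\,\psi\loc r(\tv)$, and distributing $\land$ over $\lor$ (\cref{thmAxioms}) turns this into a disjunction of terms, each a conjunction of local formulae around $\ty$ and of outside-quantifier blocks. A conjunction of local formulae around $\ty$ collapses to a single one after raising radii, and a conjunction of outside-quantifier blocks collapses to a single block because the outside quantifiers commute and a product of two minima over disjoint variable ranges is again a minimum over the product range, with the conjoined inner formula still $r$-local. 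Finally $\E\ty$ distributes over the outer disjunction (\cref{thmAxioms}), giving a positive Boolean combination of sentences $\E\ty(\phi\loc R(\ty)\land\Anotball{x_1}{r_1}{\ty}\dots\Anotball{x_N}{r_N}{\ty}\,\psi\loc r(\tx))$ as required. I expect the main obstacle to be exactly the coarsening step: a cheap close/far split leaves literals straddling the boundary, so the \ClusteringLemma{} is needed to make the \SeparationLemma{} applicable; a secondary point that needs care is verifying that absorbing the ``close'' ball quantifiers produces a formula local around $\ty$ alone and not merely around $\ty\tu$.
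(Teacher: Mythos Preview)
Your overall strategy---decompose $\A\tx$ into close/far pieces, replace literals that straddle the boundary, apply the \SeparationLemma{}, then reassemble---is the right one and matches the paper's. The gap is in how you realise the decomposition as a quantifier prefix.

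You correctly note that under the configuration $(P,r)$ the $\ty$-classes and the $\tv$-classes are more than $2r$ apart, so split literals are constant and may be replaced. But you then quantify with $\Aball{\tu}{\rho}{\ty}\,\Anotball{\tv}{\rho}{\ty}$ using a \emph{single} radius $\rho$ for both sides. That prefix ranges over strictly more tuples than the configuration: a $u$ at distance exactly $\rho$ from $\ty$ and a $v$ at distance $\rho+1$ satisfy the prefix yet can be adjacent in the Gaifman graph, so a variable $r$-local around $u$ and one $r$-local around $v$ can coincide, and the split literal you replaced by \emph{false} (or \emph{true}) need not have that value there. Hence $\hat\phi_P$ and $\phi$ disagree on such boundary tuples, and the conjunction $\bigland_P \Aball{\tu}{\rho_P}{\ty}\Anotball{\tv}{\rho_P}{\ty}\,\hat\phi_P$ is \emph{not} minmax-equivalent to $\A\tx\,\phi$. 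You flag precisely this boundary issue as ``the main obstacle'', but the \ClusteringLemma{} only resolves it \emph{under the configuration constraint}, which you immediately relax to the single-radius ball/complement split.

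A direct repair is to separate the two radii: quantify $\Aball{\tu}{\rho_1}{\ty}$ and $\Anotball{\tv}{\rho_2}{\ty}$ with $\rho_2-\rho_1>2r$, so the $r$-neighbourhoods of $\tu$ and $\tv$ are disjoint throughout the quantifier range; the \ClusteringLemma{} still shows every tuple lies in the region of its clustering partition (with $\rho_1\approx 5^{n-k}r-r$ and $\rho_2$ comfortably larger). The paper takes a different route: rather than clustering, it iterates over all sequences $s\in\{-1,1\}^n$ of in/out choices and assigns each $x_i$ a radius $r_{i,s}$ that depends on the \emph{preceding} choices, namely $r_{i,s}=(2^{n-1}+\sum_{j<i}s_j 2^{n-1-j})\cdot 4r$, so that any ``in'' variable and any ``out'' variable are always at distance $\ge 4r>2r+1$. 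This avoids the \ClusteringLemma{} altogether at the cost of a $2^n$-fold case split with tailored radii.
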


\begin{proof}[Proof sketch]
We first sketch the idea for a tuple $\tx$ of length $2$.
In order to split $\phi \loc r(\tup y, \tx)$ into a formula local around $\tup y$ and one that is far from $\tup y$, we replace each variable $x_i$ with two variables $x_i^{\textit{in}}$ and $x_i^{\textit{out}}$, using the equivalence
\[
    \A x \ \theta(x,\tup z) \quad\meq\quad
    \Aball{x^{\textit{in}\,}}{r_0}{\tup z} \ \theta(x^{\textit{in}},\tup z) \;\land\;
    \Anotball{x^{\textit{out}\,}}{r_0}{\tup z} \ \theta(x^{\textit{out}},\tup z).
\]
For instance, we would write $\A x_1 \A x_2 \ \phi \loc r(\tup y,x_1,x_2)$ equivalently as
{\small
\begin{align*}
    \Aball{x_1^{\textit{in}}}{r_0}{\tup y} &\Big( \Aball{x_2^{\textit{in}}}{r_0}{\tup y} \ \phi \loc r(\ty, x_1^{\textit{in}},x_2^{\textit{in}})
    \;\land\;
    \Anotball {x_2^{\textit{out}}}{r_0}{\tup y} \ \phi \loc r(\ty, x_1^{\textit{in}},x_2^{\textit{out}}) \Big) \\
    {}\land \Anotball {x_1^{\textit{out}}}{r_0}{\tup y} &\Big( \Aball{x_2^{\textit{in}}}{r_0}{\tup y} \ \phi \loc r(\ty, x_1^{\textit{out}},x_2^{\textit{in}})
    \;\land\;
    \Anotball {x_2^{\textit{out}}}{r_0}{\tup y} \ \phi \loc r(\ty, x_1^{\textit{out}},x_2^{\textit{out}}) \Big).
\end{align*}}%

We want to apply the \SeparationLemma{} to achieve the desired form. 
To this end, we choose $r_0 > 2r+1$ so that variables locally quantified around $x_i^{\textit{out}}$ are not close to variables locally quantified around $\ty$ in $\phi \loc r$.
This permits us to replace all atoms (and equalities) involving both variables local around $x_i^{\textit{out}}$ and around $\ty$ by \emph{false} (i.e., by $y_1 \neq y_1$).
However, we still cannot apply the Separation Lemma to $\{\ty,x_1^{\textit{in}},x_2^{\textit{in}}\}$ and $\{x_1^{\textit{out}},x_2^{\textit{out}}\}$, since, e.g., $x_1^{\textit{in}}$ and $x_2^{\textit{out}}$ may be close, and may both appear in a literal that cannot be removed.
We solve this issue by taking different radii for $x_i$ depending on the order of quantification, so that variables $x_i^{\textit{in}}$ and $x_j^{\textit{out}}$ for $i \neq j$ are always at distance $\ge r_0$.
For instance,
{\small
\begin{align*}
    \Aball{x_1^{\textit{in}}}{2r_0}{\tup y} &\Big( \Aball{x_2^{\textit{in,in}}}{3r_0}{\tup y} \ \phi \loc r(\ty,x_1^{\textit{in}},x_2^{\textit{in,in}})
    \;\land\;
    \Anotball{x_2^{\textit{in,out}}}{3r_0}{\tup y} \ \phi \loc r(\ty,x_1^{\textit{in}},x_2^{\textit{in,out}}) \Big) \\
    \!\!\!{}\land \Anotball{x_1^{\textit{out}}}{2r_0}{\tup y} &\Big( \Aball{x_2^{\textit{out,in}}}{r_0}{\ty} \ \phi \loc r(\ty,x_1^{\textit{out}},x_2^{\textit{out,in}})
    \;\land\;
    \Anotball{x_2^{\textit{out,out}}}{r_0}{\tup y} \ \phi \loc r(\ty,x_1^{\textit{out}},x_2^{\textit{out,out}}) \Big).
\end{align*}}%
In this example, $x_1^{\textit{in}}$ and $x_2^{in,out}$ are at distance $ \ge r_0$, and so are $x_1^{\textit{out}}$ and $x_2^{out,in}$.
\end{proof}

\begin{proof}
To generalise this idea to $n$ variables $\tx$, we annotate variables by sequences $s \in \{-1,1\}^n$.
That is, for each sequence $s$ we introduce variables $x_1^{s}, x_2^{s},\dots,x_n^{s}$ with the intention that $s_i = 1$ indicates that $x_i^s \in \ball {r_{i,s}} {\ty}$, whereas $s_i = -1$ means that $x_i^s \notin \ball {r_{i,s}} {\ty}$.
We set the radius as $r_{i,s} = (2^{n-1} + \sum_{j=1}^{i-1}s_j 2^{n-1-j})r_0$, with $r_0 = 4r$ (so that $r_0 > 2r+1$).
With those notations, we have:
\begin{align*}
    \A \tx \ \phi \loc r(\tup y,\tx) \;\meq
    \bigland_{s \in \{-1,1\}^n} Q_{s_1}x_1^{s} \dots Q_{s_i}x_i^{s} \dots Q_{s_n}x_n^s \ \phi \loc r(\tup y, x_1^{s},\dots,x_n^s),
\end{align*}
where $Q_1 x_i = \Aball{x_i}{r_{i,s}}{\ty}$ and $Q_{-1} x_i = \Anotball {x_i}{r_{i,s}}{\ty}$.
It remains to verify that for each sequence $s$, variables $x_i^s$ with $s_i=1$ and $x_j^s$ with $s_j = -1$ are at distance $\ge r_0$, as intended.
First assume $i < j$.
Then
\begin{align*}
    r_{j,s} =
    r_{i,s} + \big(s_i 2^{n-1-i} + \sum_{\mathclap{k = i+1}}^{j-1}s_k2^{n-1-k}\big)r_0 \ge
    r_{i,s} + \big(2^{n-1-i} - \sum_{\mathclap{k = i+1}}^{j-1}2^{n-1-k}\big)r_0 \ge
    r_{i,s} + r_0,
\end{align*}
so $x_j^s$ is outside of a $(r_{i,s}+r_0)$-ball around $\ty$, and hence at distance $\ge r_0$ to $x_i^s$.
If $j < i$, a similar computation gives $r_{i,s} \le r_{j,s} - r_0$, so again $x_j^s$ is at distance $\ge r_0$ to $x_i^s$.

Fixing a sequence $s$, we write $\tx^s_{\text{in}} = \{ x_i^s \mid s_i = 1\}$ and $\tx^s_{\text{out}} = \{ x_i^s \mid s_i = -1\}$.
We can remove those literals in $\phi \loc r(\ty,x_1^s,\dots,x_n^s)$ having both a variable close to $\ty$, that is a variable local around $\ty \cup \tx^s_{\text{in}}$, and a variable far from $\ty$, i.e., local around $\tx^s_{\text{out}}$.
By the \SeparationLemma{}, we obtain an equivalent positive Boolean combination of formulae local around $\ty \cup \tx^s_{\text{in}}$ and formulae local around $\tx^s_{\text{out}}$.
Writing the Boolean combination in CNF and simplifying (using \cref{thmAxioms}), it follows that $\E \ty \A \tx \; \phi \loc r(\ty,\tx)$ is minmax-equivalent to
\begin{align*}
    &\E\ty \bigland_s \Big(
        \big(\Aball {x_i^s} {r_{i,s}} {\ty} \big)_{x_i^s \in \tx^s_{\text{in}}} \;
        \big(\Anotball {x_i^s} {r_{i,s}} {\ty} \big)_{x_i^s \in \tx^s_{\text{out}}} \;
        \bigland_j (\theta_{j,s} \loc r (\ty,\tx^s_{\text{in}}) \,\lor\, \psi_{j,s} \loc r (\tx^s_{\text{out}})) \Big) \\
    \meq{} &\E\ty \bigland_s \bigland_j \Big( \tilde\theta_{j,s} \loc R (\ty) \,\lor\, 
        \big(\Anotball {x_i^s} {r_{i,s}} {\ty} \big)_{x_i^s \in \tx^s_{\text{out}}} \psi_{j,s} \loc r (\tx^s_{\text{out}}) \Big),
\end{align*}
with $R = 2^n r_0$ so that $\tx^s_{\text{in}}$ can be locally quantified within $\tilde\theta_j \loc R(\ty)$.
By putting this in DNF, we see that $\E \ty \A \tx \; \phi \loc r(\ty,\tx)$ is minmax-equivalent to a disjunction over sentences of the form
\[
    \E \ty \Big(\theta \loc R (\ty) \,\land\,
    \big(\Anotball {x_i} {r_i} {\ty} \big)_{1 \le i \le N} \ \psi \loc r (x_1,\dots,x_N) \Big),
\]
where $N \le n \cdot 2^n$ to account for the quantification over $\tx^s_{\text{out}}$ for all $2^n$ choices of $s$,
and $r_1,\dots,r_N,R \le 2^n r_0$ due to the choice of the radii $r_{i,s}$.
\end{proof}

\subsubsection{Restricting to one outside quantifier}

In the next step, we prove that the outside quantifiers $\Anotball{x_{1}}{r_1}{\tup y}\dots\Anotball{x_{N}}{r_N}{\tup y} \ \psi^{(r)}(\tx)$ can be collapsed into a single outside quantifier.
We achieve this by applying the analogue of \cref{thmGaifmanQuantifiers} for universal sentences (with the help of the \AbstractionLemma{}).
To turn the result into an existential sentence (including a single outside quantifier), the following auxiliary lemma is crucial.

\begin{lemma}
\label{thmGaifmanAlternationUnivToExist}
Let $\Theta = \Asc {r} {x_1,\dots,x_n} \biglor_{i=1}^n \phi \loc r(x_i)$ be a universal basic local sentence. Then $\Theta \meq \Psi$ for
\begin{align*}
    \Psi \coloneqq \E v_1 \dots \E v_{n-1} \Big(&\Aball{x_1}{2r}{\tup v} \dots \Aball{x_n}{2r}{\tup v} \big( \biglor_{i<j} d(x_i,x_j) \le 2r \lor \biglor_{i=1}^n \phi \loc r(x_i)\big) \\
    &{}\land \Anotball{x}{2r}{\tup v} \ \phi \loc r(x) \Big).
\end{align*}
\end{lemma}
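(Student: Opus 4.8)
The plan is to evaluate both sides explicitly in an arbitrary min-max semiring and compare. First I would fix a min-max semiring $K$, a $K$-interpretation $\pi$ with universe $A$, and abbreviate $f(a) \coloneqq \pi\ext{\phi\loc r(a)}$. Since $\max$ is idempotent with top element $1$ and the distance conditions $d(x_i,x_j) \le 2r$, $d(x,y) > 2r$ can be written to evaluate only to $0$ or $1$ (cf.\ the discussion around \cref{defLocalFormula}), unfolding the scattered and ball quantifiers collapses all products to minima:
\[
    \pi\ext{\Theta} \;=\; \min_{\ta}\ \max_{1 \le i \le n} f(a_i) \;\eqqcolon\; t^\ast ,
\]
where $\ta = (a_1,\dots,a_n)$ ranges over all $r$-scattered $n$-tuples from $A$ (non-scattered tuples, and the empty product when no $r$-scattered $n$-tuple exists, contribute the neutral value $1$). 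Likewise, for $\tup v = (v_1,\dots,v_{n-1}) \in A^{n-1}$ the first conjunct of $\Psi$ evaluates to $C(\tup v) \coloneqq \min_{\ta} \max_{i} f(a_i)$ with $\ta$ ranging over $r$-scattered $n$-tuples contained in $\ball{2r}{\tup v}$, the outside quantifier evaluates to $O(\tup v) \coloneqq \min_{a \notin \ball{2r}{\tup v}} f(a)$, and hence $\pi\ext{\Psi} = \max_{\tup v \in A^{n-1}} \min\big(C(\tup v), O(\tup v)\big)$ (again with empty products read as $1$).

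For the inequality $\Psi \mle \Theta$ I would fix $\tup v$ and pick an $r$-scattered $n$-tuple $\ta^\ast$ with $\max_i f(a^\ast_i) = t^\ast$. If every entry of $\ta^\ast$ lies in $\ball{2r}{\tup v}$, then $\ta^\ast$ is admissible in $C(\tup v)$, so $C(\tup v) \le t^\ast$; otherwise $a^\ast_j \notin \ball{2r}{\tup v}$ for some $j$, so $O(\tup v) \le f(a^\ast_j) \le t^\ast$. In both cases $\min(C(\tup v), O(\tup v)) \le t^\ast$, and since $\tup v$ was arbitrary, $\pi\ext{\Psi} \le t^\ast = \pi\ext{\Theta}$.

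For the converse $\Theta \mle \Psi$ it is enough to exhibit a single witness $\tup v$ with $\min(C(\tup v), O(\tup v)) \ge t^\ast$. I would set $L \coloneqq \{a \in A : f(a) < t^\ast\}$; by the very definition of $t^\ast$ as a minimum over $r$-scattered $n$-tuples, $L$ contains no $r$-scattered subset of size $n$. Choose a maximal (under inclusion) $r$-scattered subset $V = \{v_1,\dots,v_p\} \subseteq L$; then $p \le n-1$, and by maximality every element of $L$ has distance $\le 2r$ to some $v_i$, i.e.\ $L \subseteq \ball{2r}{V}$. Padding $V$ with repetitions --- or with an arbitrary element of $A$ if $L = \emptyset$ --- to a tuple $\tup v$ of length $n-1$ keeps $L \subseteq \ball{2r}{\tup v}$. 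Then every $a \notin \ball{2r}{\tup v}$ lies outside $L$, so $f(a) \ge t^\ast$ and thus $O(\tup v) \ge t^\ast$; and any $r$-scattered $n$-tuple contained in $\ball{2r}{\tup v}$ cannot be entirely inside $L$, so one of its entries has $f$-value $\ge t^\ast$ and its maximum is $\ge t^\ast$, whence $C(\tup v) \ge t^\ast$. Therefore $\pi\ext{\Psi} \ge t^\ast$, and together with the first direction $\pi\ext{\Psi} = t^\ast = \pi\ext{\Theta}$ for every min-max $K$-interpretation $\pi$, i.e.\ $\Theta \meq \Psi$.

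The bookkeeping in the first paragraph is routine; the one genuinely delicate point is the construction of the witness $\tup v$ in the last paragraph. The key insight is that ``$L$ carries no $r$-scattered $n$-subset'' is precisely the statement that the minimum defining $t^\ast$ is attained, and that a maximal $r$-scattered subset of $L$ is at once short enough to fit into the $n-1$ existential quantifiers of $\Psi$ and, via its $2r$-balls, wide enough to cover all of $L$ --- which is exactly what makes both $C(\tup v) \ge t^\ast$ and $O(\tup v) \ge t^\ast$ hold simultaneously.
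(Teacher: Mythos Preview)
Your proof is correct and follows essentially the same strategy as the paper: both directions are argued by case analysis on whether a minimising scattered tuple meets the ball around the witness $\tup v$, and for $\Theta \mle \Psi$ the witness $\tup v$ is obtained as a maximal $r$-scattered subset of a ``bad'' set that cannot contain an $r$-scattered $n$-tuple. The only difference is cosmetic: you take the bad set to be $L = \{a : f(a) < t^\ast\}$, whereas the paper uses the complement of $A' = \{c : c \text{ is the maximising entry of some $r$-scattered $n$-tuple}\}$; your choice is arguably the more direct of the two, and the remaining argument (maximality of $V$ forces $L \subseteq \ball{2r}{\tup v}$, hence $O(\tup v) \ge t^\ast$, while $C(\tup v) \ge t^\ast$ is immediate) is identical in spirit.
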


\begin{proof}
Let $\pi$ be a $K$-interpretation over universe $A$.
We first prove $\pi \ext {\Psi} \le \pi \ext {\Theta}$.
Recall that $\pi \ext {\Theta}$ is the minimum of $\max_i \pi \ext {\phi \loc r(a_i)}$ over all $r$-scattered tuples $\ta$.
Thus, it suffices to show that for all $r$-scattered tuples $\ta$, there exists $i$ such that $\pi \ext {\Psi} \le \pi \ext {\phi \loc r(a_i)}$.

Let $\ta$ be any $r$-scattered $n$-tuple and let $\tb$ be the $(n-1)$-tuple at which the maximum in $\Psi$ is reached.
If $a_i \notin \ball{2r}{\tb}$ for some $i$, then $\pi \ext {\Anotball{x}{2r}{\tb} \ \phi \loc r(x)} \le \pi \ext {\phi \loc r(a_i)}$.
Otherwise, choosing $\ta$ as witness for $x_1,\dots,x_n$ gives
\[
    \pi \Ext {\biglor_{i<j} d(a_i,a_j) \le 2r \lor \biglor_{i=1}^n \phi \loc r(a_i)} \;=\;
    \pi \Ext {\biglor_{i=1}^n \phi \loc r(a_i)} \;\le\;
    \pi \ext {\phi \loc r(a_{i_0})},
\]
for an index $i_0$ at which $\max_i \pi \ext {\phi \loc r(a_i)}$ is reached.
In both cases, we have $\pi \ext \Psi \le \pi \ext {\phi \loc r(a_i)}$ for some $i$, and hence $\pi \ext \Psi \le \pi \ext \Theta$.

\medskip
It remains to prove $\pi \ext {\Psi} \ge \pi \ext {\Theta}$.
We define $A' = \{ c \in A \mid$ $c \in \ta$ for an $r$-scattered tuple $\ta$ with $\pi \ext {\phi \loc r (c)} = \max_i \pi \ext {\phi \loc r (a_i)} \}$, so that for every $r$-scattered tuple $\ta$, its ``best'' entries (the ones maximizing $\phi$) are contained in $A'$.
Now let $\tb$ be an $r$-scattered tuple of maximal length with $b_i \notin A'$ for all $i$.
Clearly $|\tb| \le n-1$ by construction of $A'$.
If $|\tb|=k < n-1$, we extend $\tb$ to length $n-1$ by duplicating the last element $b_k$ (the resulting tuple is no longer $r$-scattered).
Using the resulting tuple $\tb$ as witness for $\tup v$, we claim that
\[
    \pi \ext {\Anotball{x}{2r}{\tb} \ \phi \loc r(x)} \;\ge\; \pi \ext \Theta.
\]
Indeed, consider any element $d \notin \ball{2r}{\tb}$.
By maximality of $\tb$, we must have $d \in A'$.
Hence $d$ is the ``best'' entry of an $r$-scattered tuple $\ta$, so $\pi \ext {\phi \loc r(d)} \ge \pi \ext \Theta$.

Moreover, observe that for any tuple $\tb$,
\[
    \pi \ext {\Aball{x_1}{2r}{\tb} \dots \Aball{x_n}{2r}{\tb} \big(\biglor_{i<j} d(x_i,x_j) \le 2r \lor \biglor_{i=1}^n \phi \loc r(x_i)\big)}
    \;\ge\;
    \pi \ext {\Theta}.
\]
To see this, let $\tc$ be the assignment to $\tx$ at which the minimum is reached.
Then $\tc$ is either $r$-scattered and thus a potential witness for $\Theta$, or the left-hand side evaluates to the greatest element $1$ due to $d(x_i,x_j) \le 2r$.
Together, we have shown $\pi \ext {\Psi} \ge \pi \ext {\Theta}$.
\end{proof}

We further need the following observation to combine several outside quantifiers (originating from different basic local sentences).
Intuitively, we can simply enlarge the radius to a sufficient size $R$, and add local formulae that take care of everything closer than $R$.

\begin{lemma}
\label{thmNotballCombine}
Let $\tx_1,\dots,\tx_n,\ty$ be variable tuples.
$\bigland_{i=1}^n \Anotball z {r_i} {\tup x_i} \, \phi_i \loc {r_i} (z \mid \ty)$ is minmax-equivalent to a formula of the following form, for any $R \ge \max_i r_i$,
\[
    \Anotball z R {\tx_1,\dots,\tx_n} \, \tilde\phi \loc {\max_i r_i} (z \mid \ty) \;\land\; \theta \loc{2R} (\tx_1,\dots,\tx_n \mid \ty).
\]
\end{lemma}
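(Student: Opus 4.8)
The plan is to exhibit the two formulae explicitly and then verify the claimed equivalence pointwise over every min-max semiring. Set $\rho \coloneqq \max_i r_i$, so that $\rho \le R$ by hypothesis, define $\tilde\phi \loc{\rho}(z \mid \ty) \coloneqq \bigland_{i=1}^n \phi_i \loc{r_i}(z \mid \ty)$ and
\[
    \theta \loc{2R}(\tx_1,\dots,\tx_n \mid \ty) \;\coloneqq\; \bigland_{i=1}^n \Aball{z}{R}{\tx_1,\dots,\tx_n} \big( d(z,\tx_i) \le r_i \,\lor\, \phi_i \loc{r_i}(z \mid \ty) \big).
\]
Since each $\phi_i$ is $r_i$-local around $z$ and $r_i \le \rho$, the conjunction $\tilde\phi$ is $\rho$-local around $z$, with $\ty$ not locally quantified. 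For $\theta$, the outer ball quantifiers are around $\tx_1,\dots,\tx_n$ with radius $R$, while the ball quantifiers inside each $\phi_i$ (and inside the distance formula $d(z,\tx_i) \le r_i$) are around $z$ with radius at most $r_i \le R$; requantifying those through the outer quantifier --- the ``requantifying'' property of ball quantifiers discussed after \cref{defLocalFormula} --- shows that $\theta$ is $2R$-local around $\tx_1,\dots,\tx_n$, again with $\ty$ not locally quantified. Both formulae thus already have the required shape, so it remains to prove $\Phi \meq \Anotball{z}{R}{\tx_1,\dots,\tx_n}\, \tilde\phi \loc{\rho}(z\mid\ty) \,\land\, \theta \loc{2R}(\tx_1,\dots,\tx_n\mid\ty)$.

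To verify this equivalence I would fix an arbitrary min-max $K$-interpretation $\pi$ with finite universe $A$, an instantiation $\ta_1,\dots,\ta_n$ of $\tx_1,\dots,\tx_n$ and $\tb$ of $\ty$, and abbreviate $S \coloneqq \ballpi{R}{\ta_1,\dots,\ta_n}$. The crucial observation is that in a min-max semiring the product aggregating a universal (ball) quantifier is just the minimum over the relevant elements, so such products may be freely reordered, regrouped, and split. Using $r_i \le R$ we have $\ballpi{r_i}{\ta_i} \subseteq S$ for each $i$, hence the range $A \setminus \ballpi{r_i}{\ta_i}$ of the $i$-th outside quantifier of $\Phi$ decomposes as the set $A \setminus S$, which does not depend on $i$, together with $\{ c \in S : d(c,\ta_i) > r_i \}$. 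Writing $P_i^{\mathrm{far}} \coloneqq \min_{c \in A \setminus S} \pi\ext{\phi_i\loc{r_i}(c,\tb)}$ and $P_i^{\mathrm{close}} \coloneqq \min_{c \in S,\, d(c,\ta_i) > r_i} \pi\ext{\phi_i\loc{r_i}(c,\tb)}$, this gives
\[
    \pi\ext{\Phi(\ta_1,\dots,\ta_n,\tb)} \;=\; \prod_{i=1}^n \big( P_i^{\mathrm{far}} \cdot P_i^{\mathrm{close}} \big) \;=\; \min\Big(\min_{i} P_i^{\mathrm{far}},\; \min_{i} P_i^{\mathrm{close}}\Big).
\]

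It then remains to identify the two terms with the two conjuncts of the target formula. For the ``far'' term, swapping the two minima and noting $\min_i \pi\ext{\phi_i\loc{r_i}(c,\tb)} = \pi\ext{\tilde\phi\loc{\rho}(c,\tb)}$ gives $\min_i P_i^{\mathrm{far}} = \pi\ext{\Anotball{z}{R}{\ta_1,\dots,\ta_n}\,\tilde\phi\loc{\rho}(z,\tb)}$. For the ``close'' term, observe that for $c \in S$ the factor $\pi\ext{d(c,\ta_i) \le r_i \lor \phi_i\loc{r_i}(c,\tb)}$ equals the top element $1$ when $d(c,\ta_i) \le r_i$ and equals $\pi\ext{\phi_i\loc{r_i}(c,\tb)}$ otherwise; as $1$ is neutral for $\min$, this shows $\pi\ext{\theta\loc{2R}(\ta_1,\dots,\ta_n,\tb)} = \min_i P_i^{\mathrm{close}}$. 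Combining the three displays yields the equivalence. I expect the only genuinely delicate point to be the radius bookkeeping: checking that $\theta$ really is $2R$-local around $\tx_1,\dots,\tx_n$, which is exactly why the lemma hypothesises $R \ge \max_i r_i$, and that the ``far'' range $A \setminus S$ is common to all $i$, which again uses $R \ge r_i$. Everything else is a routine unfolding of the min-max semantics, and it is essential that we work in a min-max semiring, since the argument relies throughout on products being minima.
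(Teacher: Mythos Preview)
Your proof is correct and follows essentially the same approach as the paper: define $\tilde\phi$ as the conjunction of the $\phi_i$, handle the ``far'' elements $z \notin \ball{R}{\tx_1,\dots,\tx_n}$ by a single outside quantifier, and handle the ``close'' elements by a local formula using the disjunction $d(z,\tx_i) \le r_i \lor \phi_i(z\mid\ty)$. The paper's proof is a one-line case distinction followed by writing down the formula; your version spells out the semantic verification with the $P_i^{\mathrm{far}}/P_i^{\mathrm{close}}$ decomposition, which is exactly the content of that case distinction made explicit. The only cosmetic difference is that the paper writes $\theta$ as $\bigland_{i} \Aball{z}{R}{\tx_i} \bigland_{j} \big(d(z,\tx_j)\le r_j \lor \phi_j\loc{r_j}(z\mid\ty)\big)$, with the outer ball quantifier around a single $\tx_i$ and a double conjunction, whereas you use a ball quantifier around the full tuple $\tx_1,\dots,\tx_n$; since $\bigcup_i \ball{R}{\tx_i} = \ball{R}{\tx_1,\dots,\tx_n}$ these are minmax-equivalent and the locality analysis is the same.
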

\begin{proof}
Simple case distinction: each $z \notin \ball {r_i} {\tup x_i}$ either satisfies also $z \notin \ball {R} {\tup x_1,\dots,\tup x_n}$,
or we have $z \in \ball R {\tup x_i}$ and $d(z,\tup x_i) > r_i$.
Hence the minmax-equivalent formula is
\begin{align*}
    &\Anotball z R {\tup x_1,\dots,\tup x_n} \underbrace{\bigland_{i=1}^n \phi_i \loc {r_i} (z \mid \ty)}_{\tilde\phi\loc {\max_i r_i}} \;\land\;
    \underbrace{\bigland_{i=1}^n \Aball z R {\tup x_i} \big( \bigland_{j=1}^n (d(z,\tup x_j) \le r_j \,\lor\, \phi_j \loc {r_j}(z \mid \ty))}_{\theta\loc{2R}, \text{ since $2R \ge R + \max_i r_i$}} \big). \qedhere
\end{align*}
\end{proof}
\smallskip 

\begin{proposition}
\label{thmGaifmanAlternation2}
Sentences of the form
$\E \ty \big(\phi \loc{R}(\ty) \land \Anotball {x_1}{r_1}{\ty} \dots \Anotball{x_n}{r_n}{\ty} \; \psi \loc{R}(\tx)\big)$
are minmax-equivalent to positive Boolean combinations of sentences of the form
\[
    \E \tup w \big( \hat\phi \loc {r} (\tup w) \;\land\; \Anotball z {r'} {\tup w} \, \hat\psi \loc {r'} (z) \big).
\]
\end{proposition}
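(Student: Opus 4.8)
The plan is to freeze the tuple $\ty$, regard the block of outside quantifiers as a sentence over an enlarged signature, and then feed it to the quantifier-elimination results already available, undoing the abstraction at the end. Write $\chi(\ty)$ for the subformula $\Anotball{x_1}{r_1}{\ty}\cdots\Anotball{x_n}{r_n}{\ty}\ \psi\loc{R}(\tx)$. By the shape produced in \cref{thmGaifmanAlternation1}, the local formula $\psi\loc{R}(\tx)$ mentions only the bound variables $\tx$, so $\ty$ occurs in $\chi(\ty)$ only as a base point of the outside quantifiers; maintaining this invariant (that $\ty$ enters the body of every outside quantifier only through bounded-depth distance atoms) is the key to the whole argument. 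It suffices to rewrite $\E\ty(\phi\loc{R}(\ty)\wedge\chi(\ty))$, since distributing $\E\ty$ and $\wedge$ over the resulting disjunctions with \cref{thmAxioms} then yields the required form.

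First I would trade the outside quantifiers for ordinary ones. Since $1$ is the top element of every min-max semiring, $\Anotball{x}{s}{\ty}\,\theta\meq\A x\bigl(\biglor_{y\in\ty}d(x,y)\le s\ \lor\ \theta\bigr)$, and iterating this while pulling the distance disjuncts outward (\cref{thmAxioms}) gives
\[ \chi(\ty)\ \meq\ \A x_1\cdots\A x_n\Bigl(\biglor_{i=1}^{n}\biglor_{y\in\ty}d(x_i,y)\le r_i\ \lor\ \psi\loc{R}(\tx)\Bigr). \]
Now, by iterated use of the \AbstractionLemma, I replace every distance atom $d(x_i,y)\le r_i$ involving a parameter $y\in\ty$ (a formula whose only non-parameter variable is $x_i$) by a fresh \emph{unary} relation $P_{r_i,y}x_i$; this is admissible precisely because unary relations add no edges to the Gaifman graph. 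After these replacements $\ty$ has disappeared, leaving a genuine $\tau'$-sentence $\tilde\chi=\A x_1\cdots\A x_n\ \tilde\Phi\loc{R}(\tx)$. By \cref{thmGaifmanQuantifiers} together with the \SymmetryLemma, $\tilde\chi$ is minmax-equivalent to a positive Boolean combination of universal basic local sentences $\Asc{s}{\tz}\biglor_i\alpha\loc{s}(z_i)$ over $\tau'$; undoing the abstraction (backward direction of the \AbstractionLemma) shows that $\chi(\ty)$ is minmax-equivalent to a positive Boolean combination of sentences $\Asc{s}{\tz}\biglor_i\alpha'\loc{s'}(z_i\mid\ty)$ in which $\ty$ re-enters only through distance atoms $d(\cdot,y)\le r_i$ of bounded dag-distance from the $z_i$.

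Substituting back and distributing, $\E\ty(\phi\loc{R}(\ty)\wedge\chi(\ty))$ becomes a disjunction of sentences $\E\ty\bigl(\phi\loc{R}(\ty)\wedge\bigland_{l=1}^{m}\Theta_l(\ty)\bigr)$ with $\Theta_l(\ty)=\Asc{s_l}{\tz^l}\biglor_i\alpha'_l\loc{s'_l}(z^l_i\mid\ty)$. For each $l$ I would invoke \cref{thmGaifmanAlternationUnivToExist}---in the version with a free parameter tuple $\ty$, obtained by abstracting the $\ty$-distance atoms of $\alpha'_l$ by fresh unary relations and restoring them afterwards---to get $\Theta_l(\ty)\meq\E\tv^l\bigl(\beta_l\loc{*}(\tv^l\mid\ty)\wedge\Anotball{z}{r_l''}{\tv^l}\,\alpha'_l\loc{s'_l}(z\mid\ty)\bigr)$, where $\beta_l$ is local around $\tv^l$. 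Since all the existential blocks introduce fresh variables, \cref{thmAxioms} lets me pull them (and $\E\ty$) to the front; with $\tw=\ty\,\tv^1\cdots\tv^m$ each disjunct becomes
\[ \E\tw\Bigl(\phi\loc{R}(\ty)\wedge\bigland_{l}\beta_l\loc{*}(\tv^l\mid\ty)\ \wedge\ \bigland_{l}\Anotball{z}{r_l''}{\tv^l}\,\alpha'_l\loc{s'_l}(z\mid\ty)\Bigr). \]

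It remains to merge the outside quantifiers and purge $\ty$ from inside them; this is the step I expect to be the main obstacle, as it depends on careful bookkeeping of radii. First I enlarge each base $\tv^l$ to $\tv^l\ty$: splitting the product over $z\notin\ball{r_l''}{\tv^l}$ according to whether $z$ lies in $\ball{r_l''}{\ty}$, the outside part is $\Anotball{z}{r_l''}{\tv^l\ty}\,\alpha'_l\loc{s'_l}(z\mid\ty)$, while the inside part concerns only boundedly many $z$ and is a formula local around $\tv^l$, which I absorb into the local conjuncts. Then \cref{thmNotballCombine} merges $\bigland_l\Anotball{z}{r_l''}{\tv^l\ty}\,\alpha'_l\loc{s'_l}(z\mid\ty)$ into a single outside quantifier $\Anotball{z}{R'}{\tw}\,\tilde\alpha\loc{s''}(z\mid\ty)$ plus a further formula local around $\tw$; gathering all local conjuncts (legitimate, since $\ty,\tv^1,\dots,\tv^m\subseteq\tw$) into one $\hat\phi\loc{r}(\tw)$ yields $\E\tw\bigl(\hat\phi\loc{r}(\tw)\wedge\Anotball{z}{R'}{\tw}\,\tilde\alpha\loc{s''}(z\mid\ty)\bigr)$. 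Finally, $z$ now ranges outside $\ball{R'}{\tw}\supseteq\ball{R'}{\ty}$, and every variable of $\tilde\alpha$ is, by locality, interpreted within bounded distance of $z$; choosing $R'$ larger than $s''$ plus all the radii $r_i$ forces every distance atom $d(\cdot,y)\le r_i$ with $y\in\ty$ occurring in $\tilde\alpha$ to evaluate to $0$, so replacing these atoms by $x\ne x$ turns $\tilde\alpha$ into a formula $\hat\psi\loc{r'}(z)$ local around $z$ alone. Each disjunct thus has the desired shape $\E\tw\bigl(\hat\phi\loc{r}(\tw)\wedge\Anotball{z}{r'}{\tw}\,\hat\psi\loc{r'}(z)\bigr)$, so the original sentence is a positive Boolean combination of such sentences. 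The bulk of the routine work lies in verifying the side conditions of the \AbstractionLemma\ at each use (unary relations, no new Gaifman edges) and in tracking the radii produced by \cref{thmGaifmanQuantifiers}, \cref{thmGaifmanAlternationUnivToExist} and \cref{thmNotballCombine} so that the final radius $R'$ is large enough to kill the residual $\ty$-dependence.
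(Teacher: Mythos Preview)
Your proposal is correct and follows essentially the same route as the paper: abstract the $\ty$-dependent distance atoms by fresh unary relations, apply the dual of \cref{thmGaifmanQuantifiers} and then \cref{thmGaifmanAlternationUnivToExist}, undo the abstraction, pull the new existential quantifiers to the front, merge the outside quantifiers via \cref{thmNotballCombine}, and finally kill the residual $\ty$-dependence by enlarging the outside radius.

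Two minor points of comparison. First, the paper keeps the abstraction in place through both \cref{thmGaifmanQuantifiers} and \cref{thmGaifmanAlternationUnivToExist} and only undoes it once at the end, which spares your redo of the abstraction before applying \cref{thmGaifmanAlternationUnivToExist}; your order is still correct, just slightly less economical. Second, in your base-enlargement step the ``inside part'' is a ball quantifier $\Aball{z}{r_l''}{\ty}(d(z,\tv^l)\le r_l''\lor\alpha'_l)$ ranging over potentially unboundedly many $z$, and it is local around $\ty\cup\tv^l$ rather than around $\tv^l$ alone (the distance check to $\tv^l$ sits inside a ball around $\ty$). Since $\ty\cup\tv^l\subseteq\tw$, your absorption into $\hat\phi\loc{r}(\tw)$ still goes through, so this is a wording slip, not a gap. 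In fact your explicit enlargement of the base from $\tv^l$ to $\tv^l\ty$ makes transparent a step the paper leaves somewhat implicit when it writes the merged outside quantifier over all of $\tw$.
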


\begin{proof}
Let $\Phi = \E \ty \big(\phi \loc{R}(\ty) \land \Anotball {x_1}{r_1}{\ty} \dots \Anotball{x_n}{r_n}{\ty} \ \psi \loc{R}(\tx)\big)$ over signature $\tau$.
We can replace quantifiers $\Anotball {x_i}{r_i}{\ty}$ using distance formulae $d(x_i,\tup y) \le r_i$.
Thus,
\[
    \Phi \meq \E \ty \Big( \phi \loc R(\ty) \;\land\; \A x_1 \dots \A x_n \big( \psi \loc R(\tx) \lor \biglor_{i=1}^n d(x_i,\overline y) \le r_i \big) \Big).
\]

We would like to apply (the dual of) \cref{thmGaifmanQuantifiers} to show that $\A x_1 \dots \A x_n (\dots)$ is a local sentence.
However, it is not exactly in the required form due to the additional free variables $\ty$ in $d(x_i,\ty) \le r_i$.
To resolve this problem, we apply the \AbstractionLemma{} to abstract subformulae $d(x_i,\ty) \le r_i$ by unary atoms $D_{r_i} x_i$ with fresh relation symbols:
\[
    \Psi \coloneqq \A x_1 \dots \A x_n \big( \psi \loc R(\tx) \lor \biglor_{i=1}^n D_{r_i} x_i \big),
    \qquad
    \text{over $\tau' = \tau \cup \{ D_{r_i} : 1 \le i \le n \}$}.
\]
We now apply (the dual of) \cref{thmGaifmanQuantifiers} followed by \cref{thmGaifmanAlternationUnivToExist}, implying that $\Psi$ is minmax-equivalent to a positive Boolean combination
$\Psi \meq \biglor_k \bigland_l \Psi_{kl}$, where each $\Psi_{kl}$ is a $\tau'$-sentence of the form (for some radius $r$):
\begin{align*}
   \E v_1 \dots \E v_{n-1} \Big(&\Aball{x_1}{2r}{\tup v} \dots \Aball{x_n}{2r}{\tup v} \big( \biglor_{i<j} d(x_i,x_j) \le 2r \lor \biglor_{i=1}^n \tilde\psi \loc r(x_i)\big) \tag{$\Psi.1$} \\
    &{}\land \Anotball{x}{2r}{\tup v} \ \tilde\psi \loc r(x) \Big). \tag{$\Psi.2$}
\end{align*}

Recall that each $\tilde\psi\loc r$ may contain the newly introduced symbols $D_{r_i}$.
Replacing these symbols by the original distance formula, the \AbstractionLemma{} gives
\begin{align*}
    \Phi &\meq \E \ty \Big(\phi \loc R (\ty) \;\land\; \biglor_k \bigland_l \Psi_{kl}[D_{r_i} x_i / d(x_i,\ty) \le r_i] \Big). \\
    &\meq \biglor_k \E \ty \bigland_l \Big( \phi \loc R (\ty) \;\land\; \Psi_{kl}[D_{r_i} x_i / d(x_i,\ty) \le r_i] \Big).
\end{align*}
Each $\Psi_{kl}$ existentially quantifies over a tuple $\tup v$ of variables.
By pulling all of these quantifiers to the front of $\Phi$, we can combine all $\Psi_{kl}[D_{r_i} x_i / d(x_i,\ty) \le r_i]$ into a conjunction of local formulae (line $(\Psi.1)$), which may include $\ty$ as free variable due to the substitution, and a conjunction of formulae of the form $\Anotball{x}{2r}{\tup v} \ \tilde\psi \loc r(x \mid \ty)$ (line $(\Psi.2)$).
By \cref{thmNotballCombine}, we can combine all of the outside quantifiers into a single one (with the help of another local formula).
Hence $\Phi$ is minmax-equivalent to a disjunction over sentences of the form
\[
    \E \tup w \big( \hat\phi \loc {\rho + 2\rho'} (\tup w) \;\land\; \Anotball z {\rho'} {\tup w} \, \hat\psi \loc {\rho''} (z \mid \ty) \big).
\]
Here, $\rho,\rho',\rho''$ are sufficiently large radii,
and $\tup w$ contains $\ty$ and all tuples $\tup v$ from the $\Psi_{kl}$.
We can further eliminate the occurrences of $\ty$ in $\Anotball z {\rho'} {\tup w} \, \hat\psi \loc {\rho''} (z \mid \ty)$.
Recall that $\ty$ may only occur in atoms of the form $d(x,\ty) \le r$ for some variable $x$ and some radius $r \le \rho''$.
Since all variables locally quantified in $\hat\psi \loc {\rho''} (z \mid \ty)$ are $\rho''$-local around $z$, we can choose $\rho' \gg \rho''$ sufficiently large (cf.\ \cref{thmNotballCombine}) so that $d(x,\ty) \le r$ will never hold.
We can thus replace all atoms $d(x,\ty) \le r$ by false (i.e., by $z \neq z$) to eliminate the occurrences of $\ty$.
To close the proof, we remark that $\hat\psi\loc{\rho''}$ is in particular $\rho'$-local.
\end{proof}

\subsubsection{Scattered quantification}

So far we have shown that it suffices to consider sentences of the form
\[
    \E x_1 \dots \E x_m \big( \phi \loc r (\tx) \;\land\; \Anotball z {r'} {\tx} \; \theta\loc {r'}(z) \big).
\]
The next step is to replace the existential quantifiers by scattered quantification.
The proof is very similar to the one for $\E^*$-sentences in \cref{thmGaifmanQuantifiers}, but we additionally need the following lemma to take care of the outside quantifier.

\begin{lemma}
\label{thmNotballPartition}
Let $P = \{I_1,\dots,I_k\} \in \Part(m)$ with representatives $i_l \coloneqq \min I_l$.
Let $\pi$ be a $K$-interpretation over universe $A$ (for a min-max semiring $K$).
Then
\[
    \pi \ext {\Anotball z r {\ta} \, \theta\loc r(z)} = \pi \ext {\psi (a_{i_1},\dots,a_{i_k})}
\]
for all tuples $\ta = (a_1,\dots,a_m) \subseteq A$ in configuration $(P,r)$, where $R = 5^{m-k}r$ and
\begin{align*}
    \psi(x_{i_1},\dots,x_{i_k}) \coloneqq{} &\Anotball z R {x_{i_1},\dots,x_{i_k}} \; \theta\loc r(z) \\
    &\land\; \bigland_{l=1}^k \Aball z R {x_{i_l}} \big( \biglor_{i \in I_l} d(z,x_i) \le r \,\lor\, \theta \loc r(z) \big).
\end{align*}
\end{lemma}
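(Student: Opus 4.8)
\emph{Proof proposal.}
Fix a $K$-interpretation $\pi$ and a tuple $\ta = (a_1,\dots,a_m)$ in configuration $(P,r)$. The plan is a case analysis on the position of the quantified element $z$ relative to the representatives $a_{i_1},\dots,a_{i_k}$, exploiting the configuration together with the triangle inequality; once the domain of quantification has been partitioned appropriately, the identity falls out by unwinding the min-max semantics. Write $R = 5^{m-k}r$ and recall from \cref{defConfiguration} that $d(a_{i_l},a_i)\le R-r$ for every $i\in I_l$, while $d(a_{i_l},a_{i_{l'}}) > 4R$ for $l\neq l'$. I would first record three elementary distance facts. (i) If $z\notin\ball R{a_{i_1},\dots,a_{i_k}}$ then, for every $i\in I_l$, $d(z,a_i)\ge d(z,a_{i_l})-d(a_{i_l},a_i) > R-(R-r) = r$, hence $z\notin\ball r{\ta}$. (ii) Each $z$ lies in $\ball R{a_{i_l}}$ for at most one $l$, since $z\in\ball R{a_{i_l}}\cap\ball R{a_{i_{l'}}}$ would force $d(a_{i_l},a_{i_{l'}})\le 2R < 4R$. (iii) If $z\in\ball R{a_{i_l}}$, then for every $j$ in a class $I_{l'}$ with $l'\neq l$ we have $d(z,a_j)\ge d(a_{i_l},a_{i_{l'}})-d(z,a_{i_l})-d(a_{i_{l'}},a_j) > 4R-R-(R-r) = 2R+r > r$; consequently, for such $z$ one has $z\notin\ball r{\ta}$ if and only if $d(z,a_i)>r$ for all $i\in I_l$.

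Combining (i)--(iii), the set $\{z\in A : z\notin\ball r{\ta}\}$ over which the left-hand quantifier ranges decomposes as the \emph{disjoint} union of $\{z : z\notin\ball R{a_{i_1},\dots,a_{i_k}}\}$ and, for each $l\le k$, the set $Z_l \coloneqq \{z\in\ball R{a_{i_l}} : d(z,a_i)>r \text{ for all } i\in I_l\}$. I would then evaluate both sides in the fixed min-max semiring, where $\A$ and the outside/ball quantifiers are interpreted by $\min$ (with the empty minimum equal to the top element $1$). The left-hand side is $\min_{z\notin\ball r{\ta}}\pi\ext{\theta\loc r(z)}$. On the right, the first conjunct of $\psi$ evaluates to $\min_{z\notin\ball R{a_{i_1},\dots,a_{i_k}}}\pi\ext{\theta\loc r(z)}$. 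For the $l$-th conjunct, observe that for $z\in\ball R{a_{i_l}}$ the inner formula $\biglor_{i\in I_l} d(z,x_i)\le r \lor \theta\loc r(z)$ evaluates to $1$ whenever $d(z,a_i)\le r$ for some $i\in I_l$ (because $\max(1,\cdot)=1$), and to $\pi\ext{\theta\loc r(z)}$ otherwise (the distance atoms then being $0$, the bottom element for $\max$); hence the $l$-th conjunct evaluates to $\min_{z\in Z_l}\pi\ext{\theta\loc r(z)}$, the elements outside $Z_l$ contributing only the neutral value $1$. Since $\min$ is associative and commutative and the displayed sets partition $\{z : z\notin\ball r{\ta}\}$, this yields $\pi\ext{\psi(a_{i_1},\dots,a_{i_k})} = \pi\ext{\Anotball z r{\ta}\,\theta\loc r(z)}$.

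I expect the only genuinely delicate point to be the radius bookkeeping in (i)--(iii): one has to check that $R = 5^{m-k}r$ together with the $4R$-scattering of the representatives leaves exactly enough slack for each triangle-inequality estimate, and that the degenerate cases — where $\{z : z\notin\ball R{a_{i_1},\dots,a_{i_k}}\}$ or some $Z_l$ is empty — cause no trouble, since there the empty minimum $1$ is precisely the neutral element for $\min$ and matches $\max(1,\cdot)=1$. Everything else is routine manipulation of min-max semantics and of the abbreviation $d(\cdot,\cdot)\le r$, which takes only the values $0,1$.
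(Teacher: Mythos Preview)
Your proposal is correct and follows essentially the same approach as the paper: both proofs partition the domain $\{z : z\notin\ball r{\ta}\}$ into the set of elements $R$-far from all representatives and the sets $Z_l$ (called $C_l$ in the paper), verify via triangle-inequality estimates that these sets coincide with the intended quantifier ranges, and conclude by unwinding the min-max semantics. Your facts (i)--(iii) match the paper's observations (1) and (2) almost verbatim; the only cosmetic difference is that you record disjointness explicitly, whereas the paper simply takes a union (disjointness being irrelevant for~$\min$).
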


\begin{proof}
We use the following intuition.
Instead of quantifying over all elements that are at least $r$-far from $\ta$, we first quantify over those that are very far (i.e., $R$-far) from the representatives $a_{i_1},\dots,a_{i_k}$, and thus far from all elements $\ta$.
It remains to quantify over elements $z$ that are $R$-close to some representative, say $a_{i_l}$, but at least $r$-far from all elements $\ta$.
The crucial insight: it suffices to check that $z$ is $r$-far from all elements in the class of $a_{i_l}$, as elements in other classes are sufficiently far away from $a_{i_l}$ and hence also from $z$.

\medskip
Formally, let $\pi$, $\ta$, $P$, $R$ be as above.
We have $\pi \ext {\Anotball z r {\ta} \, \theta\loc r(z)} = \min_{c \notin \ballpi{r}{\ta}} \pi \ext {\theta\loc r(c)}$
and we prove that $\pi \ext {\psi (a_{i_1},\dots,a_{i_k})} = \min_{c \notin \ballpi{r}{\ta}} \pi \ext {\theta\loc r(c)}$ as well.
Notice that
\[
    \pi \Ext {\biglor_{i \in I_l} d(c,a_i) \le r \,\lor\, \theta \loc r(c)} = \begin{cases*}
        1, &if $c$ is $r$-close to $\{a_i \mid i \in I_l\}$, \\
        \pi \ext {\theta\loc r(c)}, &if $c$ is $r$-far from $\{a_i \mid i \in I_l\}$.
    \end{cases*}
\]
Hence $\pi \ext {\Aball z R {x_{i_l}} (\biglor_{i \in I_l} d(c,a_i) \le r \,\lor\, \theta \loc r(c))}$
evaluates to $\min_{c \in C_l} \pi \ext {\theta\loc r(c)}$, where
\begin{align*}
    C_l ={}
    &\{ c \in \ballpi R {a_{i_l}} : \text{$c$ is $r$-far from $\{a_i \mid i \in I_l\}$}\} \\
    \overset{\mathclap{(1)}}{=}{}
    &\{ c \in \ballpi R {a_{i_l}} : \text{$c$ is $r$-far from $\ta$}\}.
\end{align*}
To see why $(1)$ holds, consider an element $c \in \ballpi R {a_{i_l}}$ and an element $a_j$ with $j \in I_{l'}$ for a different class $l' \neq l$.
Due to the properties of the partition (see \cref{defConfiguration}), we have $d(a_{i_l},a_{i_l'}) > 4R$ and $d(a_j,a_{i_{l'}}) \le R-r$.
Since $d(c,a_{i_l}) \le R$, we have $d(c,a_j) > 2R+r$, so $c$ is indeed $r$-far from $a_j$.

Finally, $\pi \ext {\Anotball z R {a_{i_1},\dots,a_{i_k}} \; \theta\loc r(z)}$ evaluates to $\min_{c \in D} \pi \ext {\theta\loc r(c)}$, where
\begin{align*}
    D ={} 
    &\{ c : c \notin \ballpi R {a_{i_1},\dots,a_{i_k}} \} \\
    \overset{\mathclap{(2)}}{=}{}
    &\{ c : c \notin \ballpi R {a_{i_1},\dots,a_{i_k}} \text{ and $c$ is $r$-far from $\ta$} \}. 
\end{align*}
Here, $(2)$ holds by a similar argument: every $a_i$ with $i \in I_l$ is close to its representative, so $d(a_i,a_{i_l}) \le R-r$, whereas $d(c,a_{i_l}) > R$.
Hence $d(c,a_i) > r$ and $c$ is indeed $r$-far from $\ta$.
Since $D \cup \bigcup_{l=1}^k C_l = \{ c : \text{$c$ is $r$-far from $\ta$} \}$,
we have $\pi \ext {\psi (a_{i_1},\dots,a_{i_k})} = \min_{c \notin \ballpi{r}{\ta}} \pi \ext {\theta\loc r(c)}$.
\end{proof}

\begin{corollary}
\label{thmGaifmanQuantifiersNotball}
Every sentence of the form
$\E x_1 \dots \E x_m \big( \phi \loc r (\tx) \,\land\, \Anotball z {r'} {\tx} \, \theta\loc {r'}(z) \big)$
is minmax-equivalent to a positive Boolean combination of sentences of the form
\[
    \Esc{R}{x_1,\dots,x_k} \allowbreak \big( \bigland_{i=1}^k \tilde\phi_i \loc {R} (\tx) \;\land\; \Anotball z {R'} {\tx} \; \theta\loc {r'}(z) \big).
\]
\end{corollary}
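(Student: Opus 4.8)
The plan is to follow the proof of \cref{thmGaifmanQuantifiers} for the local part $\phi\loc r$ and to invoke \cref{thmNotballPartition} for the outside quantifier. Fix a radius $\rho \ge \max(r,r')$ and apply the \ClusteringLemma{} to $x_1,\dots,x_m$ with configuration radius $\rho$; this yields $\Phi \meq \bigvee_{P\in\Part(m)}\Phi_P$, where, writing $P = \{I_1,\dots,I_k\}$ with representatives $i_l = \min I_l$ and $R = 5^{m-k}\rho$, the sentence $\Phi_P$ performs a $2R$-scattered quantification over the representatives $x_{i_1},\dots,x_{i_k}$ and then locally quantifies each remaining $x_i$ with $i\in I_l$ within distance $R-\rho$ of $x_{i_l}$; under these quantifiers stands the body $\phi\loc r(\tx) \land \Anotball z{r'}{\tx}\theta\loc{r'}(z)$. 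Exactly as in \cref{thmGaifmanQuantifiers}, the quantifier prefix of $\Phi_P$ ranges over precisely the tuples in configuration $(P,\rho)$, so it suffices to bring each $\Phi_P$ into the required shape.

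Fix $P$ and consider the body of $\Phi_P$ under an instantiation $\ta$ of $\tx$ in configuration $(P,\rho)$. For the conjunct $\Anotball z{r'}{\tx}\theta\loc{r'}(z)$, \cref{thmNotballPartition} --- whose proof goes through whenever the configuration radius is at least the radius $r'$ of the outside quantifier --- lets us replace it, without changing the value of $\Phi_P$, by
\[
    \psi(x_{i_1},\dots,x_{i_k}) = \Anotball z R{x_{i_1},\dots,x_{i_k}}\theta\loc{r'}(z) \;\land\; \bigland_{l=1}^k \Aball z R{x_{i_l}}\big(\biglor_{i\in I_l} d(z,x_i)\le r' \,\lor\, \theta\loc{r'}(z)\big).
\]
For the conjunct $\phi\loc r(\tx)$ we argue exactly as in \cref{thmGaifmanQuantifiers}: for tuples in configuration $(P,\rho)$, two variables of $\phi\loc r$ that are locally quantified around different partition classes are instantiated at distance greater than $2$, so replacing every positive literal split across two classes by $x_1\neq x_1$ and every split negative literal by $x_1 = x_1$ leaves $\Phi_P$ unchanged; the \SeparationLemma{} then rewrites the result as a positive Boolean combination of local formulae $\theta_l$, each around $\{x_i : i\in I_l\}$ and of radius at most $R$.

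It remains to assemble the pieces. After both substitutions, the body of $\Phi_P$ is minmax-equivalent to a positive Boolean combination of the $\theta_l$, conjoined with the $k$ auxiliary formulae $\Aball z R{x_{i_l}}(\dots)$ coming from $\psi$ and with the single outside quantifier $\Anotball z R{x_{i_1},\dots,x_{i_k}}\theta\loc{r'}(z)$. Putting the Boolean combination into DNF and distributing the existential (ball and scattered) quantifiers over $\lor$ (\cref{thmAxioms}), $\Phi_P$ becomes a disjunction of sentences whose body is a conjunction of class-local formulae together with that one outside quantifier. For each class $l$, the ball quantifiers $\Eball{(x_i)_{i\in I_l\setminus\{i_l\}}}{R-\rho}{x_{i_l}}$ bind only variables occurring in the conjuncts attached to $I_l$ (the relevant DNF-conjunct built from the $\theta_l$ together with the $l$-th auxiliary ball), so by \cref{thmAxioms} we move them inward to wrap exactly those conjuncts, forming a single local formula $\tilde\phi_l$ around $x_{i_l}$ of some uniform radius; the outside quantifier mentions no bound class variable and stays in place. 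Relabelling $x_{i_1},\dots,x_{i_k}$ as $x_1,\dots,x_k$ produces exactly a sentence of the claimed form, and $\Phi$ is the disjunction of all these over the partitions $P$ and the DNF-disjuncts.

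I expect the main obstacle to be the radius bookkeeping required to make one partition serve two ends simultaneously: the split-literal argument for $\phi\loc r$ needs the configuration radius to be at least $r$, while \cref{thmNotballPartition} is stated with all radii equal, so one must check that clustering at $\rho = \max(r,r')$ is compatible with both uses; and one must verify that the radii introduced by the class-ball quantifiers, by the \SeparationLemma{}, and by \cref{thmNotballPartition} can all be absorbed into a single bound $R$ (together with a single outside radius $R'$), so that the resulting $\tilde\phi_i$ are genuinely $R$-local around their variable.
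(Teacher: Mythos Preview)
Your proposal is correct and follows essentially the same approach as the paper: mirror the proof of \cref{thmGaifmanQuantifiers} for the clustering and separation of $\phi\loc r$, and invoke \cref{thmNotballPartition} to rewrite the outside quantifier in terms of the representatives, then pull the class-ball quantifiers inward. Your choice to cluster at a single radius $\rho=\max(r,r')$ and use $R=5^{m-k}\rho$ throughout is slightly different from the paper's (which assumes $r\ge r'$ w.l.o.g.\ and distinguishes $R=2\cdot 5^{m-k}r$ from $R'=5^{m-k}r'$), but the bookkeeping concern you flag is exactly the right one, and your uniform choice handles it cleanly.
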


\begin{proof}
We prove this corollary by closely following the proof of \cref{thmGaifmanQuantifiers} and applying the previous lemma to $\Anotball z {r'} {\tx} \, \theta\loc {r'}(z)$.
Let $\Phi$ be a sentence of the indicated form and assume w.l.o.g.\ that $r \ge r'$.
By applying the \Clustering{} and \Separation{} Lemmas, we obtain that $\Phi$ is minmax-equivalent to a disjunction over sentences of the form
\begin{align*}
    &\Esc {(2 \cdot 5^{m-k}r)} {x_{i_1},\dots,x_{i_k}} \\
        &\quad \Eball {(x_i)_{i \in I_1 \backslash \{i_1\}}} {5^{m-k}r - r} {x_{i_1}} \\
        &\quad\quad \dots \\
        &\quad\quad\quad \Eball {(x_i)_{i \in I_k \backslash \{i_k\}}} {5^{m-k}r - r} {x_{i_k}} \ 
        \Big( \vphantom{\bigland_{l=1}}\underbrace{\bigland_{l=1}^{k} \psi_l \loc r (\{x_i : i \in I_l\}) \;\land\; \Anotball z {r'} {\tx} \; \theta\loc{r'}(z)}_{\eta(\tx)} \Big).
\end{align*}
Observe that (the interpretation of) $\tx$ is guaranteed to be in configuration $(P,r)$ for a partition $P = \{I_1,\dots,I_k\}$ with $i_l = \min I_l$, due to the way the $x_i$ are quantified.
We set $R = 2 \cdot 5^{m-k}r$ and $R' = 5^{m-k}r'$.
By the previous lemma, we can equivalently replace $\eta(\tx)$ by
\begin{align*}
    \eta'(x_{i_1,\dots,x_{i_k}}) =
    &\bigland_{l=1}^{k} \Big(\psi_l \loc r (\{x_i : i \in I_l\}) \;\land\;
    \Aball z {R'} {x_{i_l}} \big( \biglor_{i \in I_l} d(z,x_i) \le r' \,\lor\, \theta \loc{r'}(z) \big) \Big) \tag{$*$}\\
    &\; \land \; \Anotball z {R'} {x_{i_1},\dots,x_{i_k}} \; \theta\loc{r'}(z).
\end{align*}
which depends only on the representatives.
The first line $(*)$ is a conjunction of $(R'+r)$-local formulae around $\{x_i : i \in I_l\}$, for each $l$ (recall that $r \ge r'$).
In particular, the $l$-th conjunct is independent of all $x_i$ with $i \notin I_l$, which allows us to pull the quantifiers inside.
It follows that $\Phi$ is minmax-equivalent to a disjunction of sentences of the desired form:
\[
    \Esc {R} {x_{i_1},\dots,x_{i_k}} \Big( \bigland_{l=1}^k \Tilde \phi\loc{R}(x_{i_l}) \;\land\; \Anotball z {R'} {x_{i_1},\dots,x_{i_k}} \; \theta\loc{r'}(z) \Big)
\]
with
\begin{align*}
    \Tilde \phi_l \loc {R} (x_{i_l}) = \Eball {(x_i)_{i \in I_l \backslash \{i_l\}}} {\frac 1 2 R-r} {x_{i_l}} \Big(
        &\psi_l \loc r (\{x_i : i \in I_l\}) \;\land\; \\
        &\Aball z {R'} {x_{i_l}} \big( \biglor_{i \in I_l} d(z,x_i) \le {r'} \,\lor\, \theta \loc{r'}(z) \big) \Big),
\end{align*}
which is indeed local around $x_{i_l}$ with radius $\le (\frac 1 2 R - r) + (R' + r) \le R$.
\end{proof}

\subsubsection{Eliminating the outside quantifier}

In the final step, we eliminate the outside quantifier by directly specifying a minmax-equivalent sentence composed of three parts, each of which can be rewritten as local sentence by our previous results.

\begin{proposition}
Every sentence of the form $\Esc{r}{u_1,\dots,u_M} \big( \phi_1 \loc r(u_1) \land \dots \land \phi_M \loc r(u_M) \land \Anotball x R {\tup u} \; \psi\loc{r'}(x) \big)$ is equivalent to a local sentence.
\end{proposition}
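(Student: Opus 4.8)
I would fix a min-max semiring $K$ and a $K$-interpretation $\pi$ over a universe $A$, write $\Phi$ for the sentence in question, and recall that
\[
    \pi\ext\Phi \;=\; \max_{\bar a}\ \min\Big(\ \min_{i\le M}\pi\ext{\phi_i\loc r(a_i)},\ \ \min_{c\notin\ball R{\bar a}}\pi\ext{\psi\loc{r'}(c)}\ \Big),
\]
where $\bar a=(a_1,\dots,a_M)$ ranges over $r$-scattered $M$-tuples. I would then exhibit a minmax-equivalent \emph{local} sentence $\Phi'$ directly, as a disjunction $\Phi'=\biglor_{n=0}^{M}\Phi_n$, in which $\Phi_n$ captures the configurations where exactly $n$ of the $M$ scattered witnesses are used to ``cover'' the low-$\psi$ elements (those $c$ with $\pi\ext{\psi\loc{r'}(c)}$ below the attained value), grouped into $n$ far-apart clusters, while the remaining $M-n$ witnesses only have to satisfy their own local formula. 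The disjunct $\Phi_0$ is simply
\[
    \Phi_0 \;:=\; \Esc r{u_1,\dots,u_M}\ \bigland_{i\le M}\phi_i\loc r(u_i)\ \land\ \A z\,\psi\loc{r'}(z),
\]
the conjunction of an asymmetric basic local sentence (a local sentence by \cref{thmGaifmanGeneralizedBasic}, i.e.\ Step~\enumref{1}) and a universal basic local sentence; it realises the value attained when the outside part contributes only its global minimum $\pi\ext{\A z\,\psi\loc{r'}(z)}$.

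For $1\le n\le M$, the disjunct $\Phi_n$ would roughly assert: there are $M$ mutually $r$-scattered witnesses $u_1,\dots,u_M$ with $\phi_i\loc r(u_i)$ for all $i$, such that the low-$\psi$ elements outside $\bigcup_i\ball R{u_i}$ form at most $n$ far-apart clusters --- which, since only $M$ balls of radius $R$ are available, is impossible once $n$ exceeds the number of clusters, making the disjunction over $n\le M$ exhaustive. Concretely, $\Phi_n$ would be a disjunction, over designations of $n$ of the witnesses as covering and assignments of these to cluster representatives, of conjunctions built from: an asymmetric basic local sentence handling the $\phi_i$ (together with ball-quantified constraints tying each covering witness to its representative) and a universal basic local sentence with $n{+}1$ disjuncts of the schematic form $\Asc{\rho}{z_0,\dots,z_n}\big(\biglor_{l}\psi\loc{r'}(z_l)\lor\text{(some $z_l$ is $\rho$-close to some $u_i$)}\big)$, with $\rho$ a sufficiently large radius. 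Using the \SeparationLemma{} and Steps~\enumref{1}--\enumref{2} to split and re-quantify, and choosing all radii with the usual Clustering-Lemma-style blow-ups, each $\Phi_n$ --- and hence $\Phi'$ --- can then be rewritten as a local sentence.

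To establish $\Phi\meq\Phi'$ I would prove the two inequalities separately. For $\Phi\mle\Phi'$, take an $r$-scattered tuple $\bar a$ attaining $\pi\ext\Phi=:t$; then every $c$ with $\pi\ext{\psi\loc{r'}(c)}<t$ lies in $\bigcup_i\ball R{a_i}$. Apply the \ClusteringLemma{} to (a tuple enumerating) this low-$\psi$ set to obtain its clusters; their number $n$ satisfies $n\le M$, since each $\ball R{a_i}$ has diameter $\le 2R$ and hence meets at most one cluster. Reading off which witnesses are pinned to which cluster yields a witness for the matching disjunct of $\Phi_n$ with value $\ge t$: the conjuncts $\phi_i\loc r(a_i)$ hold by assumption, and the universal basic local sentence holds because any element outside $\bigcup_i\ball R{a_i}$ already has $\psi$-value $\ge t$. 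For $\Phi\mge\Phi'$, a witness for some $\Phi_n$ supplies $M$ mutually $r$-scattered elements; using this tuple in $\Phi$, the part $\bigland_i\phi_i\loc r(a_i)$ is handled directly, and the universal basic local sentence guarantees that every $c\notin\bigcup_i\ball R{a_i}$ has $\psi$-value at least $\pi\ext{\Phi_n}$, so $\pi\ext\Phi\ge\pi\ext{\Phi_n}$.

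The hard part will be pinning down the exact shape of the disjuncts $\Phi_n$ --- in particular the radius $\rho$ and the ``coverage'' subformula --- so that the count of far-apart low-$\psi$ elements outside the witness balls performed by the universal basic local sentence agrees \emph{in both directions} with the covering condition $\{c:\pi\ext{\psi\loc{r'}(c)}<t\}\subseteq\bigcup_i\ball R{a_i}$, including those low-$\psi$ elements that lie in a cluster without belonging to its scattered core. This is the analogue of the delicate radius bookkeeping already seen in \cref{thmGaifmanGeneralizedBasic} and \cref{thmGaifmanQuantifiersNotball}, and is what makes this final step technical. Once it is carried out, this proposition completes Step~\enumref{3}, and together with Steps~\enumref{1}, \enumref{2} and the reduction in the proof of \cref{thmGaifman}, it establishes Gaifman's theorem for min-max semirings.
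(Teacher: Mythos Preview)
Your high-level intuition is sound, and the decomposition by ``number of low-$\psi$ clusters'' is a natural idea. But the proposal does not solve the central difficulty, and as written the sketched $\Phi_n$ does not yield a local sentence.

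The core problem is the dependency of the universal part on the existentially quantified witnesses $u_i$. Your universal basic local sentence contains the clause ``some $z_l$ is $\rho$-close to some $u_i$'', which keeps $u_i$ free inside a universal quantifier; the resulting shape $\E\bar u\,[\dots]\land\Asc\rho{\bar z}[\dots u_i\dots]$ is exactly the $\E^*\A^*$-pattern you are trying to eliminate. If instead you re-existentially quantify the $u_i$ locally around each $z_l$ (so that the universal part becomes a genuine sentence), you lose the coupling: the $u$'s witnessing $\phi_i$ in your asymmetric basic local sentence and the $u$'s near the $z_l$'s need not be the same tuple, so the direction $\Phi'\mle\Phi$ no longer goes through---given a witness for $\Phi_n$ you cannot produce a \emph{single} $r$-scattered $M$-tuple $\bar a$ such that every $c\notin\ball R{\bar a}$ has $\psi$-value $\ge\pi\ext{\Phi_n}$. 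There is also a radius issue you do not address: the low-$\psi$ set is covered by $M$ balls of radius $R$, so its $2R$-components number at most $M$, but a single component can have diameter up to roughly $2MR$; your scattered quantifier $\Asc\rho{z_0,\dots,z_n}$ thus needs $\rho$ of that order, which then forces the ``close to some $u_i$'' balls to be much larger than $R$ and breaks the intended semantics.

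The paper solves the decoupling problem differently. It indexes by partitions $P$ of the witness indices (not by clusters of low-$\psi$ elements) and writes $\Psi^P=\Psi_1^P\land\Psi_2\land\Psi_3^P$, where $\Psi_2=\Asc R{x_1,\dots,x_{M+1}}\biglor_i\psi\loc{r'}(x_i)$ is a genuine sentence independent of $\bar u$, and $\Psi_3^P$ begins with a \emph{universal} quantifier over a fresh tuple $\bar v$ and then \emph{re}-existentially quantifies a partial copy of $\bar u$ close to $\bar v$. This is what breaks the dependency. The hard direction $\Psi^P\mle\Phi$ is then proved by a non-trivial algorithm that merges a full tuple $\bar a$ (from $\Psi_1^P$) and a partial tuple $\bar c$ (from $\Psi_3^P$ with $\bar v$ instantiated via $\Psi_2$) into a single $r$-scattered tuple $\bar d$, switching whole partition classes from $\bar c$ to $\bar a$ while maintaining an invariant on the outside quantifier. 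Your proposal lacks any analogue of this construction; the paragraph you label ``the hard part'' is precisely where the paper's actual work lies.
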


\begin{proof}
Let $\Phi$ be a sentence of the indicated form. We claim that $\Phi \meq \biglor_P \Psi^P$, where the disjunction ranges over all partitions $P = \{I_1,\dots,I_m\} \in \Part(M)$ with representatives $i_k = \min {I_k}$.
We define $\Psi^P \coloneqq (\Psi_1^P \land \Psi_2 \land \Psi_3^P)$ with the following sentences.
\begin{align*}
    \Psi_1^P &\coloneqq \Esc {2 \cdot 5^{M-m}(r+R)}{u_{i_1},\dots,u_{i_m}} \\
    &\phantom{\coloneqq\quad}
        \Eball {(u_i)_{i \in I_1 \setminus \{i_1\}}} {5^{M-m}(r+R)-(r+R)} {u_{i_1}} \\
    &\phantom{\coloneqq\quad\quad}
        \dots \\
    &\phantom{\coloneqq\quad\quad\quad}
        \Eball {(u_i)_{i \in I_m \setminus \{i_m\}}} {5^{M-m}(r+R)-(r+R)} {u_{i_m}} \\
    &\phantom{\coloneqq\quad\quad\quad\quad}
    \Big(
        \bigland_{i<j} d(u_i,u_j) > 2r \;\land\;
        \bigland_{i=1}^M \phi_i \loc r(u_i) \\
    &\phantom{\coloneqq\quad\quad\quad\quad\quad}
        \;\land\; \Aball{x}{2 \cdot 5^{M-m}(r+R)}{\tu} \big( d(x,\tup u) \le R \,\lor\, \psi \loc{r'}(x) \big)
    \Big).
\intertext{
The first sentence $\Psi_1^P$ first quantifies a tuple $\tu$ in configuration $(P,r+R)$ and ensures that it is $r$-scattered.
The remaining subformula is similar to $\Psi$, except that only some $x \notin \ball{R}{\tu}$ are considered (those in $\ball{2 \cdot 5^{M-m}(r+R)}{\tu}$).
}
    \Psi_2 &\coloneqq \Asc{R}{x_1,\dots,x_{M+1}} \biglor_{i=1}^{M+1} \psi\loc{r'}(x_i).
\intertext{We will show that $\Psi_2$ checks $\psi\loc{r'}(x)$ for most of the $x \notin \ball{R}{\tu}$. Finally,}
    \Psi_3^P &\coloneqq \A v_1 \dots \A v_M \;\biglor_{l=0}^m \;\; \biglor_{J=\{j_1,\dots,j_l\} \subseteq \{1,\dots,m\}} 
        \Eball {(u_{i_j})_{j \in J}\,}{3R + 5^{M-m}(r+R)}{\tv} \\
    &\phantom{\coloneqq\quad}    
        \Eball {(u_i)_{i \in I_{j_1} \setminus \{i_{j_1}\}}} {5^{M-m}(r+R)-(r+R)}{u_{i_1}} \\
    &\phantom{\coloneqq\quad\quad}    
        \dots \\
    &\phantom{\coloneqq\quad\quad\quad}    
        \Eball {(u_i)_{i \in I_{j_l} \setminus \{i_{j_l}\}}} {5^{M-m}(r+R)-(r+R)}{u_{i_l}} \\
    &\phantom{\coloneqq\quad\quad\quad\quad} 
        \Big( \big(\bigland_{j \in J} \bigland_{i \in I_j} \phi_i \loc r(u_i)\big) \; \land \; \bigland_{i < i'} d(u_i,u_{i'}) > 2r \\
    &\phantom{\coloneqq\quad\quad\quad\quad\quad\quad}
        \;\land\; \Aball{x}{2R}{\tv} \big( d(x,\tup u) \le R \  \lor \psi\loc{r'}(x) \big) \Big).
\end{align*}
The last sentence is the most involved one.
For every choice of $\tv$, there is a number $l$ such that the (partial) $r$-scattered tuple $\tu$ mimics the $l$ classes with representatives $i_{j_1},\dots,i_{j_l}$.
That is, there are elements $u_{i_{j_1}}, \dots, u_{i_{j_l}}$, and each $u_{i_{j_k}}$ has $|I_{j_k}|$ elements close to it.
However, note that the elements $u_{i_{j_1}}, \dots, u_{i_{j_l}}$ are not necessarily far from each other, as the configuration $(P,r+R)$ would require.
We further remark that, unless $l = m$, the tuple $\tu$ is smaller than $M$ and hence the notation $u_i$ is only defined for some $1 \le i \le M$; the idea is that $\tu$ shares indices with an $M$-tuple in configuration $(P,r+R)$.
We kindly ask the reader to bear with this slightly unusual indexing, as it will simplify the notation in the remaining proof.
$\Psi_3^P$ then checks $\phi \loc r(u_i)$ for all defined entries of $\tu$, and $\psi \loc {r'}(x)$ for all $x \notin \ball{R}{\tu}$ with $x \in \ball{2R}{\tv}$.

\medskip
We first observe that $\Psi_1^P$, $\Psi_2$ and $\Psi_3^P$ can all be written as local sentences (notice that $\Psi_2$ may not be a basic local sentence yet, as we may have $R \neq r'$).
Indeed, all sentences use only one type of quantifiers followed by a local formula (of sufficiently large radius).
We can thus apply \cref{thmGaifmanQuantifiers} or its dual version (scattered quantifiers can first be replaced by regular quantifiers and local distance formulae).

\medskip
It remains to prove equivalence.
In the following, fix a $K$-interpretation $\pi$.
We first prove that there is a partition $P$ with $\pi \ext {\Psi^P} \ge \pi \ext \Phi$.
Let $\td$ be an $M$-tuple at which the maximum in the evaluation of $\pi \ext \Phi$ is reached.
Then $\td$ is $r$-scattered (if no $r$-scattered tuple exists, then $\pi \ext \Phi = 0$ and the inequality is trivial).
By the \ClusteringLemma{}, there is a partition $P = \{I_1,\dots,I_m\} \in \Part(M)$ such that $\td$ is in configuration $(P, r+R)$.

Then $\pi \ext {\Psi_1^P} \ge \pi \ext \Phi$ by using $\td$ as witness for $\tu$ in $\Psi_1^P$.
Indeed, the main difference is that the evaluation of $\Phi$ includes the minimum of $\pi \ext {\psi\loc{r'}(a)}$ over all $a \notin \ball{R}{\td}$, whereas the evaluation of $\Psi_1^P$ effectively only considers $a \notin \ball{R}{\td}$ with $a \in \ball{2 \cdot 5^{M-m}(r+R)}{\td}$.

To see $\pi \ext {\Psi_2} \ge \pi \ext \Phi$, let $\tb$ be an $(M+1)$-tuple at which the minimum is reached.
Since $\tb$ is $R$-scattered, there must be at least one entry $b_i$ with $b_i \notin \ball R {\td}$.
Hence $\pi \ext {\Psi_2} \ge \pi \ext {\psi\loc{r'}(b_i)} \ge \pi \ext {\Anotball x R {\td} \; \psi\loc{r'}(x)}$.

For $\Psi_3^P$, let $\tv$ be an $M$-tuple at which the minimum is reached.
Recall that $\Phi$ reaches its maximum at $\td$.
Let $I = \{ i \mid$ there is $k$ with $i \in I_{i_k}$ and $d_{i_k} \in \ball{3R + 5^{M-m}(r+R)}{\tv} \}$ be the set of indices whose representative element is close to $\tv$.
Define $\td'$ as the subtuple of $\td$ consisting of those entries $d_i' = d_i$ with $i \in I$ (the other entries are undefined) and use $\td'$ as witness for $\tu$ in $\Psi_3^P$.
Clearly $\pi \ext {d(d'_i,d'_{i'}) > 2r} = 1$ since $\td$ is $r$-scattered,
and further $\pi \ext {\bigland_{j \in J} \bigland_{i \in I_j} \phi_i \loc r(d'_i)} \ge \pi \ext {\bigland_{i=1}^M \phi_i \loc r(d_i)}$.
For the remaining formula involving $\psi \loc {r'}$, the argument is similar to $\Psi_1^P$.
Here we effectively take the minimum of $\pi \ext {\psi\loc{r'}(a)}$ for those $a \in \ball {2R} {\tup v}$ that are $R$-far from $\td'$.
We claim that all such elements $a$ are also $R$-far from the full tuple $\td$.
To see this, consider an entry $d_i$ with $i \notin I$ and let $k$ be such that $i \in I_k$.
We have the following distances:
\begin{align*}
    d(a,\tv) &\le 2R, \\
    d(\tv,d_{i_k}) &> 3R + 5^{M-m}(r+R), &&\text{by construction of $\td'$}, \\
    d(d_{i_k},d_i) &\le 5^{M-m}(r+R)-(r+R), &&\text{since $\td$ is in configuration $(P,r+R)$}.
\end{align*}
Together, they imply $d(a,d_i) > R + (r+R) \ge R$, so $a$ is indeed $R$-far from $d_i$.
Hence in $\Psi_3$ we take the minimum of $\pi \ext {\psi\loc{r'}(a)}$ over (a subset of) elements $a \notin \ball {R} {\tup d}$.
We have shown $\pi \ext {\Psi_3^P} \ge \pi \ext \Phi$, and thus $\pi \ext {\Psi^P} \ge \pi \ext \Phi$.

\bigskip
Finally, we prove that $\pi \ext {\Psi^P} \le \pi \ext \Phi$, for every partition $P$.
This implies equivalence.
To prove the inequality, we fix a partition $P$ and construct an $r$-scattered $M$-tuple $\td$ with
\[
    \pi \ext {\Psi^P} \le \pi \ext {\bigland_{i=1}^M \phi_i \loc r(d_i) \,\land\, \Anotball x R {\td} \, \psi\loc{r'}(x)} \le \pi \ext \Phi. \tag{$*$}
\]
We construct $\td$ by taking each entry $d_i$ either from an $M$-tuple $\ta$, or from a partial $M$-tuple $\tc$.
This is done in such a way that all entries from one partition class $I_k$ (i.e., all $d_i$ with $i \in I_k$) are taken from the same tuple.

Let $\ta$ be an $M$-tuple at which the maximum in $\Psi_1^P$ is reached.
Then $\ta$ is $r$-scattered and in configuration $(P,r+R)$ (if no such tuple exists, then $\pi \ext {\Psi_1^P} = 0$ and the inequality is trivial).
Let further $\tb$ be a maximal $R$-scattered tuple with $\pi\ext{\Psi_2} > \pi\ext{\psi\loc{r'}(b_i)}$ for all $i$.
The size of $\tb$ is at most $M$ (by construction of $\Psi_2$) and $\pi\ext{\Psi_2} \le \pi\ext{\psi\loc{r'}(b')}$ for all $b' \notin \ball{2R}{\tb}$ (by maximality).
If necessary, we extend $\tb$ to an $M$-tuple by adding copies of $b_1$ (the resulting tuple is no longer $R$-scattered; this will not matter).
Finally, let $\tc$ be the partial tuple (together with $l$ and $J = \{j_1,\dots,j_l\}$) at which the maximum in $\Psi_3^P$ is reached once we instantiate $\tv$ by $\tb$, so
\begin{align*}
    \pi \ext {\Psi_3^P} \le
    \pi \Ext {\bigland_{\substack{j \in J\\i \in I_j}} \phi_i \loc r(c_i) \;\land\; \bigland_{i < i'} d(c_i,c_{i'}) > 2r \;\land\; \Aball{x}{2R}{\tb} \big( d(x,\tc) \le R  \,\lor\, \psi\loc{r'}(x)\big)}.
\end{align*}
In particular, $\tc$ is $r$-scattered (unless $\pi \ext \Psi_3^P = 0$).
Note that the entry $c_i$ is defined only for $i \in I_k$ with $k \in J$.
We remark that $l=0$ is possible, so $\tc$ may be empty (then $\pi \ext {\Psi_3^P} \le \pi \ext {\Aball{x}{2R}{\tb} \, \psi\loc{r'}(x)}$); this will not affect our arguments.

\medskip
We now define the desired tuple $\td$.
Let $\bar J$ denote the complement $\{1,\dots,m\} \setminus J$.
For each $K \subseteq \{1,\dots,m\}$ with $\bar J \subseteq K$, we define an $M$-tuple $\td^K$ by
\[
    d^K_i = \begin{cases*}
        a_i, &if $i \in I_k$ with $k \in K$, \\
        c_i, &if $i \in I_k$ with $k \notin K$.
    \end{cases*}
\]
That is, $K$ specifies for which partition classes we take the elements from $\ta$, and for the other classes we use the entries from $\tc$ (which are defined, as $\bar J \subseteq K$).
It remains to find the right choice for $K$ so that $\td^K$ is $r$-scattered and satisfies $(*)$.
We use the following algorithm that starts by taking all possible entries from $\tc$, and then switches (classes of) entries to $\ta$ until the resulting tuple is $r$-scattered.

\begin{algorithm}[H]
    initialise $K = \bar J$\;
    \While{there are $k \in K$, $k' \notin K$ and $i \in I_k$, $i' \in I_{k'}$ with $d(a_i,c_{i'}) \le 2r$}{
        $K \gets K \cup \{k'\}$\;
    }
\end{algorithm}

\smallskip
This algorithm clearly terminates, as the loop condition is violated for $K = \{1,\dots,m\}$.
Let $\td = \td^K$ be the resulting tuple.
Then $\td$ is $r$-scattered: the tuples $\ta$ and $\tc$ are both $r$-scattered, and whenever $d_i = a_i$ and $d_{i'} = c_i$ with $d(a_i,c_i) \le 2r$, the loop condition applies to $i$, $i'$.
We conclude the proof by showing the following two claims.

\begin{claim*}
$\pi \ext {\Psi^P} \le \pi \ext {\phi_i \loc r (d_i)}$ for all $i$
\end{claim*}
\begin{claimproof}
By construction of $\ta$, we have $\pi \ext {\Psi_1^P} \le \pi \ext {\phi_i \loc r (a_i)}$ for all $1 \le i \le M$.
By construction of $\tb$ and $\tc$, we have $\pi \ext {\Psi_3^P} \le \pi \ext {\phi_i \loc r (c_i)}$ for all $i$ where $c_i$ is defined.
The claim follows.
\end{claimproof}

\begin{claim*}
$\pi \ext {\Psi^P} \le \pi \ext {\Anotball x R {\td^K} \, \psi \loc{r'}(x)}$ is an invariant of the algorithm.
\end{claim*}
\begin{claimproof}
For $x \notin \ballpi {2R} {\tb}$, we have $\pi \ext {\Psi_2} \le \pi \ext {\psi \loc{r'}(x)}$ (by choice of $\tb$).
For $x \in \ballpi {2R} {\tb} \setminus \ballpi R \tc$, we have $\pi \ext {\Psi_3^P} \le \pi \ext {\psi \loc{r'}(x)}$.
Since $\td^K$ initially contains all (defined) entries of $\tc$, we have $\pi \ext {\Psi_2 \land \Psi_3^P} \le \pi \ext {\Anotball x R {\td^K} \, \psi \loc{r'}(x)}$ at the start of the algorithm.

Now suppose the invariant is true for $K$ and we perform an update to $K \cup \{k'\}$, where we have $k \in K$, $k' \notin K$ and $i \in I_k$, $i' \in I_{k'}$ with $d(a_i,c_{i'}) \le 2r$.
Let $x \notin \ballpi R {\td^{K \cup \{k'\}}}$.
If $x \notin \ballpi R {\td^{K}}$, then the invariant for $K$ applies, so suppose $x \in \ballpi R {\td^{K}}$.
We prove that $\pi \ext {\Psi_1^P} \le \pi \ext {\psi\loc{r'}(x)}$, which implies the claim.
We need to prove two properties:
\begin{itemize}
\smallskip
\item $x \in \ballpi {2 \cdot 5^{M-m}(r+R)}{\ta}$:\\
Since $x \in \ballpi R {\td^{K}} \setminus \ballpi R {\td^{K \cup \{k'\}}}$, there is $j \in I_{k'}$ such that $x \in \ballpi R {c_j}$.
We obtain the following distances:
\begin{align*}
    d(x,c_j) &\le R, \\
    d(c_j,c_{i_{k'}}) &\le 5^{M-m}(r+R)-(r+R), &&\text{since $j \in I_{k'}$}, \\
    d(c_{i_k'},c_{i'}) &\le 5^{M-m}(r+R)-(r+R), &&\text{since $i' \in I_{k'}$ (loop condition)}, \\
    d(c_{i'}, a_i) &\le 2r, &&\text{by the loop condition}.
\end{align*}
Together, they imply $d(x,a_i) \le 2 \cdot 5^{M-m}(r+R) - R$, thus $x \in \ballpi {2 \cdot 5^{M-m}(r+R)}{\ta}$.

\medskip
\item $x \notin \ballpi R \ta$:\\
Since $x \notin \ballpi R {\td^{K \cup \{k'\}}}$ and $k \in K$, we have $c \notin \ballpi R {a_j}$ for all $j \in I_k$.
Now consider $j \in I_{k''}$ for any $k'' \neq k$.
Then:
\begin{align*}
    d(x,a_i) &\le 2 \cdot 5^{M-m}(r+R) - R, &&\text{see above} \\
    d(a_i,a_{i_k}) &\le 5^{M-m}(r+R)-(r+R), &&\text{since $i \in I_{k}$}, \\
    d(a_{i_k},a_{i_{k''}}) &> 4 \cdot 5^{M-m}(r+R), &&\text{scattered quantification in $\Psi_1^P$}, \\
    d(a_{i_{k''}}, a_j) &\le 5^{M-m}(r+R)-(r+R), &&\text{since $j \in I_{k''}$}.
\end{align*}
Together, they imply $d(x,a_j) > R + 2(r+R) \ge R$. \claimqedhere
\end{itemize}
\end{claimproof}

\noindent
Combining the two claims implies $\pi \ext {\Psi^P} \le \pi \ext \Phi$, which closes the equivalence proof.
\end{proof}

This ends the proof of step \enumref{3}, and thus the proof of \cref{thmGaifman}.

\section{Strengthening Gaifman's Theorem}
\label{secStrengthening}

In this section, we rephrase our main result in terms of Boolean semantics, which leads to a novel strengthening of Gaifman's classical theorem.
Interestingly, \cref{thmGaifman} can be regained from the Boolean result by algebraic techniques, and even lifted to the class of lattice semirings (denoted $\Lattice$).
These insights suggest that the merit of our proof, and the reason why it is more complicated than Gaifman's original proof, is not primarily the more fine-grained notion of equivalence, but rather the construction of a Gaifman normal form without the use of negation.
Indeed, a careful examination of our proof in \cref{secGaifmanProof} reveals that our constructions only use literals that were already part of the original sentence, and hence do not add negations.

Our proof applies in particular to the Boolean semiring, and we can thus formulate this observation for standard Boolean semantics.
The only difference between semiring semantics in the Boolean semiring and Boolean semantics is our addition of ball quantifiers, but these can be expressed by distance formulae.
Moreover, we always assume negation normal form and thus permit the duals of basic local sentences (quantifying over \emph{all} scattered tuples).
In Boolean semantics, we can instead express these by negations of basic local sentences.
Notice that this adds negations to the formula, but the number of negations added in front of each literal is even.
Following the common definition that an occurrence of a relation is \emph{positive} if it is behind an \emph{even} number of negations, and \emph{negative} if it is behind an \emph{odd} number of negations, we can formulate the following strengthening of Gaifman's classical result.

\begin{corollary}[Gaifman normal form without negation]
\label{thmGaifmanNoNegation}
Let $\tau$ be a finite relational signature.
In Boolean semantics, every $\FO(\tau)$-sentence $\psi$ has an equivalent local sentence $\psi'$ such that
every relation symbol occurring only positively (only negatively) in $\psi$ also occurs only positively (only negatively) in $\psi'$, not counting occurrences within distance formulae.
\end{corollary}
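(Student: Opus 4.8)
The plan is to derive \cref{thmGaifmanNoNegation} from \cref{thmGaifman} rather than reprove Gaifman's theorem from scratch, by specialising to the Boolean semiring and auditing the construction for the polarity claim. First I would observe that the Boolean semiring $\Bool$ is itself a min-max semiring, so \cref{thmGaifman} applies: for every $\FO(\tau)$-sentence $\psi$ — which we may assume to be in negation normal form, as putting it there is polarity-preserving — we obtain a local sentence $\psi'$ with $\psi \keq[\Bool] \psi'$. Second, on $\Bool$ the semiring value of an $\FO$-sentence equals its classical truth value, and the ball quantifiers $\Qball z {r'} y$ as well as the scattered quantifiers of \cref{defLocalSentence} are definable in classical first-order logic using only distance formulae $d(y,z) \le r'$ (respectively $d(y_i,y_j) > 2r$). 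Expanding them turns $\psi'$ into an ordinary first-order sentence, classically equivalent to $\psi$, which is a Boolean combination of basic local sentences in the classical sense — each universal, dual basic local sentence being rewritten as the negation of an existential one. Hence the only genuinely new content of the corollary is the polarity statement.

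For the polarity statement, I would keep track of which $\tau$-literals can possibly occur in $\psi'$ by inspecting every transformation used in \cref{secGaifmanProof}. The algebraic equivalences of \cref{thmAxioms} only rearrange $\land$, $\lor$ and quantifiers. The \SymmetryLemma{} uses the dualisation $\phi \mapsto \dual\phi$, which swaps $\E \leftrightarrow \A$ and $\land \leftrightarrow \lor$ but leaves all literals untouched, so it preserves the polarity of every relation symbol. The \ClusteringLemma{} and \SeparationLemma{}, the ``close/far'' splitting, the scattered-quantification steps, and the elimination of the outside quantifier only ever \emph{introduce} equalities and inequalities $x = y$, $x \neq y$ and distance formulae $d(x,y) \le r$, $d(x,y) > r$, but never a $\tau$-atom. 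The \AbstractionLemma{} replaces a subformula $\theta$ (in particular a $\tau$-atom containing a forbidden free variable, or a distance formula) by a fresh atom $R\tx$ of the \emph{same} polarity, and the final step substitutes $\theta$ back in place of $R\tx$, so no $\tau$-symbol changes polarity and the fresh symbols disappear from $\psi'$. Propagating this invariant through the induction on the number of quantifier alternations in the proof of \cref{thmGaifman} shows that every $\tau$-literal occurring in $\psi'$, outside of distance formulae, already occurs with the same polarity in $\psi$. In particular a relation symbol occurring only positively (only negatively) in $\psi$ retains this property in $\psi'$.

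It remains to transfer this to the formulation in standard Boolean semantics. A local sentence in the sense of \cref{defLocalSentence} is a positive Boolean combination of basic local sentences, and the universal (dual) basic local sentences $\Asc r {y_1,\dots,y_m} \bigvee_{i \le m} \phi\loc r(y_i)$ must be expressed classically as $\neg\, \Esc r {y_1,\dots,y_m} \bigwedge_{i\le m} (\neg\phi\loc r)(y_i)$, i.e.\ as the negation of an existential basic local sentence whose inner local formula is the negation of $\phi\loc r$. Along any branch of the resulting syntax tree, each $\tau$-literal of $\phi\loc r$ then lies behind \emph{two} negations more than before — one from negating $\phi\loc r$, one from negating the basic local sentence — hence behind an even number of extra negations, so its polarity is unchanged; and this is the only place outside distance formulae where negations are introduced at all. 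Combining the three points yields the required classical local sentence $\psi'$ with the stated polarity property.

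I expect the main obstacle to be the second paragraph: carefully and uniformly verifying across all of \cref{secGaifmanProof} — in particular across the repeated uses of the \AbstractionLemma{}, where $\tau$-atoms with free variables are abstracted, the inner \emph{sentence} is manipulated by the earlier steps, and the abstraction is then reversed — that no intermediate step ever inserts a $\tau$-literal of the wrong polarity or an odd number of extra negations in front of one. Conceptually this is routine, since the construction is syntax-directed and never applies De Morgan's laws to $\tau$-literals; but it has to be checked for every lemma and every case of the alternation-elimination induction.
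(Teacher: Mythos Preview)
The proposal is correct and takes essentially the same approach as the paper. The paper itself only sketches the argument in the paragraphs preceding the corollary --- specialising \cref{thmGaifman} to $\Bool$, noting that the constructions of \cref{secGaifmanProof} use only literals already present in $\psi$, replacing ball quantifiers by distance formulae, and observing that rewriting the dual basic local sentences as negations of existential ones adds an even number of negations --- and your more detailed lemma-by-lemma audit is exactly the fleshing-out the paper leaves to the reader.
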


Notice that \cref{thmGaifmanNoNegation} in particular says that if a relation symbol does not occur at all in $\psi$ (i.e., neither positively nor negatively), then it also does not occur in $\psi'$.
We make use of this property to prove \cref{thmGaifman} below.
We further remark that \cref{thmGaifmanNoNegation} only applies to sentences (see \cref{exGaifmanNoNegationCounter} below).

We believe that this result may be of independent interest. 
A similar adaptation of Gaifman's theorem has been considered in \cite{GroheWoe04}, namely that \emph{existential} sentences are equivalent to \emph{positive} Boolean combinations of \emph{existential} basic local sentences.
We prove a similar result (see \cref{remarkExistentialGaifman}), as we also construct a positive Boolean combination of existential basic local sentences (but we permit distance formulae $d(x,y) > 2r$, while \cite{GroheWoe04} does not).
Moreover, the approximation schemes of \cite{DawarGroKreSch06} are based on a version of Gaifman's theorem for sentences positive in a single unary relation (i.e., no negations are added in front of this relation).
Their proof uses a version of Ehrenfeucht-Fra\"{\i}ss{\'e} games, which is quite different from our syntactical approach.
Since unary relations do not occur in distance formulae, \cref{thmGaifmanNoNegation} subsumes their result.
Interestingly, \cite{GroheWoe04,DawarGroKreSch06} both share our observation that the proof of the respective version of Gaifman's theorem is surprisingly difficult.

\medskip
It turns out that we can prove \cref{thmGaifman} just from \cref{thmGaifmanNoNegation}, so we could actually rephrase the proof in \cref{secGaifmanProof} in terms of Boolean semantics.
But since the main difficulty is about negation rather than semirings, this would not lead to a significant simplification.
The proof uses the method of \emph{separating homomorphisms} \cite{GraedelMrk21}.
After some preparations, these homomorphisms allow us to lift Boolean equivalences $\keq[\Bool]$ to minmax-equivalences $\meq$ (thus proving \cref{thmGaifman}) and even to equivalences on lattice semirings $\keq[\Lattice]$.

\begin{definition}[Separating homomorphisms \cite{GraedelMrk21}]
Let $K,L$ be two semirings.
A set $S$ of homomorphisms $h \from K \to L$ is called \emph{separating} if for all $a,b \in K$ with $a \neq b$,
there is a homomorphism $h \in S$ such that $h(a) \neq h(b)$.
\end{definition}

Here we consider homomorphisms into the Boolean semiring $L = \Bool$.
For any min-max semiring $K$ and every non-zero element $b\in K$, we define the homomorphism
\[
    h_b \from K \to \Bool, \quad x \mapsto \begin{cases*}
        1, &if $x \ge b$, \\
        0, &if $x < b$.
    \end{cases*}
\]
Then, for any pair $a<b$ in K, we have that $h_b(b)=1$ and $h_b(a)=0$, so these homomorphisms form a separating set.

We also need the following technical observation.
Here we use the notation $\phi \sig \tau$ introduced in the \AbstractionLemma{} to restrict the signature of ball quantifiers.

\begin{lemma}
\label{thmEquivalenceSignature}
Let $\tau' = \tau \dcup \{R\}$ and let $\phi,\psi$ be $\tau'$-formulae in which $R$ does not occur (but quantifiers $\Qballtau {\tau'} y r x$ implicitly depend on $R$).
Then $\phi \meq \psi$ implies $\phi \sig \tau \meq \psi \sig \tau$.
\end{lemma}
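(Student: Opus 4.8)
The plan is to reduce the claim directly to the defining property of $\meq$ by exhibiting, for an arbitrary min-max $K$-interpretation $\pi$ of the \emph{smaller} signature $\tau$, a canonical extension $\pi'$ to the larger signature $\tau'$ on which evaluating $\phi$ (resp. $\psi$) coincides with evaluating $\phi\sig\tau$ (resp. $\psi\sig\tau$) under $\pi$. Concretely, given a model-defining $K$-interpretation $\pi\from\Lit_A(\tau)\to K$ over a finite universe $A$ and a tuple $\ta$ matching the free variables, I would extend $\pi$ to $\pi'\from\Lit_A(\tau')\to K$ by putting $\pi'(R\tb)\coloneqq 0$ and $\pi'(\neg R\tb)\coloneqq 1$ for all tuples $\tb\subseteq A$, while keeping $\pi'(L)\coloneqq\pi(L)$ on $\tau$-literals. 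The point of this particular choice is that $R$ receives the value $0$ everywhere, so it contributes no edges to the Gaifman graph; hence $G(\pi')=G(\pi)$, and therefore the $r$-neighbourhood of any element computed with respect to (the $\tau'$-reduct of) $\pi'$ agrees with the one computed with respect to (the $\tau$-reduct of) $\pi$. Note also that $\pi'$ is again model-defining and tracks only positive information, so that $\phi\meq\psi$ is genuinely applicable to it.

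The main step is then a routine induction showing $\pi'\ext{\chi(\tb)}=\pi\ext{\chi\sig\tau(\tb)}$ for every $\tau'$-formula $\chi$ in which $R$ does not occur (and every matching tuple $\tb$). The base case of literals and (in)equalities is immediate, since $\chi\sig\tau=\chi$ is already a $\tau$-literal and $\pi'$ agrees with $\pi$ there; the Boolean connectives and the unrelativised quantifiers $\E,\A$ follow from the induction hypothesis because $\pi$ and $\pi'$ share the same universe; and for a ball quantifier $\Qballtau{\tau'}{y}{r}{x}$ one uses exactly the equality of neighbourhoods noted above to see that the sum (resp. product) over $\ballpi[\pi']{r}{a}$ equals the sum (resp. product) over $\ballpi[\pi]{r}{a}$. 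Scattered quantifiers and distance formulae are abbreviations built from these, so they require no separate treatment. Applying this to $\chi=\phi$ and $\chi=\psi$ and invoking $\phi\meq\psi$ for the particular min-max semiring $K$ and the particular interpretation $\pi'$ yields $\pi\ext{\phi\sig\tau(\ta)}=\pi'\ext{\phi(\ta)}=\pi'\ext{\psi(\ta)}=\pi\ext{\psi\sig\tau(\ta)}$, which is the claim since $\pi$ and $\ta$ were arbitrary.

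I do not expect a serious obstacle here. The only point requiring genuine care — and the closest thing to a subtlety — is verifying that extending $\pi$ by the ``empty'' relation $R$ really does leave the Gaifman graph unchanged, so that every ball quantifier ranges over the same elements before and after the signature change; once this observation is in place, the induction goes through mechanically and the rest is bookkeeping about $\pi$ and $\pi'$ having the same universe and agreeing on $\tau$-literals.
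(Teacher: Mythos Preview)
Your proposal is correct and follows essentially the same approach as the paper: extend an arbitrary $\tau$-interpretation $\pi$ to a $\tau'$-interpretation $\pi'$ by making $R$ empty, observe that this leaves the Gaifman graph and hence all ball quantifiers unchanged, and conclude $\pi\ext{\phi\sig\tau}=\pi'\ext{\phi}=\pi'\ext{\psi}=\pi\ext{\psi\sig\tau}$. Your write-up is somewhat more explicit about the induction and about $\pi'$ being model-defining and tracking only positive information, but the argument is the same.
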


\begin{proof}
Let $\pi$ be a model-defining $K$-interpretation of signature $\tau$.
Extend $\pi$ to a model-defining $K$-interpretation $\pi'$ of signature $\tau'$ by adding an empty relation $R$, i.e., $\pi'(R\ta) = 0$ for all $\ta$.
Then $G(\pi) = G(\pi')$, so quantifiers $\Qballtau {\tau} y r a$ in $\pi$ and $\Qballtau {\tau'} y r a$ in $\pi'$ range over the same $r$-neighbourhoods $\ballpi r a = \ballpi[\pi'] r a$.
Hence $\pi \ext {\phi \sig \tau} = \pi' \ext \phi = \pi' \ext \psi = \pi \ext {\psi\sig\tau}$, where the equality in $\pi'$ is due to $\phi \meq \psi$.
\end{proof}

We now apply separating homomorphisms to lift $\keq[\Bool]$ to $\meq$.
The idea is that a falsifying interpretation $\pi$ for $\meq$ induces a falsifying interpretation $h \circ \pi$ for $\keq[\Bool]$.
However, we need some preparation to make sure that the interpretation $h \circ \pi$ remains model-defining and that the Gaifman graph is preserved.
We achieve this by extending $\pi$ with a relation $G$ encoding its Gaifman graph, and making sure that $G$ is preserved by the homomorphism $h$.

\begin{proof}[Proof of \cref{thmGaifman} from \cref{thmGaifmanNoNegation}]
Let $\psi \in \FO(\tau)$ be a sentence in negation normal form,
and let $\tau' = \tau \cup \{G\}$ for a fresh binary relation symbol $G$.
For the rest of the proof, we view $\psi$ as a $\tau'$-sentence.

If there is a relation symbol $R$ that occurs both positively and negatively in $\psi$, we apply the \AbstractionLemma{} to replace all positive occurrences $R\tx$ by $R_1 \tx$ and all negative occurrences $\neg R\tx$ by $\neg R_0 \tx$, for fresh relation symbols $R_0,R_1$.
Let $\psi'$ be the resulting $\tau''$-sentence (with $\tau' \subseteq \tau''$) in which every relation symbol occurs either only positively, or only negatively (and $G$ does not occur at all).

By \cref{thmGaifmanNoNegation}, there is a Gaifman normal form $\phi_\Bool'$ (of signature $\tau''$) of $\psi'$ such that also for $\phi_\Bool'$, every relation symbol occurs only positively, or only negatively, and $G$ does not occur (outside of distance formulae).
We bring $\phi_\Bool'$ to negation normal form and use ball quantifiers instead of distance formulae.
The resulting formula $\phi'$ is a Gaifman normal form according to our definition.
We have $\psi' \keq[\Bool] \phi'$ and we claim that also $\psi' \meq \phi'$.
Then, the \AbstractionLemma{} implies $\psi \meq \phi$, where $\phi$ is a Gaifman normal form (of signature $\tau'$) that results from $\phi'$ by replacing all $R_0,R_1$ back.
By \cref{thmEquivalenceSignature}, $\psi \meq \phi$ implies $\psi \meq \phi \sig \tau$, so $\phi \sig \tau$ is a Gaifman normal form for the $\tau$-sentence $\psi$, which closes the proof.

It remains to prove $\psi' \meq \phi'$ (over the signature $\tau''$).
Towards a contradiction, assume that $\psi' \not\meq \phi'$.
Then there is a min-max semiring $K$ and a model-defining $K$-interpretation $\pi$ over universe $A$ and signature $\tau''$ such that $\pi \ext {\psi'} = s \neq t = \pi \ext {\phi'}$ for some $s,t \in K$.
We define $\pi'$ by modifying $\pi$ as follows: for all $a,b \in A$, we set $\pi'(Gab) = 1$ if $a,b$ are adjacent in $G(\pi)$, and $\pi'(Gab)=0$ otherwise.
Then $G(\pi') = G(\pi)$ and since $G$ does not occur in $\psi',\phi'$, we still have $\pi' \ext {\psi'} = s \neq t = \pi' \ext {\phi'}$.
Let $h$ be a separating homomorphism for $s,t$.
We define a $\Bool$-interpretation $\pi_\Bool$ over $A$ as follows.
For every relation symbol $R \in \tau''$,
\begin{itemize}
\item if $R$ occurs only positively in $\psi',\phi'$, we set $\pi_\Bool(R\ta) = h(\pi'(R\ta))$ for all $\ta \subseteq A$;
\item if $R$ occurs only negatively in $\psi',\phi'$, we set $\pi_\Bool(\neg R\ta) = h(\pi'(\neg R\ta))$ for all $\ta \subseteq A$.
\end{itemize}

We define the unspecified values in the unique way so that $\pi_\Bool$ is model-defining.
Notice that these values correspond to literals that do not occur in $\psi',\phi'$, so they do not affect the evaluation of $\psi',\phi'$ in $\pi_\Bool$.
Since $h(0)=\bot$, false literals remain false, so $G(\pi_\Bool) \subseteq G(\pi)$.
And since $h(1)=\top$, the interpretation of $G$ is preserved and we have equality: $G(\pi_\Bool) = G(\pi)$.
Then $\pi_\Bool \ext {\psi'} = h(\pi' \ext {\psi'}) \neq h(\pi' \ext {\phi'}) = \pi_\Bool \ext {\phi'}$ by a straightforward induction (cf.\ \cite[Fundamental Property]{GraedelTan17}), where we use $G(\pi) = G(\pi_\Bool)$ to show that the semantics of ball quantifiers is preserved.
But this contradicts $\psi' \keq[\Bool] \phi'$.
\end{proof}

We remark that the lifting argument implies that for many sentences (to be precise, those where no relation occurs both positively and negatively), the Gaifman normal form in min-max and lattice semirings coincides with the one for Boolean semantics in \cref{thmGaifmanNoNegation} (but not necessarily with Gaifman's original construction).

\begin{example}
\label{exGaifmanSimple}
Consider the sentence $\psi = \E x \A y \, Exy$ which asserts that every node is adjacent to a central node $x$.
In particular, the diameter of the Gaifman graph must be at most $2$, so it suffices to locally quantify with radius $2$.
We thus obtain the following Gaifman normal form in Boolean semantics:
\[
    \psi \equiv \neg \E x_1 \E x_2 (d(x_1,x_2) > 2 \,\land\,\text{true}) \;\land\; \E x \A y (d(x,y) \le 2 \to Exy).
\]

The same Gaifman normal form also works in all min-max semirings (up to negation normal form and ball quantifiers; $\text{false}$ can be expressed by the local formula $x \neq x$):
\[
    \psi \meq \Asc 1 {x_1, x_2} \,\text{false} \;\land\; \E x \, \Aball y 2 x \, Exy. \tag*{\lipicsEnd}
\]
\end{example}

\medskip
A further consequence is that the counterexample for formulae also applies to \cref{thmGaifmanNoNegation}.

\begin{Example}
\label{exGaifmanNoNegationCounter}
Recall the counterexample $\psi(x) = \E y (Uy \land y \neq x)$ of \cref{secCounterexampleFormula}.
In Boolean semantics, a Gaifman normal form is given by
\[
    \psi \equiv \E x_1 \E x_2 (x_1 \neq x_2 \land Ux_1 \land U x_2) \;\lor\; (\neg Ux \,\land\, \E y \, U y).
\]
Notice that this Gaifman normal form \emph{adds negations}: the relation $U$ occurs only positively in $\psi$, but here occurs negatively in $\neg U x$.
This is not a coincidence, but must be the case for every Gaifman normal form of $\psi(x)$.
Otherwise, the reasoning in the above proof would imply that the Gaifman normal form also works in min-max semirings, contradicting \cref{secCounterexampleFormula}.
This shows that \cref{thmGaifmanNoNegation} does, in general, not hold for formulae.
\end{Example}

\medskip
Finally, we remark that the proof of \cref{thmGaifman} via separating homomorphisms is not specific to min-max semirings, but applies to any class of semirings for which separating homomorphisms into the Boolean semiring exist.
It can be shown that such separating homomorphisms exist for every lattice semiring (see \cite{Brinke23}).
We can thus generalise \cref{thmGaifman} to lattice semirings.

\begin{restatable}{corollary}{restateGaifmanLattice}
Let $\tau$ be a finite relational signature.
Every $\FO(\tau)$-sentence $\psi$ is lattice-equivalent ($\keq[\Lattice]$) to a local sentence.
\end{restatable}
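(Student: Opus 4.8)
The plan is to reuse, almost verbatim, the proof of \cref{thmGaifman} from \cref{thmGaifmanNoNegation} via separating homomorphisms, with \emph{min-max} replaced by \emph{lattice} throughout; the only genuinely new ingredient is the existence of a separating set of semiring homomorphisms from every lattice semiring into $\Bool$. For min-max semirings these were the explicit threshold maps $h_b$, but for a lattice semiring $(K,\join,\meet,0,1)$ one may invoke \cite{Brinke23}: concretely, such homomorphisms arise from prime filters, which separate points in a bounded distributive lattice (Birkhoff/Stone representation). Given $a \not\le b$, a prime filter $F$ with $a \in F$ and $b \notin F$ yields a $\{0,1\}$-preserving lattice homomorphism $h_F \from K \to \Bool$, $x \mapsto \top$ iff $x \in F$, i.e.\ a semiring homomorphism into $\Bool$, and the collection of all $h_F$ is separating.

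The lifting itself goes as follows. Given $\psi \in \FO(\tau)$ in negation normal form, set $\tau' = \tau \cup \{G\}$ for a fresh binary symbol $G$ and apply the \AbstractionLemma{} to replace, for every relation symbol occurring in $\psi$ with both polarities, its positive occurrences by $R_1\tx$ and its negative occurrences by $\neg R_0\tx$, obtaining a $\tau''$-sentence $\psi'$ in which every relation symbol occurs with a single polarity and $G$ does not occur at all. By \cref{thmGaifmanNoNegation} there is a Boolean Gaifman normal form of $\psi'$ with the same single-polarity property and not mentioning $G$ (outside distance formulae); rewriting it in negation normal form with ball quantifiers in place of distance formulae (which flips no polarities, as the added negations come in pairs) produces a local sentence $\phi'$ with $\psi' \keq[\Bool] \phi'$. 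Granting the key claim $\psi' \keq[\Lattice] \phi'$ proved below, the \AbstractionLemma{} gives $\psi \keq[\Lattice] \phi$ for the $\tau'$-local sentence $\phi$ obtained by substituting $R_0,R_1$ back; finally, the lattice analogue of \cref{thmEquivalenceSignature} — whose proof is the given one unchanged, since adjoining an empty relation does not alter the Gaifman graph — gives $\psi \keq[\Lattice] \phi\sig\tau$, a local sentence over $\tau$, which is the corollary.

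It remains to prove $\psi' \keq[\Lattice] \phi'$, which is where the separating homomorphisms enter. Suppose not: there are a lattice semiring $K$ and a model-defining $K$-interpretation $\pi$ over $\tau''$ (tracking only positive information) with $\pi\ext{\psi'} = s \neq t = \pi\ext{\phi'}$. Replace the $G$-facts of $\pi$ by the indicator of $G(\pi)$ to obtain $\pi'$ with $G(\pi') = G(\pi)$ and still $\pi'\ext{\psi'} = s \neq t = \pi'\ext{\phi'}$. Pick a separating homomorphism $h \from K \to \Bool$ with $h(s) \neq h(t)$ and define a $\Bool$-interpretation $\pi_\Bool$ over the same universe by $\pi_\Bool(R\ta) = h(\pi'(R\ta))$ for relations occurring only positively, $\pi_\Bool(\neg R\ta) = h(\pi'(\neg R\ta))$ for relations occurring only negatively, and arbitrary model-defining values on the remaining (non-occurring) literals; since $h(0) = \bot$ and $h(1) = \top$, this $\pi_\Bool$ is model-defining and tracks only positive information. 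Because $h$ fixes $0$ and $1$, false literals stay false (so $G(\pi_\Bool) \subseteq G(\pi)$) and the $G$-facts are preserved (so $G(\pi) \subseteq G(\pi_\Bool)$), whence $G(\pi_\Bool) = G(\pi)$; thus ball quantifiers in $\pi_\Bool$ and $\pi'$ range over the same neighbourhoods, and a routine induction (cf.\ \cite[Fundamental Property]{GraedelTan17}) gives $\pi_\Bool\ext{\psi'} = h(s) \neq h(t) = \pi_\Bool\ext{\phi'}$, contradicting $\psi' \keq[\Bool] \phi'$.

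The main obstacle is the opening point: establishing that lattice semirings admit separating homomorphisms into $\Bool$. Once that is secured (via \cite{Brinke23}, i.e.\ ultimately the prime filter / representation theorem for bounded distributive lattices), the remainder is the same polarity-bookkeeping as in the min-max case. In particular, none of the intricate min-max-specific machinery of \cref{secGaifmanProof} — the \SymmetryLemma{}, the DNF/CNF reductions, or the quantifier-elimination chain — has to be revisited here, since it serves only to establish the Boolean statement \cref{thmGaifmanNoNegation}, which is taken as given.
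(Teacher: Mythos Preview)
Your proposal is correct and follows essentially the same approach as the paper: the paper simply remarks that the separating-homomorphism proof of \cref{thmGaifman} from \cref{thmGaifmanNoNegation} applies verbatim to any class of semirings admitting separating homomorphisms into $\Bool$, and cites \cite{Brinke23} for the existence of such homomorphisms for lattice semirings. Your additional remark that these homomorphisms arise from prime filters via the Birkhoff/Stone representation is the concrete content behind that citation, but otherwise the argument is identical.
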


\section{Conclusion}

Semiring semantics is a refinement of classical Boolean semantics, which provides more detailed information
about a logical statement than just its truth or falsity. This leads to a finer distinction between 
formulae: statements that are equivalent in the Boolean sense may have different valuations in
semiring interpretations, depending on the underlying semiring. It is an interesting and non-trivial question,
which logical equivalences and, more generally, which model-theoretic methods, can be carried over from
classical semantics to semiring semantics, and how this depends on the algebraic properties of the underlying 
semiring.

Here we have studied this question for locality properties of first-order logic, in particular for Hanf's locality 
theorem and for Gaifman normal forms. Our setting assumes semiring interpretations which are model-defining and track only positive information, since these
are the conditions that provide well-defined and meaningful locality notions. However, from the outset, it has been clear that one cannot
expect to transfer all locality properties of
first-order logic to semiring semantics in arbitrary commutative semirings. 
Indeed, semiring semantics evaluates existential and universal quantifiers  
by sums and products over all elements of the universe,
which gives an inherent source of non-locality if these operations are not idempotent. 

Most positive locality results thus require that the underlying semirings
are fully idempotent. Under this assumption, one can adapt the classical proof
of Hanf's locality theorem to the semiring setting, relying on a back-and-forth argument that
itself requires fully idempotent semirings.
The question whether there exist Gaifman normal forms in semiring semantics turned out to be
more subtle. Indeed, for formulae with free variables Gaifman normal forms need not
exist once one goes beyond the Boolean semiring. Also for sentences, one can find
examples that do not admit Gaifman normal forms in semirings that are not fully idempotent.
We have presented such an example for the tropical semiring.

Our main result, however, is a positive one and establishes the existence of Gaifman normal forms over the class of 
all min-max and lattice semirings.
Intuitively, it relies on the property that in min-max semirings, the value of a quantified statement $\E x \, \phi(x)$ or $\A x \, \phi(x)$
coincides with a value of $\phi(a)$, for some witness $a$.
This needs, for instance, not be the case in lattice semirings, and hence the generalisation to lattice semirings uses a different approach based on separating homomorphisms.
It is still an open question whether, in analogy to Hanf's theorem, Gaifman normal forms exist over all fully idempotent semirings.
The proof of our main result, which is based on quantifier elimination arguments, turned out to be surprisingly difficult;
we identified the lack of a classical negation operator as the main reason for its complexity.
An interesting consequence of this restriction is a stronger version of Gaifman's classical theorem in Boolean semantics:
every sentence has a Gaifman normal form which, informally speaking, does not add negations.

For applications such as provenance analysis, min-max semirings are relevant, for instance, for studying access levels and security issues.
A much larger interesting class of semirings with wider applications
are the absorptive ones, including the tropical semiring, in which addition is idempotent, but multiplication in general is not. We have seen that
Gaifman normal forms for such semirings need not exist for all sentences. The question arises whether one can establish
weaker locality properties for absorptive semirings, applicable perhaps to just a relevant fragment of first-order logic.


\bibliography{locality}


\appendix
\clearpage

\counterwithin{theorem}{section}
\counterwithin{lemma}{section}
\counterwithin{proposition}{section}
\counterwithin{corollary}{section}

\section{Counterexample against Distance Formula}
\label{appendixDistance}

In this appendix, we justify the addition of ball quantifiers $\Qball y r x$ by showing that if we define Gaifman normal forms
via the same distance formulae $d(x,y) \le r$ as in Boolean semantics, then \cref{thmGaifman} does not hold.
For simplicity, we only consider graphs.
In Boolean semantics, we can then define the distance formula inductively:
\begin{align*}
    d(x,y) \le 0 &\;\coloneqq\; x=y, \\
    d(x,y) \le r+1 &\;\coloneqq\; \E z ((Exz \lor Ezx) \land d(z,y) \le r).
\end{align*}
We use the same formula also in semiring semantics.
For example, for $r=2$ we have (after slight simplifications, which are also sound in min-max semirings):
\[
    d(x,y) \le 2 \;\equiv\; x=y \,\lor\, (Exy \lor Eyx) \,\lor\, \E z ((Exz \lor Ezx) \land (Ezy \lor Eyz)).
\]
We can then express $d(x,y) > r$ as negation, where we use negation normal form in semiring semantics.
For example,
\[
    d(x,y) > 2 \;\equiv\; x \neq y \,\land\, (\neg Exy \land \neg Eyx) \,\land\, \A z ((\neg Exz \land \neg Ezx) \lor (\neg Ezy \land \neg Eyz)).
\]

We then use these distance formulae to define Gaifman normal forms.
In a local formula $\phi \loc r(x)$, all quantifiers $\E y \, \theta(x,y)$ are relativised to $\E y (d(x,y) \le r \land \theta(x,y))$,
and quantifiers $\A y \, \theta(x,y)$ become $\A y (d(x,y) > r \lor \theta(x,y))$.
Basic local sentences are defined as in \cref{defLocalSentence}, where scattered quantifiers are expressed using distance formulae:
\begin{align*}
    \Esc r {y_1,\dots y_m} \, \theta(\ty) &\coloneqq \E y_1 \dots \E y_m \big( \textstyle\bigland_{i < j} d(y_i,y_j) > 2r \land \theta(\ty) \big), \\
    \Asc r {y_1,\dots y_m} \, \theta(\ty) &\coloneqq \A y_1 \dots \A y_m \big( \textstyle\biglor_{i < j} d(y_i,y_j) \le 2r \lor \theta(\ty) \big).
\end{align*}

\smallskip
\begin{proposition}
\label{thmDistanceCounterexample}
The sentence $\beta \coloneqq \E x \E y (Ux \land \neg Exy)$ does not have a Gaifman normal form with distance formulae (as defined above) over min-max semirings.
\end{proposition}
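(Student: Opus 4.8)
The plan is to derive a contradiction from a hypothetical distance‑formula Gaifman normal form by exploiting the ``side effect'' of the distance formulae: in an interpretation all of whose edges carry a small value $t$, every relativised quantifier inside a local formula multiplies in a factor $\le t$ unless it re‑picks its own anchor point. Concretely, I would fix a min‑max semiring $K$ containing elements $0<t<s<1$ (e.g.\ the four‑element chain), assume towards a contradiction that $\beta$ is minmax‑equivalent to a local sentence $\psi'$ built with distance formulae, let $r$ bound all radii occurring in $\psi'$, and put $\psi'$ in disjunctive normal form $\bigvee_i\bigwedge_j\delta_{ij}$ over basic local sentences. The two families of test interpretations I would use are: a ``long path'' $\pi$ with universe $\{v_0,\dots,v_{10r}\}$, edges $Ev_iv_{i+1}$ (both directions) of value $t$, self‑loops $Ev_iv_i$ of value $1$, every other $E$‑atom $0$, and $\pi(Uv_j)=s$ for all $j$; and a ``companion'' interpretation $\rho$ of small diameter in which $\beta$ evaluates to $0$. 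Since $v_0$ is not $E$‑related to $v_2$, one reads off $\pi\ext\beta=s$ and hence $\pi\ext{\psi'}=s$, while $\rho\ext{\psi'}=\rho\ext\beta=0$.

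The technical core would be a ``locality lemma'': in any $K$‑interpretation whose $E$‑atoms all have value $\le t$, the side effect $\pi\ext{d(v,y)\le r'}$ is $\le t$ whenever $v\neq y$, so an $r'$‑local formula $\phi\loc{r'}(x)$ evaluated at $v$ is a min–max combination of the four ``self‑literals'' $Uv,\neg Uv,Evv,\neg Evv$ together with constants from $[0,t]\cup\{1\}$; since $K$ is a chain, this combination is a median $\operatorname{med}(\ell,\cdot,u)$ with $\ell,u\in[0,t]\cup\{1\}$ applied to the relevant self‑literal value. From this one gets that on $\pi$ every basic local sentence (and hence $\psi'$) takes values only in $\{s,1\}\cup[0,t]$, that the value $s$ can be produced only by literally reading some $Uv_i=s$, and — most importantly — that changing only a $U$‑value at one vertex, or only adding $t$‑valued edges incident to one vertex, changes each local formula's value through the ``same'' median function, so that equalities transfer between related interpretations. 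Using $\pi\ext{\psi'}=s$ I would then isolate a DNF‑clause $\bigwedge_j\delta_{i^\ast j}$ with all $\delta_{i^\ast j}\ext\pi\ge s$ (hence $\in\{s,1\}$), and show that on the companion $\rho$ each surviving $\delta_{i^\ast j}$ is still $\ge s$ — where the delicate case is a $\delta_{i^\ast j}$ that produces $s$ only with a genuinely $2r$‑scattered witness tuple, possible on $\pi$ only because its diameter exceeds $2r$ — forcing $\rho\ext{\psi'}\ge s>0=\rho\ext\beta$, a contradiction. (By \cref{thmGaifmanGeneralizedBasic} one may assume the basic local sentences are possibly asymmetric, which only simplifies the bookkeeping.)

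The main obstacle is exactly the scattered‑quantifier basic local sentences: on high‑diameter interpretations they can \emph{imitate} $\beta$ — for instance both $\exists x_1\exists x_2\big(d(x_1,x_2)>2r\land Ux_1\land\neg Ux_2\big)$ and $\exists x_1\exists x_2\big(d(x_1,x_2)>2r\land Ux_1\land Ux_2\big)$ evaluate to $s$ on suitable long paths — so neither the long path alone nor any single companion structure rules them all out. The proof therefore has to choose the companion interpretation(s) so that they are simultaneously high‑diameter ``in one direction'' (to pin down the non‑scattered parts of $\psi'$ via the side‑effect bound) and arranged so that whatever scattered witnesses $\psi'$ relies on can be re‑used where $\beta$ is small; making this matching precise, and verifying via the locality lemma that the relevant subsentence values genuinely transfer, is the heart of the argument, and I expect it to be considerably more involved than the other counterexamples in \cref{secCounterexamples}.
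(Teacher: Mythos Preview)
Your core observation is right: in any interpretation whose non-trivial edges carry a small value, the distance formula $d(x,y)\le r'$ evaluates to at most that value whenever $x\neq y$, so relativised quantifiers drag the value down. But the obstacle you flag at the end is real and, with your choice of test interpretations, does not go away. On your long path $\pi$ the scattered $\exists$-sentences genuinely can produce the target value $s$; to derive a contradiction you need a companion $\rho$ on which (i) $\rho\ext\beta=0$ yet (ii) those same scattered sentences still evaluate $\ge s$. Condition (i) forces, for every vertex $a$ with $\rho(Ua)\neq 0$, that $a$ is $E$-adjacent to every vertex (including itself). Hence every vertex carrying the value $s$ is a universal hub, collapsing the Gaifman diameter to at most $2$; so for $r\ge 1$ there are no $r$-scattered tuples of size $\ge 2$ in $\rho$, and any $\exists$-scattered basic local sentence with $m\ge 2$ evaluates to $0$ there. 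Your proposed ``high-diameter in one direction'' fix cannot simultaneously satisfy (i) and (ii): either $U$ is trivial on the far-apart vertices (and then $s$ disappears from $\rho$ entirely, so nothing evaluates to $s$), or $\beta$ is nonzero. The transfer step you want therefore fails precisely for the basic local sentences you identified as delicate.

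The paper's proof sidesteps this difficulty by reversing your picture: instead of a high-diameter graph paired with a complete one, it uses two complete graphs $\pi_n,\pi_n'$ (edge values $s$, $U$-values $t$ with $0<s<t$) that differ by a \emph{single} edge $v\to w$. Because both have diameter $1$, scattered quantifiers with $r\ge 1$ and $m\ge 2$ are trivially $0$ or $1$ on both, and basic local sentences with $r=0$ cannot see the missing edge. Only the case $r\ge 1$, $m=1$ remains, and there a short quantifier-stripping argument shows that a local formula evaluating to $0$ on $\pi_n$ must evaluate to $\le s<t$ on $\pi_n'$, while $\beta$ jumps from $0$ to $t$. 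The moral is that the right move is not to exploit large diameter but to make scattered quantifiers harmless from the start.
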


\begin{proof}
\resetlocalclaim
We construct families of almost identical $K$-interpretations $\pi_n$, $\pi'_n$ over the same increasingly large universe $A_n = \{v,w,v_1,\dots,v_n\}$, and then show that every potential Gaifman normal form behaves differently than $\beta$ on $\pi_n$ or $\pi'_n$ for sufficiently large $n$.

Let $K$ be a min-max semiring with at least 3 elements, and let $s,t \in K$ with $0 < s < t \le 1$.
For every $n \ge 1$, we define $\pi_n$ as the complete directed graph on $A_n$ with semiring values $\pi_n(Ux) = t$ and $\pi_n(Exy) = s$ for all $x,y \in A_n$ (and $\pi_n(\neg Ux) = \pi_n(\neg Exy) = 0$ for all $x,y$ so that $\pi_n$ is model-defining).
The second interpretation $\pi'_n$ results from $\pi_n$ by removing the edge $v \to w$, i.e., $\pi'_n(Evw) = 0$ and $\pi'_n(\neg Evw) = 1$ (see \cref{figDistanceCounterexample}).
Then $G(\pi_n) = G(\pi_n')$ and for the distance formulae, we have for $r \ge 1$ and all $a,b \in A$, all $n$ and $\pi \in \{\pi_n,\pi_n'\}$:
\begin{align*}
    \pi \ext {d(a,b) \le r} = \begin{cases*}
        1, &if $a=b$, \\
        s, &if $a \neq b$,
    \end{cases*}
    \quad \text{ and } \quad
    \pi \ext {d(a,b) > r} = 0.
\end{align*}

For the counterexample $\beta$, we have $\pi_n \ext \beta = 0$ and $\pi'_n \ext \beta = t$.
Suppose for contradiction that $\beta$ is equivalent to a positive Boolean combination $\gamma$ of basic local sentences (with distance formulae).
Assuming DNF, it has the form $\gamma = \gamma_1 \lor \dots \lor \gamma_k$ where each $\gamma_i$ is a conjunction of basic local sentences.
For the contradiction, we shall prove that every such sentence with $\pi_n \ext \gamma = 0$ also satisfies $\pi_n' \ext \gamma \le s < t$ for sufficiently large $n$, and can thus not be equivalent to $\beta$.

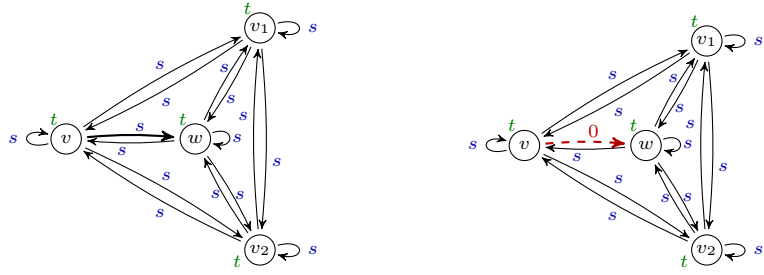
\begin{figure}[t]
\colorlet{csemi}{blue!70!black}
\colorlet{colu}{green!50!black}
\colorlet{cmark}{red!70!black}
\tikzset{vtx/.style={draw=black,circle,inner sep=0pt,minimum size=4mm,font=\scriptsize}}
\centering
\begin{tikzpicture}[scale=0.85]
    \node [vtx] (b) {$w$};
    \node [vtx] (a) at (180:2cm) {$v$};
    \node [vtx] (a1) at (60:2cm) {$v_1$};
    \node [vtx] (a2) at (-60:2cm) {$v_2$};
    \path [->,>=stealth',shorten <=2pt,shorten >=2pt,every node/.style={font=\scriptsize}]
        (a) edge [bend left=5pt,thick] node [above,csemi,yshift=-2pt,pos=0.6] {$s$} (b)
        (b) edge [bend left=5pt] node [below,csemi,yshift=2pt,pos=0.6] {$s$} (a)
        (a) edge [bend left=5pt] node [above,csemi] {$s$} (a1)
        (a1) edge [bend left=5pt] node [below,csemi] {$s$} (a)
        (a) edge [bend left=5pt] node [above,csemi] {$s$} (a2)
        (a2) edge [bend left=5pt] node [below,csemi] {$s$} (a)
        (b) edge [bend left=5pt] node [above,csemi] {$s$} (a1)
        (a1) edge [bend left=5pt] node [below,csemi] {$s$} (b)
        (b) edge [bend left=5pt] node [pos=0.7,above,csemi] {$s$} (a2)
        (a2) edge [bend left=5pt] node [pos=0.7,below,csemi] {$s$} (b)
        (a1) edge [bend left=5pt] node [pos=0.63,right,csemi,inner sep=2pt] {$s$} (a2)
        (a2) edge [bend left=5pt] node [pos=0.63,left,csemi,inner sep=2pt] {$s$} (a1)
        ;
\path [->,>=stealth',shorten <=1pt,shorten >=1pt,every node/.style={font=\scriptsize}]
        (a) edge [loop left] node [left,csemi] {$s$} (a)
        (b) edge [in=-15,out=20,looseness=7,shorten <=1pt,shorten >=1pt] node [right,csemi,inner sep=1pt] {$s$} (b)
        (a1) edge [loop right] node [right,csemi] {$s$} (a1)
        (a2) edge [loop right] node [right,csemi] {$s$} (a2)
        ;
\node [font=\scriptsize,colu,anchor=south east,inner sep=0pt] at (a.120) {$t$};
\node [font=\scriptsize,colu,anchor=south east,inner sep=0pt] at (b.120) {$t$};
\node [font=\scriptsize,colu,anchor=south east,inner sep=0pt] at (a1.120) {$t$};
\node [font=\scriptsize,colu,anchor=north east,inner sep=1pt] at (a2.190) {$t$};
\end{tikzpicture}
\hspace{1.5cm}
\begin{tikzpicture}[scale=0.8]
    \node [vtx] (b) {$w$};
    \node [vtx] (a) at (180:2cm) {$v$};
    \node [vtx] (a1) at (60:2cm) {$v_1$};
    \node [vtx] (a2) at (-60:2cm) {$v_2$};
    \path [->,>=stealth',shorten <=2pt,shorten >=2pt,every node/.style={font=\scriptsize}]
        (a) edge [bend left=5pt,thick,cmark,dashed] node [above,cmark,yshift=-2pt,pos=0.6] {$0$} (b)
        (b) edge [bend left=5pt] node [below,csemi,yshift=2pt,pos=0.6] {$s$} (a)
        (a) edge [bend left=5pt] node [above,csemi] {$s$} (a1)
        (a1) edge [bend left=5pt] node [below,csemi] {$s$} (a)
        (a) edge [bend left=5pt] node [above,csemi] {$s$} (a2)
        (a2) edge [bend left=5pt] node [below,csemi] {$s$} (a)
        (b) edge [bend left=5pt] node [above,csemi] {$s$} (a1)
        (a1) edge [bend left=5pt] node [below,csemi] {$s$} (b)
        (b) edge [bend left=5pt] node [pos=0.7,above,csemi] {$s$} (a2)
        (a2) edge [bend left=5pt] node [pos=0.7,below,csemi] {$s$} (b)
        (a1) edge [bend left=5pt] node [pos=0.63,right,csemi,inner sep=2pt] {$s$} (a2)
        (a2) edge [bend left=5pt] node [pos=0.63,left,csemi,inner sep=2pt] {$s$} (a1)
        ;
\path [->,>=stealth',shorten <=1pt,shorten >=1pt,every node/.style={font=\scriptsize}]
        (a) edge [loop left] node [left,csemi] {$s$} (a)
        (b) edge [in=-15,out=20,looseness=7,shorten <=1pt,shorten >=1pt] node [right,csemi,inner sep=1pt] {$s$} (b)
        (a1) edge [loop right] node [right,csemi] {$s$} (a1)
        (a2) edge [loop right] node [right,csemi] {$s$} (a2)
        ;
\node [font=\scriptsize,colu,anchor=south east,inner sep=0pt] at (a.120) {$t$};
\node [font=\scriptsize,colu,anchor=south east,inner sep=0pt] at (b.120) {$t$};
\node [font=\scriptsize,colu,anchor=south east,inner sep=0pt] at (a1.120) {$t$};
\node [font=\scriptsize,colu,anchor=north east,inner sep=1pt] at (a2.190) {$t$};
\end{tikzpicture}
\caption{Illustration of the $K$-interpretations $\pi_2$ and $\pi'_2$ (in the proof of \cref{thmDistanceCounterexample}).}
\label{figDistanceCounterexample}
\end{figure}

Recall that basic local sentences are of the form $\eta = \Esc r {y_1,\dots,y_m} \bigland_i \phi \loc r(y_i)$ or $\eta' = \Asc r {y_1,\dots,y_m} \biglor_i \phi \loc r(y_i)$.
We classify these sentences as follows.
\begin{bracketise}
    \item $r \ge 1$ and $m \ge 2$: due to the values of the distance formulae in the scattered quantifiers, we have $\pi_n \ext \eta = \pi_n' \ext \eta = 0$ and $\pi_n \ext {\eta'}, \pi_n' \ext {\eta'} \ge s$.
    
    \item $r = 0$: local formulae $\phi \loc {0}(x)$ of radius $0$ can only check atoms $Ux$ or $Exx$, whereas the only difference between $\pi_n$ and $\pi_n'$ is the value of the atom $Evw$. Hence both interpretations assign the same values, i.e., $\pi_n \ext \eta = \pi_n' \ext \eta$ and $\pi_n \ext {\eta'} = \pi_n' \ext {\eta'}$.
    
    \item $r \ge 1$ and $m = 1$ (discussed below).
\end{bracketise}

Fix some $n$ for the moment.
Since we assume that $\beta \keq \gamma_1 \lor \dots \lor \gamma_k$, we have $\pi_n\ext{\gamma_i} = 0$ for all $i$.
Moreover, there must be some $i$ with $\pi_n' \ext {\gamma_i} = t$.
As $\gamma_i$ is a conjunction of basic local sentences, it must contain a basic local sentence $\psi$ such that $\pi_n \ext {\psi} = 0$ and $\pi_n' \ext {\psi} \ge t$.
This sentence can only be of type \enumref{3}.

This holds for each $n$.
Since $\gamma$ is finite, there must thus be a basic local sentence $\psi$ of type \enumref{3} such that $\pi_n \ext {\psi} = 0$ and $\pi_n' \ext {\psi} \ge t$ for infinitely many $n$.
We prove that no such sentence exists.

\begin{localclaim}
\label{claim1}
Let $\psi$ be of type \enumref{3}, i.e., $\psi = \E x \, \phi \loc r (x)$ or $\psi = \A x \, \phi \loc r(x)$ with $r \ge 1$, so
that $\pi_n \ext \psi = 0$ for infinitely many $n$.
Then there is $n_0 \in \N$ such that $\pi_n' \ext {\psi} \le s$ for all $n \ge n_0$.
\end{localclaim}

Intuitively, the only way that $\psi$ can evaluate to $0$ in $\pi_n$, but to a positive value in $\pi_n'$, is by using the literal $\neg Evw$.
But this requires to quantify $w$ local around $v$ (or vice versa), which forces the value of $\psi$ to be $\le s$.
Indeed, a universal quantifier would range also over $\neg Evv$ and thus evaluate to $0$; an existential quantifier would include the distance formula $d(v,w) \le r$ which evaluates to $s$.

The formal proof of this intuition is rather technical.
We first distil $\phi \loc r(x)$ down to a simpler formula still witnessing $\pi_n' \ext \psi \le s$ (\cref{claim2} below), and then use this to prove \cref{claim1}.
We write $\ta = (a,\dots,a)$ for a tuple of matching arity containing only the element $a$.

\begin{localclaim}
\label{claim2}
Let $a \in \{v,w,v_1\}$.
Let $\phi \loc r (x)$ be a local formula with $\pi_n \ext {\phi \loc r(a)} = 0$ for infinitely many $n$.
There is a quantifier-free formula $\phi'(x,\ty,\tz)$, $n_0 \in \N$ and $\tc \subseteq A_{n_0}$ s.t.,
\begin{itemize}
\item at most one of $v,w$ occurs in $a,\tc$,
\item $\pi_n \ext {\phi'(a,\ta,\tc)} = 0$ for all $n \ge n_0$,
\item if $\pi_n' \ext {\phi'(a,\ta,\tc)} \le s$ then also $\pi_n' \ext {\phi \loc r(a)} \le s$, for all $n \ge n_0$.
\end{itemize}
\end{localclaim}

\begin{claimproof}[Proof of \cref{claim2}]
We construct $\phi'$, $n_0$ and $\tc$ from $\phi \loc r(x)$ by eliminating quantifiers.
\begin{enumise}
\item
First obtain $\phi_1 \loc r(x, \ty)$ from $\phi$ by turning all existentially quantified variables into free variables.
That is, we inductively replace all relativised existential quantifiers $\E y (d(x,y) \le r \land \theta(x,y))$ by just $\theta(x,y)$, with $y$ becoming a free variable.

\smallskip
Then $\pi_n \ext {\phi \loc r(a)} = 0$ implies $\pi_n\ext{\phi_1 \loc r(a, \ta)} = 0$.
Further, if $\pi_n'\ext{\phi_1(a,\ta)} \le s$, then $\pi_n' \ext{\phi \loc r(a)} \le s$, for all $n$.
To see this, note that a relativised quantifier $\E y (d(x,y) \le r \land \theta(x,y))$ in $\phi \loc r(a)$ evaluates to a value $\le s$ due to the distance formula, unless we instantiate $y$ by $a$ (then $d(a,a) \le r$ evaluates to $1$).
But if we have $\pi_n'\ext{\phi_1(a,\ta)} \le s$, then also this instantiation cannot lead to larger values, and hence a simple induction gives $\pi_n' \ext {\phi \loc r(a)} \le s$ as claimed.

\item
We next turn all relativised universal quantifiers into (non-relativised) universal quantifiers.
That is, $\A z (d(x,z) > r \lor \theta(x,\ty,z))$ becomes $\A z \, \theta(x,\ty,z)$.
Note that this does not change the value of the formula in $\pi_n'$ and $\pi_n$, as $d(x,z) > r$ always evaluates to $0$.

\smallskip
By applying prenex normal form, we obtain a formula of the form $\A z_1 \dots \A z_m \, \phi_2(x,\ty,\tz)$, where $\phi_2$ is quantifier free.
As our transformations did not affect the evaluation, we can find an $n_0 > m+1$ so that $\pi_{n_0} \ext {\A z_1 \dots \A z_m\,\phi_2(a,\ta,\tz)} = 0$.
Let further $\tc \subseteq A_{n_0}$ be a witnessing instantiation for $\tz$ such that $\pi_{n_0} \ext {\phi_2(a,\ta,\tc)} = 0$.
Since $\phi_2$ is quantifier-free, this implies $\pi_n \ext {\phi_2(a,\ta,\tc)} = 0$ for all $n \ge n_0$.
Notice that we can choose $\tc$ so that at most one of $v,w$ occurs in $\phi_2(a,\ta,\tc)$.
Indeed, $\pi_{n_0}$ is completely symmetric (every permutation of $A_{n_0}$ is an automorphism) and we only fix a single element $a$.
So if $a \neq v$ and the witness contains $v$ (and analogously for $w$ if $a \neq w$), we can replace $v$ by an unused element among $v_1,\dots,v_{n_0}$ without affecting the value $\pi_{n_0} \ext {\phi_2(a,\ta,\tc)}$.

\smallskip
For the condition in $\pi_n'$, assume that $\pi_n' \ext {\phi_2(a,\ta,\tc)} \le s$ for some $n \ge n_0$.
Then also $\pi_n' \ext {\phi_1 \loc r(a,\ta)} = \pi_n' \ext {\A z_1 \dots \A z_m \, \phi_2(a,\ta,\tz)} \le s$, and thus $\pi_n' \ext {\phi \loc r(a)} \le s$ (step 1).
\claimqedhere
\end{enumise}
\end{claimproof}

\medskip
It remains to prove \cref{claim1}.
First consider the case $\psi = \E x \, \phi \loc r(x)$.
Recall that we assume $\pi_n \ext \psi = 0$ for infinitely many all $n$, hence $\pi_n \ext {\phi \loc r(a)} = 0$ for all $a \in \{v,w,v_1\}$.
Obtain $\phi'(x,\ty,\tz)$, $n_0'$ and $\tc$ by \cref{claim2}.
Then $\pi_n \ext {\phi'(a,\ta,\tc)} = 0$ for all $n \ge n_0$ and all $a \in \{v,w,v_1\}$.
Notice that $\phi'(a,\ta,\tc)$ does not contain the literals $Evw$ and $\neg Evw$, since at most one of $v,w$ can appear.
But then each occurring literal is mapped to the same value by $\pi_n'$ and $\pi_n$, so  $\pi_n' \ext {\phi'(a,\ta,\tc)} = 0$ for $n \ge n_0$ as well.
By \cref{claim2}, this implies $\pi_n' \ext {\phi\loc r(a)} \le s$ for all $n \ge n_0$.
Finally, notice that each $\pi_n'$ is symmetric with respect to the nodes $v_1,\dots,v_n$ (every permutation of $v_1,\dots,v_n$ is an automorphism of $\pi_n'$), and thus $\pi_n' \ext {\phi \loc r(v_i)} = \pi_n' \ext {\phi\loc r(v_1)} \le s$ for all $i \le n$.
Altogether, we have shown $\pi_n' \ext {\E x \, \phi\loc r(x)} \le s$ for all $n \ge n_0$.

Now consider $\psi = \A x \, \phi \loc r(x)$.
Since $\pi_n \ext \psi = 0$ for infinitely many $n$, there is an element $a_n \in A_n$ for each $n$ such that $\pi_n \ext {\phi \loc r(a_n)} = 0$.
However, since $\pi_n$ is completely symmetric, we can always choose $a_n = v$, so $\pi_n \ext {\phi \loc r(v)} = 0$ for infinitely many $n$.
By applying \cref{claim2} as above, we get $\pi_n' \ext {\phi \loc r(v)} \le s$ and thus $\pi_n' \ext {\A x \, \phi \loc r(x)} \le s$, for all $n \ge n_0$.
\end{proof}

\end{document}